\newcommand{\LED}{\textsf{LED}\xspace}
\newcommand{\lcp}{\mathsf{lcp}}
\newcommand{\TDED}{$k$-\textsf{Dyck} edit distance\xspace}
\newcommand{\Dyck}{\textsf{Dyck}\xspace}
\newcommand{\ED}{\mathsf{ed_D}}
\newcommand{\eps}{\varepsilon}
\newcommand{\T}{\mathcal{T}}
\newcommand{\D}{\mathcal{D}}
\newcommand{\modulo}{\operatorname{mod}}
\newcommand{\ceil}[1]{\left\lceil{#1}\right\rceil}
\newcommand{\floor}[1]{\left\lfloor{#1}\right\rfloor}
\newcommand{\dd}{\mathinner{\ldotp\ldotp}}
\newcommand{\Oh}{O}
\newcommand{\Ohtilde}{\tilde{\Oh}}
\newcommand{\sm}{\setminus}
\newcommand{\sub}{\subseteq}
\newcommand{\colindex}{\ell}
\newcommand{\tlb}{T^{(\ell)}_b}
\newtheorem{definition}{Definition}[section]
\newtheorem{claim}[definition]{Claim}
\crefname{@theorem}{Theorem}{Theorems}
\crefname{@algorithm}{Algorithm}{Algorithms}
\crefname{fact}{Fact}{Facts}
\crefname{equation}{Eq.}{Eq.}
\newcommand*\samethanks[1][\value{footnote}]{\footnotemark[#1]}
\title{An Improved Algorithm for The $k$-Dyck Edit Distance Problem}
\author{Dvir Fried\thanks{\protect\vphantom{$2^{2^2}$}Department of Computer Science, Bar-Ilan University, Ramat Gan, Israel. Supported in part by ISF grants no.\ 1278/16 and 1926/19, by a BSF grant no.\ 2018364, and by an ERC grant MPM under the EU's Horizon 2020 Research and Innovation Programme (grant no. 683064).} \and Shay Golan\samethanks[1] \thanks{University of California, 
		Berkeley, USA. Partly supported by Fulbright Postdoctoral Fellowship.} \and Tomasz Kociumaka\thanks{University of California, 
Berkeley, USA. Partly supported by NSF 1652303, 1909046, and HDR TRIPODS 1934846 grants, and an Alfred P. Sloan Fellowship.} \and
Tsvi Kopelowitz\samethanks[1] \and Ely Porat\samethanks[1] \and Tatiana Starikovskaya\thanks{DI/ENS, PSL Research University, France. Partially supported by the grant ANR-20-CE48-0001 from the French National Research Agency (ANR).}}
\date{}
\begin{document}

\maketitle

\begin{abstract}\small\baselineskip=9pt\lineskiplimit=-2pt
A \Dyck{} sequence is a sequence of opening and closing parentheses (of various types) that is balanced.
The \Dyck{} edit distance of a given sequence of parentheses $S$ is the smallest number of edit operations (insertions, deletions, and substitutions) needed to transform $S$ into a \Dyck{} sequence.
We consider the threshold \Dyck{} edit distance problem, where the input is a sequence of parentheses $S$ and a positive integer $k$, and the goal is to compute the \Dyck{} edit distance of $S$ only if the distance is at most $k$, and otherwise report that the distance is larger than $k$.
Backurs and Onak [PODS'16] showed that the threshold \Dyck{} edit distance problem can be solved in $O(n+k^{16})$ time.

In this work, we design new algorithms for the threshold \Dyck{} edit distance problem which costs $O(n+k^{4.544184})$ time with high probability or $O(n+k^{4.853059})$ deterministically.
Our algorithms combine several new structural properties of the \Dyck{} edit distance problem,
a refined algorithm for fast $(\min,+)$ matrix product, and a careful modification of ideas used in Valiant's parsing algorithm.
\end{abstract}

\section{Introduction}\label{sec:intro}
The notion of balanced sequences of parentheses is a fundamental concept in the theory of computer science.
Formally, a \emph{$\Dyck$ sequence} is a sequence of opening and closing parentheses (of various types) that is balanced, and $\Dyck(t)$ is the language containing all balanced sequences of parentheses  over $\Sigma=\{(_i,)_i\mid i\in[0\dd t-1]\}$.
The $\Dyck(t)$ language is a context-free language that is generated by the following grammar: $S \to \eps$, where $\eps$ is the empty string, $S \to (_0 S )_0 S$, $S \to (_1 S )_1 S,\ldots,S \to (_{t-1} S )_{t-1} S$.

The $\Dyck$ language has proven to be instrumental in the theory of context-free languages; in particular, Chomsky and Schotzenberger~\cite{CS63} proved that every context-free language can be mapped to a subset of the $\Dyck$ language (see also~\cite{Harrison78,10.5555/549365}).
The $\Dyck$ language also plays a role in practice since balanced parentheses are used to succinctly describe arbitrary rooted trees~\cite{MR01}, and many programming languages have a balanced parentheses-like structure.
Additionally, data files, such as XML files, which store data in a structured way, often utilize a notion of balanced parentheses.
Moreover, the $\Dyck$ language is strongly related to DNA/RNA sequences since these sequences slightly deviate from the balancedness requirement, so understanding the behaviour of the $\Dyck$ language is often an important building block for designing algorithms on such sequences~\cite{GUTELL2000335,10.1007/978-3-642-01551-9_14}.

\subsubsection*{Language edit distance.}
The \emph{language edit distance} (\LED) problem is one of the most fundamental problems in formal language theory, generalizing both parsing and string edit distance problems. Formally, given a language $L$ and a string $S$, the task is to compute the minimal edit distance between $S$ and the strings in $L$.

The \LED{} problem has been extensively studied for context-free languages.
Consider a context-free language generated by a grammar $G$ and a string $S$ of length $n$.
Aho and Peterson~\cite{doi:10.1137/0201022} showed a dynamic-programming  algorithm with runtime $\Oh(|G|^2 n^3)$.
Myers~\cite{MYERS199585} improved the running time of the algorithm to $\Oh(|G| n^3)$.
In a recent breakthrough paper, Bringmann et al.~\cite{BGSW19} bypassed the $n^3$ barrier and demonstrated $\Oh(|G|^{\Oh(1)} n^{2.8244})$-time randomized and $\Oh(|G|^{\Oh(1)} n^{2.8603})$-time deterministic  algorithms.
The algorithms of~\cite{BGSW19} are non-combinatorial, in the sense that they use fast matrix multiplication (FMM).
Unfortunately, this seems to be unavoidable: the lower bound result of Lee~\cite{10.1145/505241.505242} implies that there is no algorithm that solves
\LED{} in time less than that of Boolean matrix multiplication.
Saha studied the problem of approximating \LED{}, developing a $(1+\eps)$-factor approximation algorithm~\cite{7354391}
followed by a solution providing an additive approximation of \LED{}~\cite{8104067}. In another related work, Jayaram and Saha~\cite{jayaram_et_al:LIPIcs:2017:7454} studied \LED{} for the class of linear grammars.

\subsubsection*{\Dyck{} edit distance.}
In this work, we focus on the \LED{} problem for the \Dyck{} language.
Formally, the \Dyck{} edit distance of a given sequence of parentheses $S = S[0]S[1]S[2]\cdots S[n-1]\in \Sigma^n$, denoted by $\ED(S)$, is the smallest number of edit operations (insertions, deletions, and substitutions) needed to transform $S$ into a \Dyck{} sequence.
Notice that it is enough to consider only deletions and substitutions since insertions can be transformed into deletions.

One might hope that the \LED{} problem for the $\Dyck$ language might be easier than for general context-free languages, but as Abboud et al.~\cite{DBLP:conf/focs/AbboudBW15a} showed, this is probably not the case: they proved that
an efficient algorithm for the \Dyck{} edit distance problem implies an algorithm for  $k$-Clique whose runtime is faster than what is believed to be obtainable.
In other words, an efficient algorithm for the \Dyck{} edit distance problem must either provide approximate answers or focus on \textbf{the small distance regime.}
Saha~\cite{6979046} presented a randomized algorithm with running time $\Oh(n^{1+o(1)})$ and a $\Ohtilde(1)$ multiplicative approximation factor. Very recently, Debarati, Kociumaka, and Saha~\cite{ApproxDyckED} showed a constant factor approximation algorithm with runtime $\Ohtilde(n^{1.971})$ and a $(1+\eps)$-approximation algorithm with runtime $\Ohtilde(n^2 / \eps)$.
\emph{In this work, we turn to the small distance regime. }

\subsection{The $k$-Dyck edit distance problem.}
For a string $S \in \Sigma^n$, define $\ED(S)$ to be the \emph{edit cost}  of $S$ which is the smallest number of deletions and substitutions of a parenthesis such that the resulting string belongs to the \Dyck{} language.
In this paper, we consider the \emph{\TDED{} problem}, where the input is a sequence of parentheses $S\in \Sigma^n$ and a positive integer $k$,
and the goal is to compute $\min\{\ED(S),k+1\}$.
In other words, we want to compute the $\Dyck$ edit distance of $S$, as long as the distance is at most $k$ (and return $k+1$ otherwise).
This problem was first considered by Backurs and Onak~\cite{BO16}, who gave an $\Oh(n+k^{16})$-time algorithm. Krebs et al.~\cite{10.1007/978-3-642-22993-0_38} presented a randomized streaming algorithm for the \Dyck{} language with two types of parentheses with running time $\tilde\Oh(nk^{\Oh(1)})$ and space complexity $\Oh(k^{1+\varepsilon} + \sqrt{n \log n})$, for any fixed constant $\varepsilon > 0$. Debarati, Kociumaka, and Saha~\cite{ApproxDyckED} showed a $(3+\eps)$-approximation algorithm for the \emph{\TDED{} problem} with runtime $\Ohtilde(kn/\eps)$.

\subsubsection*{$\Oh(n^3)$-time dynamic programming algorithm.} We begin by describing a simple folklore dynamic-programming algorithm that solves the $\Dyck$ edit distance problem in $\Oh(n^3)$ time. This algorithm forms the basis of previous results, as well as the basis of our new algorithms for the \TDED{} problem.

Consider a sequence $S \in \Sigma^n$ and let $\D$ be a matrix of size   $(n+1) \times (n+1)$ such that $\D[i,j]=\ED(S[i\dd j))$ (where $S[i \dd j)$  is the substring of $S$ from position $i$ to position $j-1$) to the \Dyck{} language. By definition, $\ED(S) = \D[0, n]$. Notice that $\D[i, i] = 0$ for $i\in [0\dd n]$, $\D[i,i+1]=1$ for $i\in [0\dd n)$, and $\D[i,j]$ satisfies the following recursion for $i,j\in [0\dd n]$ with $j-i\ge2$:\footnote{\vphantom{$2^{2^2}$}For $i,j\in \mathbb{Z}$, we denote $[i\dd j]=\{k\in \mathbb{Z} : i \le k \le j\}$, $[i\dd j)=\{k\in \mathbb{Z} : i \le k < j\}$, $(i\dd j]=\{k\in \mathbb{Z} : i < k \le j\}$, and $(i\dd j)=\{k\in \mathbb{Z} : i < k < j\}$.}
\begin{equation}\label{eq:basic_recursion}
	\D[i,j] = \min
	\begin{cases}
		\D[i, m] + \D[m, j] \qquad \text{for } m\in (i\dd j),\\
		\ED(S[i]S[j-1]) + \D[i+1,j-1]. &
	\end{cases}
\end{equation}
One can use \cref{eq:basic_recursion} to compute $\D[0,n]$ in $O(n^3)$ time.

\subsubsection*{\boldmath$\Oh(n+k^{16})$-time algorithm by Backurs and Onak~\cite{BO16}.}
Backurs and Onak~\cite{BO16} solved the \TDED{} problem by first showing that  the problem can be reduced to an instance on a string $S'$ of length $n' \le n$ such that: (1) $\ED(S) = \ED(S')$, and (2) when solving \TDED{} on $S'$, it suffices to compute only $\Oh(k^{14})$ values of $\D$ when applied to $S'$.
Moreover, by modifying the algorithm of  Landau, Myers and  Schmidt~\cite{LMS98}, after $\Oh(n)$-time preprocessing,
each of the $O(k^{14})$ calls to $\D$ is computed recursively in $\Oh(k^2)$ time.

\subsubsection*{Our results.}
We begin by designing an algorithm for \TDED{} which costs $\Oh(n^2k)$ time, and is thus faster than the algorithm of Backurs and Onak~\cite{BO16} whenever $k = \Omega(n^{2/15})$.
We further improve on the algorithm of Backurs and Onak~\cite{BO16} by designing a combinatorial algorithm for \TDED{} which runs in $O(n+k^5)$ time.
Our final contribution, summarized in \cref{thm:main}, improves upon this runtime using FMM.

\begin{theorem}\label{thm:main}
		Given a sequence $S$ of parentheses of length $n$, $\min\{\ED(S), k+1\}$ can be computed in $\Oh(n+k^{4.853059})$ deterministically or in $\Oh(n+k^{4.544184})$ time with high probability.
\end{theorem}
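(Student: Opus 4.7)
The plan is to combine the three ingredients the abstract flags: sharpening the structural reduction behind the Backurs--Onak algorithm so that only a small, geometrically organized set of entries of $\D$ must be computed; processing those entries via a Valiant-style divide-and-conquer; and executing the individual combination steps with a fast bounded-entry $(\min,+)$ matrix product. I would start by preprocessing $S$ to a reduced string $S'$ of length polynomial in $k$, as in~\cite{BO16}, and identifying a collection of ``important'' entries of $\D$ over $S'$ whose indices can be grouped into rectangular blocks ordered along $S'$ and whose values determine $\D[0,|S'|]$. This is where the new structural properties of the \Dyck{} edit distance come in: I would argue that any optimal edit transcript of cost at most $k$ has only $O(k)$ branching events and admits a tree-like skeleton aligning only $O(k)$ distinguished positions of $S'$, so that the ``interesting'' index pairs $(i,j)$ concentrate in a bounded-dimension product structure rather than filling the full $n' \times n'$ grid.

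Next, I would recast the recursion \cref{eq:basic_recursion} over these blocks as a $(\min,+)$ matrix operation. For two adjacent index blocks $I$ and $J$, merging the already computed subtables $\D[I]$ and $\D[J]$ into $\D[I \cup J]$ amounts to one $(\min,+)$ product of two submatrices, corresponding exactly to the first line of \cref{eq:basic_recursion}; the second line (matching $S'[i]$ with $S'[j-1]$) contributes a cheap elementwise update. This is the weighted analogue of the Boolean closure step in Valiant's parser, and the block ordering allows all merges to be arranged in $O(\log |S'|)$ rounds, each performing a constant number of such products per block. Because every value of $\D$ relevant to the \TDED{} problem can be truncated at $k{+}1$, each product is over a matrix with entries in $\{0,1,\ldots,k,+\infty\}$. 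For $d \times d$ bounded-entry matrices this reduces to standard matrix multiplication over a ring of size polynomial in $k$, and plugging in the current best deterministic and randomized bounds on (possibly rectangular) algebraic matrix multiplication yields the ``refined $(\min,+)$ product'' referred to in the abstract. Balancing the block dimension $d = \Theta(k^{\gamma})$ against the weight bound $k$ and summing the cost over the $O(\log k)$ Valiant rounds should produce the two exponents $4.853059$ and $4.544184$ claimed in \cref{thm:main}.

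The main obstacle, in my view, is the first step. Backurs and Onak already isolate $O(k^{14})$ relevant entries, but without the rectangular-block structure; extracting an additional factor of roughly $k^{10}$ requires both shrinking the set of important entries and arranging them so that Valiant's divide-and-conquer actually sees them as bounded-dimension matrices whose merges are single $(\min,+)$ products. Without this compatibility the matrix-product speedup is wasted on entries that are irrelevant to $\D[0,|S'|]$; conversely, an overly aggressive pruning risks missing the entry through which the optimum actually factors. Proving that such a ``well-shaped'' system of important entries exists, and verifying that combining them via $(\min,+)$ products produces the correct optimum, is the technically delicate part. Once it is in place, the Valiant recursion and the subsequent numerical optimization of the block-size parameter should be routine.
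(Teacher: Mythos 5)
Your high-level outline captures the right ingredients (structural reduction, Valiant-style divide-and-conquer, fast $(\min,+)$ products), but two of the steps rest on claims that are either incorrect or substantially weaker than what the paper actually needs, and the exponents would not come out right.

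First, the structural reduction. The paper does \emph{not} shrink $S$ to a string of length polynomial in $k$ — after the Backurs--Onak preprocessing (\cref{claim:number_of_valleys}) the string may still have length $\Theta(n)$; what is bounded is the number of \emph{valleys} (at most $2k$). The crucial new decomposition is into \emph{tall maximal trapezoids} (long monotone runs of matching parentheses, of which there are $\Oh(k)$) and \emph{clusters} (everything else, with $\Oh(k^2)$ positions in total, \cref{lm:size_trapezoids_clusters}). Trapezoids are processed by a Landau--Vishkin-style diagonal sweep (\cref{alg:LV}) in $\Oh(k^2)$ time each, which is what lets the algorithm ``shortcut'' through the long parts of the string without ever materializing a poly$(k)$-length substitute string; the Valiant/matrix-product machinery is only applied \emph{within clusters}, using the midpoint-restriction \cref{lm:mid_point} to keep the inner matrix dimension at $\Oh(k)$. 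Your ``tree-like skeleton of an optimal transcript with $O(k)$ branching events'' is at best a gesture toward this; without the trapezoid/cluster split it is unclear where the $\Oh(k^2)$ budget of relevant positions comes from, and a naive reduction to a poly$(k)$-length string does not exist.

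Second, and more seriously, the $(\min,+)$-product speedup. You propose exploiting only that the truncated entries live in $\{0,\dots,k,\infty\}$, which via the AGM polynomial encoding (\cref{lem:bounded-matrix-multiplication}) would give a $(\min,+)$ product of $k^2\times k$ by $k\times k^2$ matrices in $\Ohtilde\bigl(k\cdot M(k^2,k,k^2)\bigr)$ time, i.e.\ roughly $k^{1+2\omega(1,1/2,1)} \approx k^{5.09}$ — strictly worse than both the $\Oh(k^5)$ combinatorial algorithm of \cref{th:k_to_five} and the claimed $k^{4.544}$. The paper instead relies on the much stronger \emph{bounded-difference} property: the left factor is column-BD, the right factor is row-BD, and (\cref{obs:c-bd}) the output is fully-BD. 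This is exactly the structure that lets the Bringmann et al.\ / Chi et al.\ machinery (approximate the output to within $\Oh(\Delta)$ via block representatives, use random anchors or a greedy derandomization to cover most relevant triples with a small-entry product, and then patch the few uncovered ones with Boolean matrix products) beat the AGM bound. The deterministic/randomized split in \cref{thm:main} does not come from plugging in different FMM bounds; it comes from whether one uses the derandomized Bringmann et al.\ scheme ($\bar\omega(1/2) < 2.426524$) or the randomized Chi et al.\ scheme ($\bar\omega(1/2) < 2.272092$), giving exponents $2\bar\omega(1/2)$. Finally, to apply \cref{thm:rectangular-min-plus} inside the Valiant recursion one must verify the bounded-difference condition for entries $A[i,m]$ and $A[m,j]$ indexed over $E(C)$, which requires a separate argument (in \cref{sec:main}) handling the ``jumps'' across a trapezoid; your proposal does not touch on this compatibility check at all.
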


If $k>\sqrt{n}$, our algorithm is actually faster than claimed in \cref{thm:main}. The precise bound, which never exceeds $\Oh(n^2k)$, is provided in \cref{sec:overview}; it depends on the complexity of rectangular matrix multiplication and involves optimization over auxiliary parameters of the algorithm.

\subsection{Algorithmic Overview and Organization.}\label{sec:overview}

\subsubsection*{Heights and valleys.}
In \cref{sec:comb_algo}, we describe an $\Oh(n^2k)$-time algorithm for the \TDED{} problem which is based on \cref{eq:basic_recursion}. The main idea is to reduce the number of options for $m$ in \cref{eq:basic_recursion} from $\Oh(n)$ to $\Oh(k)$.
To do so, we define the \emph{height} of position $i\in [0\dd n]$ in $S$, denoted by $H(i)$, to be the difference between the number of opening parentheses and the number of closing parentheses in $S[0\dd i)$. (See \cref{fig:heights}; we remark that our notion of height differs slightly from the notion of height in~\cite{BO16}.)
The notion of heights leads to a set of positions $V$, called \emph{valleys} and defined as positions $i$ such that $H(i-1)>H(i)<H(i+1)$.
The result of Backurs and Onak~\cite{BO16} implies that one can convert $S$ into a string $S'$ of length at most $n$ such that $\ED(S) = \ED(S')$ and $S'$ has at most $2k$ valleys. The intuition is that if $\ED(S[i-1]S[i])=0$ for some peak $i$ (i.e., a position such that $H(i-1)<H(i)>H(i+1)$), then the two parentheses can be greedily matched; otherwise, they contribute at least $1$ to $\ED(S)$.
Thus, we assume without loss of generality that $S$ has at most $2k$ valleys.
The main new insight behind our $\Oh(n^2k)$-time algorithm is that one can reduce the set of options for $m$ in \cref{eq:basic_recursion} to the set $(i \dd j)\cap (M\cup\{i+1,i+2,j-2,j-1\})$, where  $M = \bigcup_{v \in V} \{v-1, v, v+1\}$; see \cref{lm:mid_point}. Since $|V|\le 2k$, this reduces the cost of each recursive call to $\Oh(k)$ time.

\begin{center}
\begin{figure}[h]
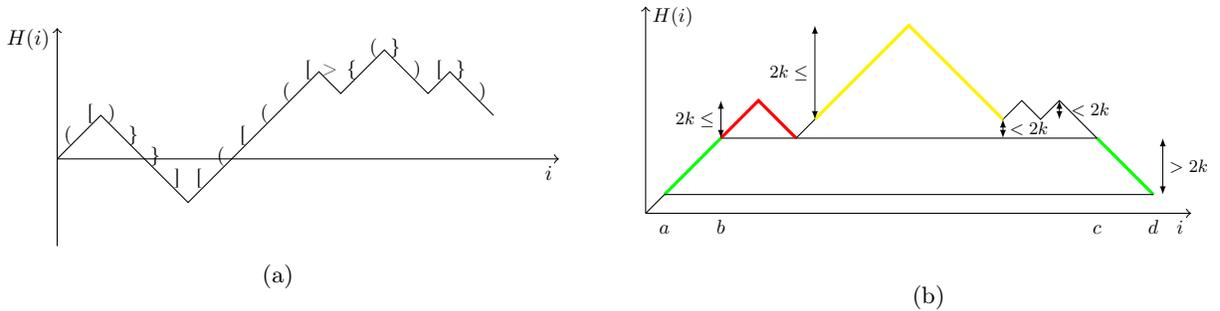
%
\begin{subfigure}{0.5\textwidth}
	\centering
	\ctikzfig{heights}
	\caption{}\label{fig:heights}
\end{subfigure}%
\begin{subfigure}{0.5\textwidth}
	\centering
	\ctikzfig{trapezoids}
	\caption{}\label{fig:trapezoids}
\end{subfigure}%

	\caption{ (a) A plot example of the function $H(\cdot)$. (b) An example of a maximal tall trapezoid $(a,b,c,d)$, which covers the green indices. Notice that the red and yellow indices also form tall maximal trapezoids (a triangle is a degenerated instance of a trapezoid).}

\end{figure}
\end{center}

\subsubsection*{Trapezoids.}
In order to obtain our $\Oh(n+k^5)$ time algorithm, we introduce the notion of \emph{trapezoids} and \emph{clusters}, which are based on the heights of positions.
Intuitively, a trapezoid is a quadruple $(a,b,c,d)$ with $0\le a < b \le c < d \le n$, where $S[a\dd b)$ are all open parentheses, $S[c\dd d)$ are all closing parentheses, $H(a) = H(d)$, $H(b)=H(c)$, and for all $i \in (b\dd c)$ we have  $H(i)\ge H(b)$; see \cref{fig:trapezoids}.

The motivation for defining trapezoids is that, in \cref{sec:alg}, we design an $\Oh(k^2)$-time algorithm for processing a trapezoid that, given values
$\min\{\D[i,j], k+1\}$ for enough $i,j$ in the vicinity of $b$ and $c$,
outputs the values $\min\{\D[i,j], k+1\}$ for enough $i,j$ in the vicinity of $a$ and $d$.
Our algorithm utilizes the approach developed by Landau--Vishkin for computing the threshold edit distance~\cite{LV97,LMS98}.
Thus, the trapezoids together with the trapezoid processing algorithm provide a method for shortcutting the evaluation of \cref{eq:basic_recursion}.
In order to leverage these shortcuts, and since the runtime per trapezoid does not depend on the distance between $d$ and $a$, we choose to apply the trapezoid algorithm to the largest possible trapezoids.
Thus, we define a \emph{maximal trapezoid} to be a trapezoid $(a,b,c,d)$ that cannot be \emph{extended} to another trapezoid of the form either $(a-1,b,c,d+1)$ or $(a,b+1,c-1,d)$, and consider only \emph{tall} maximal trapezoids which are maximal trapezoids where $b-a \ge 2k$ (and so the shortcutting is significant); see \cref{fig:trapezoids}.

\begin{figure}[ht]
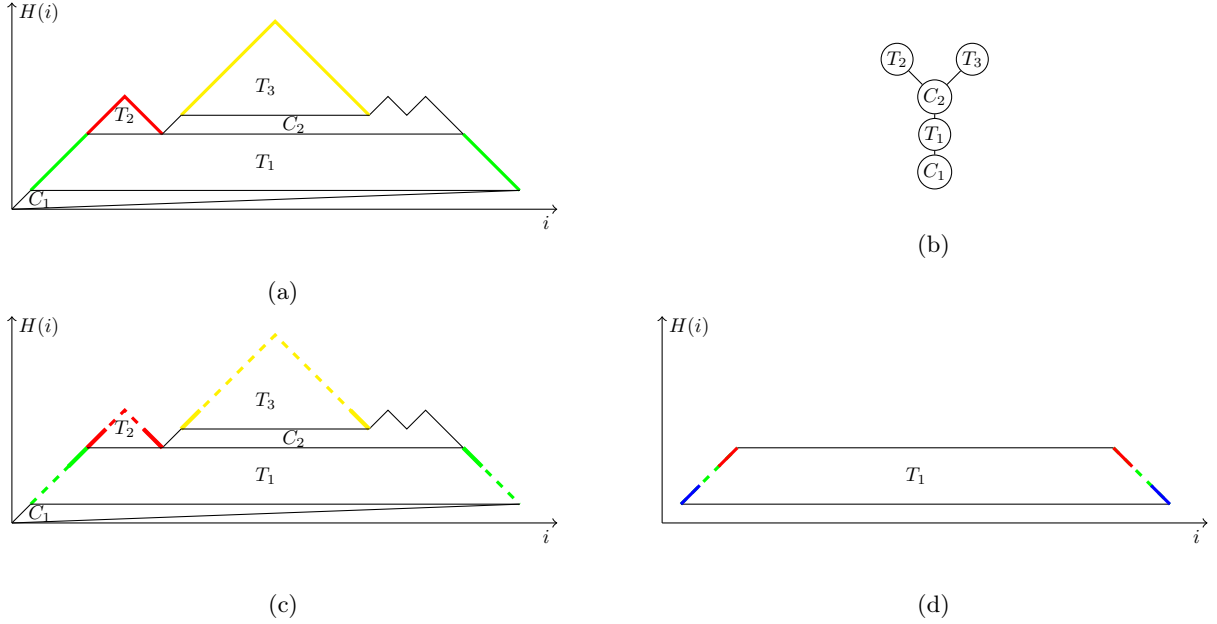
%
\begin{subfigure}{0.5\textwidth}
	\centering
	\ctikzfig{trapezoidsandclusters}
	\caption{}\label{fig:trapezoids-and-clusters}
\end{subfigure}%
\begin{subfigure}{0.5\textwidth}
	\centering
	\ctikzfig{tree}
	\caption{}\label{fig:the-tree}
\end{subfigure}\\
\begin{subfigure}{0.5\textwidth}
		\centering
	\ctikzfig{cluster}
	\caption{}\label{fig:cluster-inputs-output}
\end{subfigure}%
	\begin{subfigure}{0.5\textwidth}
	\centering
	\ctikzfig{trapezoid}
	\caption{}\label{fig:trapezoid-input-output}
\end{subfigure}%

\caption{(a) An example of the partitioning to trapezoids and clusters.
(b) The tree $\T$ for the partitioning of \cref{fig:trapezoids-and-clusters}.
(c) Focus on a cluster $C_2$, the input contains the values $\min\{\D[i,j],k+1\}$ for all pairs of red solid indices and for all pairs of yellow solid indices.
The output of the processing of $C_2$ is the  values $\min\{\D[i,j],k+1\}$ for all pairs of green solid indices.
(d) The values $\min\{\D[i,j],k+1\}$ for pairs of red solid indices are the input for processing a trapezoid $T_1$, and the values $\min\{\D[i,j],k+1\}$ for pairs of blue solid indices are the output of the processing.
}
\end{figure}

\subsubsection*{Clusters.}
The computation of \cref{eq:basic_recursion} is not fully completed by just processing all tall maximal trapezoids. In order to address the remaining computation, we introduce the notion of \emph{clusters}, which are defined as follows.
Consider a cycle with vertices $[0\dd n]$ and edges $\{(i,i+1) : i\in [0\dd n)\}\cup\{(n,0)\}$. For each tall maximal trapezoid $(a,b,c,d)$, delete edges $(i,i+1)$ for $i\in [a\dd b)\cup [c\dd d)$,
delete vertices $(a\dd b)\cup (c\dd d)$, and add edges~$(a,d)$ and $(b,c)$. The connected components of the resulting graph are called \emph{clusters}; see~\cref{fig:trapezoids-and-clusters}.

From the way that clusters are constructed, the clusters and tall maximal trapezoids define a parent-child relationship as follows. 	
Let $T=(a,b,c,d)$ be a tall maximal trapezoid and let $C$ be a cluster.
We say that $C$ is the parent of $T$ if $a,d\in C$,
and that $T$ is the parent of $C$ if $b,c\in C$.
This relationship defines a tree $\T$ whose root is the cluster containing both $0$ and $n$.
Notice that every trapezoid in $\T$ has a cluster parent and exactly 1 cluster child. Moreover, every cluster in $\T$ has an arbitrary number of trapezoid children, and at most 1 trapezoid parent (the root of $\T$ is a cluster and has no parent); see~\cref{fig:the-tree}.

By combining ideas based on the notions of heights and valleys, we design an algorithm for processing a cluster $C$ with $r$ children $T_{q} = (a_{q},b_{q},c_{q},d_{q})$ for $q\in [1\dd r]$ and a parent $T = (a,b,c,d)$, which given the values $\min\{\D[i,j], k+1\}$ for $i,j$ in the vicinity of $a_{q}$ and $d_q$ for all $q\in [1\dd r]$ (which are the outputs of the executions of the trapezoid processing algorithm on $T_q$ for $q\in [1\dd r]$), computes
the values $\min\{\D[i,j], k+1\}$ for $i,j$ in the vicinity of $b$ and $c$ (which can later be used as input for processing $T$).
Moreover, if $C$ is the root of $\T$, then the cluster processing algorithm computes $\D[0,n]$.
Similar to the $\Oh(n^2k)$-time algorithm, the intuition behind the cluster processing algorithm is to reduce the options for $m$ when applying \cref{eq:basic_recursion} to the vicinity of the valleys contained in the cluster; see~\cref{fig:cluster-inputs-output,fig:trapezoid-input-output}.

\subsubsection*{Combining trapezoids and clusters.}
The tree $\T$ provides a natural processing order, starting from the leaves (which could be either clusters or trapezoids) towards the root, in order to compute $\D[0,n]$ at the root.
Whenever the processing order reaches a trapezoid, the input to the algorithm for processing the trapezoid is either trivial when the trapezoid is a leaf, or given from the cluster child of the trapezoid.
Whenever the processing order reaches a cluster, the input to the algorithm for processing the cluster is either the output of the trapezoid processing algorithm when applied to the child trapezoid of the cluster, or trivial when the cluster is a leaf.
Finally, the output  $\D[0,n]$ is computed when processing the cluster that is the root of $\T$.

In \cref{sec:alg}, we show that the total runtime of the cluster processing algorithm on all clusters is $\Oh(k^5)$.
We also show that the number of tall maximal trapezoids is $\Oh(k)$ and so the total time spent processing trapezoids is $\Oh(k^3)$. In addition, constructing the tall maximal trapezoids, the clusters, and the tree $\T$, together with some additional preprocessing,  costs $\Oh(n)$ time, for a total of $\Oh(n+k^5)$ time.

\subsubsection*{Min-plus products, fast matrix multiplication, and Valiant's parser.}
Bringmann et al.~\cite{BGSW19} improved the running time of the dynamic programming algorithm based on \cref{eq:basic_recursion}
from $\Oh(n^3)$ to $\Oh(n^{2.8603})$ (deterministically) and $\Oh(n^{2.8244})$ (using randomization).
Their approach is based on observations that $\D$ satisfies a \emph{column bounded difference (column-BD) property}\footnote{\vphantom{$2^{2^2}$}The notion of bounded difference defined in~\cite{BGSW19} is parametrized by a natural number $W$.
In our setting $W=1$, so we only focus on this case.} $|\D[i,j]-\D[i+1,j]|\le 1$ and a \emph{row bounded difference (row-BD) property} $|\D[i,j-1]-\D[i,j]|\le 1$
all $i,j\in [0\dd n]$ with $i>j$.
Moreover,  the bottleneck of computing $\D[i,j]$ is
the $(\min,+)$ product of vectors $(\D[i,m])_{m=i+1}^{j-1}$ and $(\D[m,j])_{m=i+1}^{j-1}$.
If a matrix is both column-BD and row-BD, we say that the matrix is \emph{fully-BD}.
Their main technical contribution for the $k$-Dyck edit distance problem is an efficient algorithm for computing the $(\min,+)$ product\footnote{For two matrices $A$ and $B$ of size $a\times b$ and $b\times c$, respectively, the $(\min,+)$-product of $A$ and $B$ is a matrix $C=A\star B$ of size $a\times c$ such that $C_{i,j}=\min_{\colindex\in [1 \dd b]}\{A_{i,\colindex}+B_{\colindex,j}\}$.
} of  fully-BD square matrices. Building upon Valiant's parser, filling the entire dynamic-programming table $\D$ is then reduced to
computing the $(\min,+)$ products of the already constructed square submatrices of $\D$.
It is worth mentioning that Bringmann et al.~\cite{BGSW19} introduce a subcubic $(\min,+)$ algorithm also for the case where only one of the matrices is either column-BD or row-BD.\@
A subcubic $(\min,+)$ algorithm for even less structured matrices was later introduced by Vassilevska-Williams and Xu~\cite{VWX20}.
	Recently, Chi et al.~\cite{CDX22} improved upon the result of Bringmann et al.~\cite{BGSW19} for bounded difference $(\min,+)$ product, and, in follow up work, Chi et al.~\cite{CDXZ22} introduced even more efficient algorithms that apply to a broader range of structured matrices. 
All of these recent improvements are randomized. 

In the dynamic programming based on limiting the options for $m$, which results in the $\Oh(n^2k)$ time algorithm, the bottleneck is determining the $(\min,+)$ product of vectors $(\D[i,m])_{m\in (i\dd j)\cap M}$ and $(\D[m,j])_{m\in (i\dd j)\cap M}$.
Thus, instead of multiplying square matrices, it suffices to multiply smaller rectangular matrices (with columns of the first matrix and rows of the second matrix restricted to $M$).
While this shrinks the two matrices, only the column-BD property is preserved for the left matrix in the $(\min,+)$ product, while only the row-BD property is preserved for the right matrix.
Using the $(\min,+)$ product algorithm of Bringmann et al.~\cite{BGSW19} for the more general case, where only one of the matrices is either column-BD or row-BD would lead to inferior running times.

Nevertheless, in \cref{sec:min-plus-rect}, we observe that the output matrix of the $(\min,+)$ product in our case is fully-BD and, accordingly, design a $(\min,+)$ product algorithm specialized for our case, which may be of independent interest.
Formally, we prove the following theorem in \cref{sec:min-plus-rect}.

\begin{restatable}{theorem}{rectmpp}\label{thm:rectangular-min-plus}
	Let $A$ and $B$ be  integer matrices of sizes $n\times n^\alpha$  and $n^\alpha\times n$, respectively, such that $A$ is column-BD and $B$ is row-BD.\@
	For any $\beta,s,\delta\in (0,1]$, there exists a randomized algorithm that computes the $(\min,+)$ product $A\star B$ (correctly with high probability) in time
	\[\Ohtilde\left(n^{\beta} M(n,n^\alpha,n)+n^{2.5-\beta}\right)\]
	and a deterministic algorithm that runs in time
\[\Ohtilde\left(n^{\delta+s} M(n,n^{\alpha-s/3},n)+n^{\alpha}\cdot M(n^{1-\delta},n^s,n^{1-\delta})+n^{2+\alpha-s/3} + n^{\alpha}\cdot M(n^{1-\delta},n^{1-\delta},n^{1-\delta})\right).\]
Here, $M(a,b,c)$ denotes the time needed for computing the Boolean product of an $a\times b$ matrix with a $b\times c$ matrix.
\end{restatable}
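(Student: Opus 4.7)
The starting point is to establish that the product $C:=A\star B$ inherits the bounded-difference property from \emph{both} factors simultaneously, even though only one-sided BD is assumed on each. For any fixed $m$, column-BD of $A$ gives $A[i{+}1,m]\le A[i,m]+1$, hence $C[i{+}1,j]\le C[i,j]+1$; the reverse inequality is symmetric, so $C$ is column-BD. The analogous argument using row-BD of $B$ makes $C$ row-BD as well, and hence fully-BD, so that entries of $C$ within any axis-aligned $\ell\times\ell$ window differ by at most $2\ell$. This is the structural handle that makes a subcubic algorithm possible.

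My plan is then a reduction from the $(\min,+)$ product to Boolean matrix multiplication driven by the BD output structure. I would first produce a cheap approximation $\widetilde C$ with $\|\widetilde C-C\|_\infty\le\Delta$; for each offset $\delta\in\{-\Delta,\ldots,\Delta\}$ I compute the Boolean matrix $X_\delta[i,j]=\bigvee_{m}\bigl[A[i,m]+B[m,j]\le\widetilde C[i,j]+\delta\bigr]$ and recover $C[i,j]$ as the smallest $\delta$ for which $X_\delta[i,j]$ turns true. To turn each $X_\delta$ into a single $M(n,n^\alpha,n)$ Boolean product despite the $(i,j)$-dependent threshold, I arrange for $\widetilde C$ to have the form $f(i)+g(j)$ up to a piecewise constant on a coarse grid of blocks, so that the threshold can be absorbed into row shifts of $A$ and column shifts of $B$; choosing $\Delta=\Ohtilde(n^\beta)$ then gives the $n^\beta\cdot M(n,n^\alpha,n)$ term.

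Producing $\widetilde C$ is where the two variants diverge. For the randomized algorithm, I would randomly sample anchor rows/columns of $C$ and compute the restricted $(\min,+)$ product directly by brute force; a Chernoff argument shows that random anchors placed with expected spacing $n^\beta$ lie within distance $\Ohtilde(n^\beta)$ of every entry w.h.p., and balancing the brute-force cost against the Boolean-MM cost yields the $n^{2.5-\beta}$ term. For the deterministic algorithm, I would replace the random anchors by a derandomized combinatorial construction: the inner dimension is coarsened from $n^\alpha$ down to $n^{\alpha-s/3}$ by bucketing entries of $A$ and $B$ at resolution $n^{s/3}$, candidate witnesses are pre-screened by Boolean MMs on $n^{1-\delta}\times n^{1-\delta}$ blocks at scale $n^s$, and the residual brute force on the coarsened inner dimension plus a final block-level Boolean MM on $n^{1-\delta}$ subproblems account for the remaining two summands in the bound.

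The step I expect to be the main obstacle is reconciling two competing requirements on $\widetilde C$: it must be computable within the claimed budget, yet structured enough (separable plus piecewise-constant on a coarse grid) that each correction Boolean MM really is a single $M(n,n^\alpha,n)$ call rather than a collection of many smaller ones. For the deterministic variant, the additional challenge is calibrating the coarsening parameter $s$ and block parameter $\delta$ so that the value-bucketing error stays below the BD-induced propagation error $\Delta$; this calibration, together with the resulting nested use of Boolean MM at multiple scales, is what forces the rather intricate four-term running time.
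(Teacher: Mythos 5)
Your opening observation---that $C = A\star B$ is fully-BD even though each factor has only one-sided bounded difference---matches the paper's structural lemma, and producing an additive-$O(\Delta)$ approximation $\widetilde C$ by brute-forcing the block-corner representatives is exactly what the paper does in its first phase. After that point, however, your proposed reduction has a genuine gap.

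You claim that, once $\widetilde C$ is arranged to be (blockwise) separable, $\widetilde C[i,j]=f(i)+g(j)$, the predicate $X_\delta[i,j]=\bigvee_m\bigl[A[i,m]+B[m,j]\le\widetilde C[i,j]+\delta\bigr]$ can be evaluated by a single Boolean MM by absorbing $f,g$ into row/column shifts of $A$ and $B$. But even after writing $A'[i,m]=A[i,m]-f(i)$ and $B'[m,j]=B[m,j]-g(j)$, the condition $A'[i,m]+B'[m,j]\le\delta$ is not of the form $P[i,m]\wedge Q[m,j]$; it is itself a $(\min,+)$-type test. The polynomial-encoding trick that does reduce such a test to Boolean MM (the AGM-style lemma the paper cites) requires that the \emph{individual} entries $A'[i,m],B'[m,j]$ be bounded by $O(\Delta)$, not merely that the minimum $\min_m(A'[i,m]+B'[m,j])$ be near $0$. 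Row and column shifts do not make the individual entries small, so the reduction to $n^\beta\cdot M(n,n^\alpha,n)$ that you claim does not follow. The paper resolves exactly this: each of $\rho$ rounds picks a random pair $(i^r,j^r)$, forms $A^r[i,\ell]=A[i,\ell]+B[\ell,j^r]-\widetilde C[i,j^r]$ and $B^r[\ell,j]=B[\ell,j]-B[\ell,j^r]+\widetilde C[i^r,j^r]-\widetilde C[i^r,j]$, clips entries of magnitude greater than $48\Delta$ to $\infty$, and computes the clipped product cheaply. The clipping loses information: only triples $(i,\ell,j)$ for which \emph{both} shifted entries stay small in some round are ``covered''. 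A sampling lemma shows most relevant triples are covered after $\rho$ rounds, and a third phase identifies the remaining ``interesting'' representative triples via $n^\alpha$ Boolean products of shape $(n/\Delta)\times\rho$ by $\rho\times(n/\Delta)$ and corrects them by brute force. This covered/uncovered machinery, and the third phase, are entirely absent from your plan.

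Your reading of the four-term deterministic bound is also off: the factor $n^{\alpha-s/3}$ in the first term comes from \emph{filtering away} columns/rows that correspond to already-covered triples (a probabilistic, and then greedily derandomized, count of survivors per round), not from bucketing the values of $A,B$ at resolution $n^{s/3}$; the $n^\alpha M(n^{1-\delta},n^{1-\delta},n^{1-\delta})$ term is the cost of the greedy derandomized choice of $(i^r,j^r)$, not a ``final block-level Boolean MM.'' Finally, the randomized bound $\Ohtilde(n^\beta M(n,n^\alpha,n)+n^{2.5-\beta})$ in the paper comes from a different algorithm altogether: an adaptation of Chi et al.'s scheme with a random prime $p=\Theta(n^\beta)$, a segment-tree computation of $\lfloor C/p\rfloor$, and an iterated polynomial-matrix-product computation of $C\bmod p$. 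Your randomized variant (random anchors plus Chernoff) does not describe that algorithm and, since deterministic anchors on a $\Delta$-grid already give the approximation guarantee, the randomness you propose is not where the algorithm actually needs it.
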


Henceforth, we assume that the running time of \cref{thm:rectangular-min-plus}, after optimizing over $\beta,s$ and $\delta$, is $\Ohtilde(n^{\bar\omega(\alpha)})$ (depending on context, this can either refer to the deterministic or the randomized runtime).
In \cref{sec:min-plus-rect}, we estimate  $\bar{\omega}(\frac12)$ (using the bounds of~\cite{LGU18,AVW21}),
deriving $\bar{\omega}(\frac12)<2.426524$ in the deterministic case and $\bar{\omega}(\frac12)<2.272092$ if randomization is allowed (for this case, a similar result was obtained independently and in parallel by~\cite{https://doi.org/10.48550/arxiv.2208.02862}). 

We emphasize that the techniques used to prove \cref{thm:rectangular-min-plus} mostly follow the paradigm and techniques of Bringmann et al.~\cite{BGSW19} and Chi et al.~\cite{CDX22}, with some adjustments to the details so that their paradigm fits our case. For sake of completeness, a full exposition is given in \cref{sec:min-plus-rect}.

\subsubsection*{Proving \cref{thm:main}.}
In \cref{sec:recursion}, we define a generic Valiant-like recursion (whose structure resembles that of the dynamic-programming algorithm based on limiting the options for $m$) and show that it can be simulated in time proportional (up to polylogarithmic factors) to a single $(\min,+)$ product.
As a straightforward corollary, our $\Oh(n^2k)$-time dynamic-programming algorithm
can be sped up to $\Ohtilde(n^{\bar{\omega}(\log_n k)})$ time.
In particular, this proves \cref{thm:main} for $k\ge \sqrt{n}$ (if $k\gg \sqrt{n}$, the runtime is actually better than claimed in \cref{thm:main}). In \cref{sec:main}, we apply the result of \cref{sec:recursion} to convert our $\Oh(n+k^5)$-time algorithm into an $\Ohtilde(n+k^{2\bar{\omega}(1/2)})$-time solution, thereby completing the proof of \cref{thm:main} for the case of $k<\sqrt{n}$.

\section{Combinatorial observations and $\Oh(n^2k)$-time algorithm}\label{sec:comb_algo}
In this section, we introduce a new combinatorial property (Lemma~\ref{lm:mid_point}) that allows the dynamic programming algorithm to spend only $O(k)$ time to compute each value of the matrix $\D$, and hence achieve $\Oh(n^2 k)$ runtime, which is an improvement over the runtime of Backurs and Onak's algorithm~\cite{BO16} for all $k = \Omega(n^{2/13})$.

\begin{definition}[Heights]\label{def:height}
	Define the function $h: \Sigma\to\{-1,1\}$ so that $h(a)=1$ if $a \in \Sigma$ is an opening parenthesis and $h(a)=-1$ otherwise. Given a sequence $S \in \Sigma^n$, define the height of a position $i$ of $S$, $0 \le i \le n$, as $H(i)=\sum_{j=0}^{i-1} h(S[j])$.
\end{definition}
Notice that $H(i)$ is the difference between the number of opening parentheses and the number of closing parentheses in $S[0\dd i)$.

\begin{definition}[Peaks and valleys]
Let $S\in\Sigma^n$. We say that a position $i\in [1\dd n)$ is a \emph{peak} if $H(i-1)<H(i)>H(i+1)$
and a \emph{valley} if $H(i-1)>H(i)<H(i+1)$.
\end{definition}

The following claim allows us to assume, without loss of generality, that there are at most $2k$ valleys.

\begin{claim}[{Corollary of~\cite[Claim 35]{BO16}}]\label{claim:number_of_valleys}
Let $S \in \Sigma^n$. There exists an algorithm that preprocesses $S$ in $\Oh(n)$ time, and either rejects $S$ (meaning that $\ED(S) > k$), or outputs a string $S'$ of length at most $n$ such that $\ED(S) = \ED(S')$ and $S'$ has at most $2k$ valleys.
\end{claim}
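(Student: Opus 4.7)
The plan is to reduce $S$ to a shorter string $S'$ by repeatedly matching adjacent opening-closing pairs at peaks: whenever $i$ is a peak with $S[i-1]S[i]=(_t)_t$, delete this pair and iterate on the shorter string. A single left-to-right stack scan implements the full iteration: push every opening parenthesis and, upon reading a closing $)_t$, pop if the top of the stack is $(_t$ and otherwise push. The stack contents at termination, read bottom-up, form $S'$, and the scan runs in $\Oh(n)$ time with $|S'|\le n$.

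To see $\ED(S)=\ED(S')$, I would use a standard exchange argument for each matching step. Suppose $(i-1,i)$ is a greedy-matchable peak of $S$ and consider any optimal edit script $\sigma$. Keeping both $S[i-1]$ and $S[i]$ but matching each with a distant partner is impossible, since the two match pairs would cross, violating the non-crossing structure of Dyck matchings. If $\sigma$ instead edits one of $S[i-1], S[i]$ but keeps the other matched to some $S[j]$, we can re-route: delete $S[j]$ and match $S[i-1]$ to $S[i]$ directly, preserving cost and feasibility (the substring between $S[j]$ and $S[i]$ remained balanced under $\sigma$, and removing the outer $S[j]$ leaves an inner substring that is still balanced, now enclosed by the new match $(i-1,i)$). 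Thus we may assume $\sigma$ matches $S[i-1]$ with $S[i]$, and deleting this pair gives an edit script of the same cost for the shorter string. Iterating yields $\ED(S)=\ED(S')$; this is essentially~\cite[Claim~35]{BO16}, which we invoke directly.

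Next, $S'$ has few peaks. By construction, every peak $j$ of $S'$ satisfies $S'[j-1]=(_s$, $S'[j]=)_t$ with $s\ne t$ (otherwise the stack scan would have matched them). In any Dyck-ification of $S'$, at least one of $S'[j-1], S'[j]$ must be edited: if both were kept, the Dyck match of $(_s$ would lie at some position $>j$ while the match of $)_t$ would lie at some position $<j-1$, producing crossing pairs. Moreover, two peaks cannot be adjacent positions (the inequalities $H(j)<H(j+1)$ and $H(j)>H(j+1)$ would contradict), so the position pairs $\{j-1,j\}$ of distinct peaks are disjoint and the forced edits charged to different peaks are on distinct characters. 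Hence the number of peaks of $S'$ is at most $\ED(S')=\ED(S)$.

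Finally, peaks and valleys of a height profile strictly alternate, so the number of valleys of $S'$ exceeds its peaks by at most one. After a final $\Oh(n)$ scan that counts the peaks of $S'$, we reject if the count exceeds $k$ (which is sound since then $\ED(S)>k$), and otherwise output $S'$, which has at most $k+1\le 2k$ valleys (for $k\ge 1$; the case $k=0$ is degenerate, as $S'=\eps$ whenever the algorithm does not reject). The only substantive subtlety is the exchange argument of the second paragraph, which is already worked out in~\cite[Claim~35]{BO16} and can be cited as a black box.
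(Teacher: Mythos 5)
Your proof is correct for $k\ge 1$ and matches the approach the paper gestures at (and delegates to Backurs--Onak): greedy cancellation of adjacent same-type pairs via a stack scan, the observation that every surviving peak $(_s)_t$ must have $s\ne t$ (once a closing parenthesis is pushed it is never popped, so the element directly below it in the final stack is exactly the element that was on top when it was pushed) and therefore forces at least one edit, and the fact that peaks and valleys alternate so $\#\text{valleys}\le \#\text{peaks}+1\le k+1\le 2k$. The one slip is the parenthetical assertion that for $k=0$ the surviving string must be empty whenever the algorithm does not reject; that is false --- e.g., $S={)_0}{(_0}$ survives the scan unchanged, with zero peaks and one valley, yet $\ED(S)=2$ --- but it is moot since the paper fixes $k$ to be a positive integer, so $k+1\le 2k$ always holds in the regime of interest.
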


From now on, we assume that the input string $S$ has at most $2k$ valleys, and we denote the set of valleys by~$V$. This assumption leads us to refine \cref{eq:basic_recursion} by observing that instead of considering all possible values of $m$ for  $i<m<j$, it is enough to consider only values of  $m$  that are at distance at most 1 from some valley or at most 2 from $i$ or $j$.

\begin{lemma}\label{lm:mid_point}
	Let $M = \bigcup_{v \in V} \{v-1, v, v+1\}$ for $S\in \Sigma^n$. Let $i,j\in [0\dd n]$ with $j-i\ge 2$. Then
	\begin{equation}\label{eq:faster_recursion}
	\D[i,j] = \min
	\begin{cases}
	\D[i, m] + \D[m, j] \qquad \text{for } m\in (i\dd j)\cap (M\cup\{i+1,i+2,j-2,j-1\}),\\
	\ED(S[i]S[j-1]) + \D[i+1,j-1].
	\end{cases}
	\end{equation}
\end{lemma}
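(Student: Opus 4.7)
To prove the lemma, I would show that for any $m^*$ achieving the minimum of $f(m) \coloneqq \D[i,m] + \D[m, j]$ over $m \in (i\dd j)$, there exists some $m' \in (i\dd j) \cap (M \cup \{i+1, i+2, j-2, j-1\})$ with $f(m') \le f(m^*)$, or else the match case $\ED(S[i]S[j-1]) + \D[i+1, j-1]$ already equals $\D[i,j]$; iterating a shift argument by $\pm 1$ then reaches the restricted set without increasing~$f$.

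The preliminary step is to establish two directional bounded-difference claims by exchange arguments on optimal edit scripts: (i) if $S[m-1]$ is an opening parenthesis, then $\D[i, m-1] \le \D[i, m]$, because in any optimal solution for $S[i \dd m)$ the trailing opening must either be deleted or be substituted-and-matched with an earlier opening, and in either case the modification can be undone (or replaced by a deletion of the partner) to produce an equally good solution for $S[i \dd m-1)$; symmetrically, (ii) if $S[m]$ is a closing parenthesis, then $\D[m+1, j] \le \D[m, j]$.

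Now assume $m^* \notin M \cup \{i+1, i+2, j-2, j-1\}$, so $m^* \in [i+3 \dd j-3]$ and each of $m^*-1, m^*, m^*+1$ is a non-valley. Since consecutive heights differ by exactly $\pm 1$, the local profile of $H$ near $m^*$ must be either ascending ($S[m^*-2], S[m^*-1], S[m^*]$ all opening), descending ($S[m^*-1], S[m^*], S[m^*+1]$ all closing), or a peak ($S[m^*-2], S[m^*-1]$ opening and $S[m^*], S[m^*+1]$ closing). In the ascending (resp.\ descending) case, iterated application of claim~(i) (resp.~(ii)) lets us shift $m^*$ leftward (resp.\ rightward), and the shift cannot increase $f$ when coupled with the bounded-difference bound on the other side. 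For the peak case, the natural move is a pairing argument: matching $S[m^*-1]$ with $S[m^*]$ costs $\ED(S[m^*-1]S[m^*]) \in \{0,1\}$, giving $\D[i, m^*+1] \le \D[i, m^*-1] + \ED(S[m^*-1]S[m^*])$ and symmetrically $\D[m^*-1, j] \le \ED(S[m^*-1]S[m^*]) + \D[m^*+1, j]$, which combined with claims~(i) and~(ii) yields $f(m^* \pm 1) \le f(m^*) + \ED(S[m^*-1]S[m^*])$.

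The main obstacle will be the peak subcase when $S[m^*-1]$ and $S[m^*]$ are of different parenthesis types (so $\ED(S[m^*-1]S[m^*]) = 1$), where the single-step pairing argument only gives $f(m^* \pm 1) \le f(m^*) + 1$. I expect to overcome this by shifting by two positions using a nested pairing: the outer match $(S[m^*-2], S[m^*+1])$ together with the inner match $(S[m^*-1], S[m^*])$, which explains the presence of $i+2$ and $j-2$ in the restricted set (they serve as stopping points for a shift-by-2 argument that need not be reachable via single-step shifts). Alternatively, in this subcase one can inspect the equality conditions in claims~(i) and~(ii) to sharpen the bound and force at least one of $f(m^*-1), f(m^*+1)$ to satisfy $\le f(m^*)$ directly.
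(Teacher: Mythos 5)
Your two preliminary claims (i) and (ii) are correct: the exchange argument you sketch (delete the trailing opening if it was deleted anyway; otherwise delete the preimage of its match and drop the substitution) does yield $\D[i,m-1]\le \D[i,m]$ when $S[m-1]$ is an opening parenthesis, and symmetrically for (ii). However, the next step, where you claim that in the ascending case "the shift cannot increase $f$ when coupled with the bounded-difference bound on the other side," does not follow and is in fact false. Claim (i) and the bounded-difference bound combine only to give $f(m^*-1)\le f(m^*)+1$, not $f(m^*-1)\le f(m^*)$. A concrete counterexample: take $S=(_0\,)_0\,(_0\,(_1\,(_0\,)_0\,(_0\,)_0$ with $i=0$, $j=8$, $m^*=4$. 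Here $S[2],S[3],S[4]$ are all opening (ascending case), $\D[0,4]=1$, $\D[4,8]=0$, $\D[0,8]=1$, so $m^*=4$ is an optimal split point; moreover $V=\{2,6\}$, $M=\{1,2,3,5,6,7\}$, and $4\notin M\cup\{1,2,6,7\}$. Yet $\D[0,3]=1$ and $\D[3,8]=1$, so $f(3)=2>1=f(4)$: the leftward shift strictly increases $f$. So the core mechanism of your proof breaks already in the monotone case, before you even reach the peak case, which you yourself acknowledge as incomplete.

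The paper's proof avoids this pitfall by choosing the optimal split point $p$ not arbitrarily but as the one that \emph{minimizes $H(p)$} among all optimal split points, and then deriving a contradiction: assuming $p$ is outside the restricted set, one of $H(p\pm2)$ equals $H(p)-2$ (WLOG $S[p],S[p+1]$ both closing), and a finer unfolding of the DP recursion then produces another optimal split at $p+1$ or $p+2$ with strictly smaller height. Crucially, this unfolding does not rely on coarse bounded-difference bounds: it introduces the minimal $q$ with $\D[p,j]=\D[p,q]+\D[q,j]$ and, when needed, the minimal $r$ with $\D[p+1,q-1]=\D[p+1,r]+\D[r,q-1]$, and then uses explicit comparisons of $\ED(\cdot)$ terms (exploiting that $S[p],S[p+1]$ are both closing) to show that $f(p+1)\le f(p)$ or $f(p+2)\le f(p)$ exactly, not merely up to an additive $1$. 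That extremal choice plus the unfolding through the match option of the recursion are the ingredients your sketch is missing; as written, the proposal has a genuine gap.
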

\newcommand{\mst}{p}
\begin{proof}
	Notice that
	if $\D[i,j] = \ED(S[i]S[j-1]) +\D[i+1,j-1]$ then, by the correctness of \cref{eq:basic_recursion}, the equality holds since the options for $m$ in \cref{eq:faster_recursion} are a subset of the options for $m$ in \cref{eq:basic_recursion}. Thus, we focus on the case where $\D[i,j] = \D[i,m]+\D[m,j]$ for some $m \in (i\dd j)$.
	Let $\mst =\arg \min \{H(m) : m\in (i\dd j)\text{ and } \D[i,j] = \D[i,m]+\D[m,j]\}$. That is, $\mst$ is an index from $(i\dd j)$ for which $\D[i,j] = \D[i,\mst ]+\D[\mst,j]$ and $H(\mst )$ is minimized.
	In the following, we complete the proof by showing that $\mst \in M \cup \{i+1,i+2,j-2,j-1\}$.
	For a proof by contradiction, suppose that $\mst \notin M \cup \{i+1,i+2,j-2,j-1\}$.
	Then, $[\mst -2\dd \mst +2]\sub (i\dd j)$ since $\mst \notin \{i+1,i+2,j-2,j-1\}$, and $\min\{H(\mst -2),H(\mst +2)\}=H(\mst )-2$ since $\mst \notin M$.
	By symmetry, we assume without loss of generality that $H(\mst +2)=H(\mst )-2$.
	In this case,  $H(\mst +1) = H(\mst )-1$, and $S[\mst ],S[\mst +1]$ are both closing parentheses.

	Let $q\in (\mst \dd j]$ be the smallest value such that $\D[\mst,j]=\D[\mst,q]+\D[q,j]$ (notice that $\D[\mst,j]=\D[\mst,j]+\D[j,j]$, so $q$ is always well-defined).
	We consider two cases:

	\textbf{Case 1: $q\le \mst +2$.}
	In this case, $\D[i,j] = \D[i,\mst ]+\D[\mst,q] + \D[q,j] \ge \D[i,q] + \D[q,j] \ge \D[i,j]$ and hence $\D[i,j] = \D[i,q] + \D[q,j]$.
	However, $H(q) < H(\mst )$, which contradicts the choice of $\mst $ since $H(\mst )$ is not minimized.

\textbf{Case 2: $q > \mst +2$.} Recall that $q$ is minimized. Therefore,  there is no $q'\in (p\dd q)$ such that $\D[\mst,q] = \D[\mst,q']+\D[q',q]$ as otherwise $\D[\mst,j]=\D[\mst,q]+\D[q,j] = \D[\mst,q']+\D[q',q] +\D[q,j] \ge D[\mst,q'] +\D[q',j] \ge \D[\mst,j]$ and hence $\D[\mst,j] = D[\mst,q'] +\D[q',j]$, a contradiction. Consequently, we have $\D[\mst,q] = \ED(S[\mst ]S[q-1]) + \D[\mst +1,q-1]$ and $\D[i,j] = \D[i,\mst ] + \ED(S[\mst ]S[q-1]) + \D[\mst +1,q-1] + \D[q,j]$.
Let us define $r\in ( \mst+1 \dd q-1]$ as the smallest value such that $\D[\mst +1,q-1]=\D[\mst +1,r]+\D[r,q-1]$.
We consider two subcases:

\textbf{Case 2a: $r = \mst +2$.}
Notice that $\ED(S[\mst ]S[{q-1}]) = \ED(S[\mst +1]S[{q-1}])$ since $S[\mst ]$ and $S[\mst +1]$ are closing parentheses.
Recall that for $\ell \in [0\dd n)$ we have $\D[\ell,\ell +1] =1$.
Consequently,
\begin{align*}\D[i,j] &=  \D[i,\mst ] + \ED(S[\mst ]S[q-1]) + \D[\mst +1, {\mst +2}]+\D[\mst +2,q-1] + \D[q,j] \\
	 &= \D[i,\mst ] + \D[\mst,\mst +1] + \ED(S[\mst +1]S[q-1]) +\D[\mst +2,q-1] + \D[q,j] \\
	 &\ge \D[i,\mst +1] + \D[\mst +1,q]  +\D[q,j] \\
	 &\ge  \D[i,\mst +1]+\D[\mst +1,j]\\
	 &\ge \D[i,j].\end{align*}
Thus, $\D[i,j]=\D[i,\mst +1]+\D[\mst +1,j]$.
However, $H(\mst +1) < H(\mst )$, which contradicts the choice of $\mst $ since $H(\mst )$ is not minimized.

\textbf{Case 2b: $r > \mst +2$.}
Notice that  $\ED(S[\mst ]S[\mst +1]) \le \ED(S[\mst +1]S[r-1])$ since $S[\mst ]$ and $S[\mst +1]$ are closing parentheses.
Moreover, if $S[q-1]$ is a closing parenthesis, then $\ED(S[r-1]S[q-1])\le 1 = \ED(S[\mst ]S[q-1])$. Otherwise, $S[q-1]$ is an opening parenthesis and $\ED(S[r-1]S[q-1])\le 2  = \ED(S[\mst ]S[q-1])$.
Notice that for $\ell \in [0\dd n-1)$ we have $\D[\ell,\ell+2] =\ED(S[\ell ]S[\ell +1])$.
Consequently,
\begin{align*}
	\D[i,j] &= \D[i,\mst ] + \ED(S[\mst ]S[q-1]) + \ED(S[\mst +1]S[r-1]) + \D[\mst +2,r-1]+\D[r,q-1] + \D[q,j] \\
	&\ge \D[i,\mst ] + \ED(S[\mst ]S[\mst +1]) + \D[\mst +2,r-1] + \ED(S[r-1]S[q-1]) +\D[r,q-1] + \D[q,j] \\
	& \ge \D[i,\mst ] + \D[\mst, \mst +2]  + \D[\mst +2,r-1] + \D[r-1,q] + \D[q,j] \\
	& \ge \D[i,\mst +2] + \D[\mst +2,j]\\
	& \ge \D[i,j].
\end{align*}
Thus, $\D[i,j]=\D[i,\mst +2]+\D[\mst +2,j]$.
However, $H(\mst +2) < H(\mst )$, which contradicts the choice of $\mst $ since $H(\mst )$ is not minimized.
\end{proof}

By~\cref{lm:mid_point} it is straightforward to prove the following corollary.
\begin{corollary}
There exists an algorithm for \TDED with $O(n^2k)$ running time.
\end{corollary}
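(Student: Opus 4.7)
The plan is to combine Claim~\ref{claim:number_of_valleys} with Lemma~\ref{lm:mid_point} in a straightforward dynamic program. First, I would apply the preprocessing of Claim~\ref{claim:number_of_valleys} in $\Oh(n)$ time: either it rejects (in which case we output $k+1$), or we replace $S$ with the returned string $S'$ of length at most $n$ having at most $2k$ valleys, while preserving $\ED$. From now on we work with the reduced string, which we still call $S$.

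Next, in a single left-to-right scan I would compute the heights $H(i)$ for all $i \in [0\dd n]$, identify the set $V$ of valleys, and build the set $M = \bigcup_{v\in V}\{v-1,v,v+1\}$, all in $\Oh(n)$ time. Since $|V|\le 2k$, we have $|M|\le 6k$. I would then fill the table $\tilde\D[i,j] := \min\{\D[i,j],k+1\}$ in order of increasing $j-i$, using the base cases $\tilde\D[i,i]=0$ and $\tilde\D[i,i+1]=1$, and for $j-i\ge 2$ applying Lemma~\ref{lm:mid_point}:
\[
\tilde\D[i,j] = \min\Bigl\{k+1,\ \ED(S[i]S[j-1]) + \tilde\D[i+1,j-1],\ \min_{m\in (i\dd j)\cap (M\cup\{i+1,i+2,j-2,j-1\})} \tilde\D[i,m]+\tilde\D[m,j]\Bigr\}.
\]
Capping at $k+1$ is safe because in Lemma~\ref{lm:mid_point} the minimum is over sums, and replacing any summand that already exceeds $k$ by $k+1$ cannot change whether the overall minimum is $\le k$. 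The answer is $\tilde\D[0,n]$.

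For the runtime, the set of candidate split points $(i\dd j)\cap (M\cup\{i+1,i+2,j-2,j-1\})$ has size at most $|M|+4 = \Oh(k)$, so each of the $\Oh(n^2)$ cells is computed in $\Oh(k)$ time, for a total of $\Oh(n^2 k)$ on top of the $\Oh(n)$ preprocessing. No real obstacle arises; the only subtle point is to verify that capping entries at $k+1$ does not corrupt the recurrence of Lemma~\ref{lm:mid_point}, which follows because $a+b\ge k+1$ whenever $a\ge k+1$ or $b\ge k+1$ with the other summand nonnegative, so the capped minimum and uncapped minimum agree whenever either is at most $k$.
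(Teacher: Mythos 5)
Your proposal is correct and takes exactly the route the paper intends: apply the preprocessing of Claim~\ref{claim:number_of_valleys} to reduce to a string with $\Oh(k)$ valleys, precompute $H$, $V$, and $M$ in linear time, and then fill the table in order of increasing $j-i$ using Lemma~\ref{lm:mid_point}, with $\Oh(k)$ split-point candidates per cell. The paper states this corollary as an immediate consequence of Lemma~\ref{lm:mid_point} without spelling out the details; your write-up supplies precisely those details, and your observation about capping the entries at $k+1$ being sound is the right justification.
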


\section{$\Oh(n+k^5)$-time algorithm}\label{sec:alg}
In this section, we further refine the dynamic programming approach to achieve an $\Oh(n+k^5)$-time algorithm. To do so, the algorithm carefully selects a subset of $O(k^4)$ entries of the matrix $\D$ based on~\cref{lm:mid_point}, and the remaining entries are computed using a much faster method based on the approach developed by Landau and Vishkin for computing the threshold edit distance (\cite{LV97}, see also ~\cite{LMS98}).

\subsection{Definitions and combinatorial observations.}
We use the following simple fact that is an extension of~\cite[Fact 36]{BO16}.

\begin{fact}\label{fact:height_diff}
For $0\le i \le j\le n$, if $\D[i,j] \le k$, then $H(m) \ge \max\{H(i), H(j)\} - 2k$ holds for every $m\in [i\dd j]$.
In particular, $|H(i)-H(j)| \le 2k$.
\end{fact}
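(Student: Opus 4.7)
The plan is to fix a near-optimal edit script that transforms $S[i\dd j)$ into a \Dyck{} sequence $T$ using $\D[i,j]\le k$ operations, and to track how each edit perturbs the running height. Each edit is either a substitution (replacing one parenthesis by another) or a deletion. For every $m\in[i\dd j]$, let $p_m$ denote the position in $T$ reached after applying those edits whose origin lies in $[i\dd m)$, so that $T[0\dd p_m)$ is exactly the image of $S[i\dd m)$ under the script.

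The key step is a simple height-accounting argument. Each substitution alters the quantity ``opens minus closes'' within its range by either $0$ or $\pm 2$, since a parenthesis contribution under $h$ is either preserved or flipped from $+1$ to $-1$; each deletion alters this count by exactly $\pm 1$, since it removes a single $\pm 1$ contribution. Writing $s$ and $d$ for the number of substitutions and deletions of the script that fall in $[i\dd m)$, the opens-minus-closes count in $T[0\dd p_m)$ therefore equals $(H(m)-H(i)) + \delta_L$ for some $\delta_L$ with $|\delta_L|\le 2s+d \le 2(s+d) \le 2k$, where the last inequality uses that $s+d$ is bounded by the total edit budget. A symmetric argument, applied to the edits falling in $[m\dd j)$, writes the opens-minus-closes count of the suffix $T[p_m\dd |T|)$ as $(H(j)-H(m))+\delta_R$ with $|\delta_R|\le 2k$.

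Because $T$ is \Dyck, every prefix height is nonnegative and the total height is zero. Applied to $T[0\dd p_m)$, nonnegativity yields $H(m)-H(i)+\delta_L \ge 0$, hence $H(m)\ge H(i)-|\delta_L|\ge H(i)-2k$. For the suffix, total height zero together with nonnegativity of the prefix at $p_m$ forces the suffix count to be non-positive: $(H(j)-H(m))+\delta_R \le 0$, hence $H(m)\ge H(j)+\delta_R\ge H(j)-2k$. Combining both inequalities proves $H(m)\ge\max\{H(i),H(j)\}-2k$. Specializing to $m=j$ gives $H(i)-H(j)\le 2k$, and to $m=i$ gives $H(j)-H(i)\le 2k$, establishing $|H(i)-H(j)|\le 2k$.

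The only delicate point is the bookkeeping around which edits are attributed to the prefix versus the suffix at $m$ and the precise signed contributions ($\pm 2$ per substitution, $\pm 1$ per deletion); once this is pinned down, the \Dyck{} property of $T$ immediately yields both bounds. No use of fast matrix multiplication, trapezoids, or \cref{lm:mid_point} is required, so this fact stands as a self-contained extension of~\cite[Fact~36]{BO16}.
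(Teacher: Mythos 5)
The paper does not supply a proof of this fact; it is stated as a ``simple fact that is an extension of~\cite[Fact~36]{BO16}.'' Your proof is correct and is precisely the natural argument one would expect: fix an edit script of cost at most $k$ transforming $S[i\dd j)$ into a Dyck word $T$, note that a substitution perturbs the ``opens minus closes'' count by at most $2$ and a deletion by at most $1$ (so any partial tally shifts by at most $2k$ in absolute value), and then invoke the two defining height properties of a Dyck word (nonnegative prefix heights, zero total) to get both $H(m) \ge H(i) - 2k$ and $H(m) \ge H(j) - 2k$. The bookkeeping with $p_m$, $\delta_L$, $\delta_R$ is handled carefully and the sign of each inequality comes out right, so this stands as a complete, self-contained justification of the fact.
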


\begin{definition}[Trapezoid]
For $0\le a < b \le c < d \le n$, we say that $(a,b,c,d)$ is a \emph{trapezoid} if the following properties hold:
(1) $H(a)=H(d)$,
(2) $b-a = H(b)-H(a)=H(c)-H(d)=d-c$ (this value is called the \emph{height} of the trapezoid), and
(3) for all $m\in [b\dd c]$, we have $H(m) \ge H(b)=H(c)$. See \cref{fig:trapezoids}.
\end{definition}
Notice that all the characters in $S[a\dd b)$ are opening parentheses
and all the characters in $S[c\dd d)$ are closing parentheses, respectively.

\begin{definition}[Tall and maximal trapezoid]\label{def:trapezoid}
A trapezoid $(a,b,c,d)$ is \emph{maximal} if it cannot be \emph{extended} to a different trapezoid $(a',b',c',d')$
with $a' \le a $, $b' \ge b$,  $c' \le c$, and $d'\ge d$.
A trapezoid is \emph{tall} if its height is at least $2k$.
\end{definition}

\begin{fact}\label{fct:disjoint}
If $(a,b,c,d)$ and $(a',b',c',d')$ are distinct maximal trapezoids,
then $[a\dd b)\cup [c\dd d)$ and  $[a'\dd b')\cup [c'\dd d')$ are disjoint.
\end{fact}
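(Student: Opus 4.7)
The plan is to argue by contradiction: assume $T=(a,b,c,d)$ and $T'=(a',b',c',d')$ are distinct maximal trapezoids whose side-sets share a common position. The first reduction is to the case of ascending-side overlap: since positions in $[a\dd b)$ and $[a'\dd b')$ are opening parentheses while positions in $[c\dd d)$ and $[c'\dd d')$ are closing parentheses, no overlap can cross between the two types, and the descending-side case is symmetric via reversal of $S$ (which preserves maximality while swapping the roles of ascending and descending sides). Normalize by $a\le a'$. The overlap of $[a\dd b)$ with $[a'\dd b')$ then forces $a'<b$, and because both intervals contain only opening parentheses, $S[a\dd\max(b,b'))$ is a single run of openings on which $H$ strictly increases by $1$ at each step.

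The bulk of the proof is a case analysis on $b$ versus $b'$. When $b=b'$, I argue that the top constraints of $T$ and $T'$ coincide and force $c=c'$: if $c<c'$, then $H(c+1)=H(b)-1$ (because $S[c]$ is closing) while the top constraint of $T'$ on $[b\dd c']$ demands $H(c+1)\ge H(b)$, a contradiction, and $c>c'$ is symmetric. Then $d=c+(b-a)\ge c+(b-a')=d'$, so $T$ properly extends $T'$ whenever $a<a'$ and coincides with $T'$ when $a=a'$. When $b<b'$, the degenerate case $c=b$ is ruled out (it would force $S[b]$ to be simultaneously opening, as $b\in [a\dd b')$, and closing, as $b\in [c\dd d)$), so $c\ge b'$; the top constraint of $T'$ then yields $c'<c$, and tracking $H$ along $T'$'s descending side $[c'\dd d')$ (on which it decreases linearly from $H(b')$ to $H(a')$) together with $T$'s top constraint forces $c=c'+(b'-b)$ and hence $d=c'+(b'-a)$. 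I then construct the candidate $T^{\ast}=(a,b',c',d)$ and verify it is a valid trapezoid that properly extends $T$ (or coincides with $T'$ in the subcase $a=a'$), contradicting $T$'s maximality. The remaining case $b>b'$ is handled symmetrically, producing a proper extension of $T'$ instead.

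The main obstacle is the validity check for $T^{\ast}$. The height identity $b'-a=d-c'$ and the top condition on $[b'\dd c']$ are inherited directly from $T$ and $T'$, but showing that $S[c'\dd d)$ consists entirely of closing parentheses requires patching together $T'$'s descending side $[c'\dd d')$ with $T$'s descending side $[c\dd d)$ via the decomposition $[c'\dd d)=[c'\dd d')\cup[d'\dd d)$; this in turn depends on the positional chain $a\le a'<b<b'\le c'<c<d'\le d$, which must be established carefully from the top constraints and the strict monotonicity of $H$ on the ascending and descending runs. Once this chain is in place, the remaining verifications are routine, and the reversal symmetry disposes of the descending-side overlap case with no additional work.
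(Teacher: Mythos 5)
Your overall strategy is correct and provably reaches the same conclusion, but it takes a different and considerably longer route than the paper. The paper's proof introduces, for the shared ascending-side position $i$, the ``mirror'' position $j\in (c\dd d]$ with $i+j=a+d$, observes that $j$ is the \emph{first} position to the right of $i$ with $H(j)=H(i)$ (since $H>H(i)$ strictly on $(i\dd j)$), and therefore concludes $j=j'$ in one stroke. That single equality, $a+d=a'+d'$, together with four short extension arguments, immediately forces $(a,b,c,d)=(a',b',c',d')$. Your proof instead splits on $b$ vs.\ $b'$ and explicitly builds an extending trapezoid $(a,b',c',d)$ or $(a',b,c,d')$ in the unequal cases; this works, but it requires re-deriving by hand the positional chain (the ordering $a\le a'<b<b'\le c'<c<d'\le d$) and then verifying the three trapezoid axioms for the constructed quadruple. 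Both routes construct extensions to contradict maximality; what the paper's mirror-position trick buys you is that it collapses most of your case analysis, because once $a+d=a'+d'$ is known, the inequalities $b>b'$ and $b<b'$ are handled uniformly.

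There is one genuine gap in your write-up, in the $b<b'$ branch. You write that ``the degenerate case $c=b$ is ruled out \ldots\ so $c\ge b'$.'' Ruling out $c=b$ alone gives only $c\ge b+1$; it does not give $c\ge b'$ when $b'>b+1$. What you actually need is that \emph{every} $c\in[b\dd b')$ is impossible, and the very same argument applies: any such $c$ lies in $[a\dd b')$ (a run of opening parentheses, since $a\le a'<b$ and both $[a\dd b)$ and $[a'\dd b')$ are opening runs) and simultaneously in $[c\dd d)$ (a run of closing parentheses), so $S[c]$ would be both opening and closing. Equivalently, $[c\dd d)$ and $[a\dd b')$ are necessarily disjoint because one is all opening and the other all closing, and since $c\ge b>a$, disjointness forces $c\ge b'$. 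This is a small fix, but as written the inference is a non sequitur. A symmetric patch is needed for the $c'\ge b$ step in the $b>b'$ branch. Once this is repaired, the rest of your chain and the validity check for $T^{\ast}$ go through as you describe, and (as you correctly note implicitly) properties (1)--(3) of the trapezoid definition already imply that $S[c'\dd d)$ is a run of closing parentheses, so the ``patching'' verification is a redundancy rather than a separate obligation.
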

\begin{proof}
	Suppose that these sets contain a common position $i$.
	By symmetry, assume without loss of generality that $i\in [a\dd b)$,
	which means that $S[i]$ is an opening parenthesis and $i\in [a'\dd b')$.
	Let $j\in (c\dd d]$ with $i+j=a+d$,
	and so $H(i)=H(j)$.
	Moreover, since $(a,b,c,d)$ is a trapezoid, for $m\in (i\dd j)$ we have  $H(m)>H(i)$.
	Symmetrically, define $j'\in (c'\dd d']$ so that $i+j'=a'+d'$. Thus,  $H(i)=H(j')$ and  for $m\in (i\dd j')$ we have $H(m)>H(i)$.
	This proves that $j=j'$ and that $(a,b,c,d)$ and $(a',b',c',d')$ both extend $(i,i+1,j-1,j)$.
	Finally, we note that $a'\ge a$ (otherwise $(a-1,b,c,d+1)$ is a trapezoid),
	$a'\le a$ (otherwise $(a'-1,b',c',d'+1)$ is a trapezoid),
	$b'\le b$ (otherwise $(a,b+1,c+1,d)$ is a trapezoid),
	and $b'\ge b$ (otherwise $(a',b'+1,c'-1,d')$ is a trapezoid).
	Hence, $(a,b,c,d)=(a',b',c',d')$ holds as claimed.
\end{proof}

We process tall maximal trapezoids using a fast method based on~\cite{LV97,LMS98}. The remaining positions are grouped into clusters, and processed using~\cref{lm:mid_point}. Formally,

\begin{definition}[Cluster]\label{def:clusters}
Consider a cycle with vertices $[0\dd n]$ and edges $\{(i,i+1) : i\in [0\dd n)\}\cup\{(n,0)\}$. For each tall maximal trapezoid $(a,b,c,d)$, delete edges $(i,i+1)$ for $i\in [a\dd b)\cup [c\dd d)$,
delete vertices $(a\dd b)\cup (c\dd d)$, and add edges $(a,d)$ and $(b,c)$. The connected components of the resulting graph are called \emph{clusters}.
We say that a cluster $C$ contains all positions $S[p]$ such that $(p,p+1)$ is an edge of $C$. See~\cref{fig:trapezoids-and-clusters}.
\end{definition}

\begin{lemma}\label{lm:size_trapezoids_clusters}
If the input string $S$ has at most $2k$ valleys and $\D[0,n]\le k$, then the number of maximal trapezoids is $\Oh(k)$ and the total number of positions in the clusters is $\Oh(k^2)$.
\end{lemma}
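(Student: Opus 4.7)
The plan is to establish the two bounds separately. For the bound on the number of maximal trapezoids, I would argue that each maximal trapezoid $(a,b,c,d)$ must satisfy $a\in V\cup\{0\}$ or $d\in V\cup\{n\}$. Indeed, by the trapezoid definition $S[a]$ is opening (so $H(a+1)=H(a)+1$) and $S[d-1]$ is closing (so $H(d-1)=H(d)+1$); hence, if $a\ge 1$ and $a\notin V$, then $S[a-1]$ must also be opening (otherwise $a$ would be a valley), and symmetrically if $d\le n-1$ and $d\notin V$, then $S[d]$ must be closing. In that case, a direct check shows that $(a-1,b,c,d+1)$ is a trapezoid, contradicting maximality. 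By \cref{fct:disjoint}, distinct maximal trapezoids have pairwise distinct values of $a$ and pairwise distinct values of $d$, so partitioning them according to which of the two conditions is met yields at most $2(|V|+1)=\Oh(k)$ trapezoids.

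For the bound on cluster positions, I would first identify which parentheses lie in trapezoid slopes. For an opening parenthesis $S[p]$, define $\mu(p)=\min\{q>p : H(q)=H(p)\}$ when this minimum exists, and define $\nu(q)=\max\{p<q : H(p)=H(q+1)\}$ symmetrically for closing parentheses. I would show that $S[p]$ lies in the up-slope of a (unique, by \cref{fct:disjoint}) maximal trapezoid exactly when $\mu(p)$ is defined: the forward direction follows because, in any trapezoid $(a,b,c,d)$ with $p\in[a\dd b)$, the height strictly decreases from $H(c)=H(b)$ down to $H(d)=H(a)$ on $[c\dd d]$ and hence passes through $H(p)$; the backward direction holds because $(p,p+1,\mu(p)-1,\mu(p))$ is already a trapezoid and extends to a maximal one. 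An analogous characterization via $\nu$ handles the closing parentheses.

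Next, I would bound the parentheses that lie outside all trapezoid slopes. If $\mu(p)$ is undefined, then $H(q)>H(p)$ for every $q>p$; in particular $H(p)<H(n)\le 2k$ (since $\D[0,n]\le k$ implies $|H(n)|\le 2k$ because each edit changes the opening/closing imbalance by at most $2$), while $H(p)\ge -2k$ by \cref{fact:height_diff}. For each height in this $\Oh(k)$-wide range, at most one such $p$ can exist, because $p$ would have to be the last position at that height. A symmetric count for $\nu$ yields $\Oh(k)$ exceptional parentheses in total. Each maximal trapezoid of height $h_T$ contributes $2h_T$ slope positions, and these are pairwise disjoint by \cref{fct:disjoint}, so $\sum_{T}2h_T\ge n-\Oh(k)$. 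The first bound shows there are $\Oh(k)$ short maximal trapezoids (those with $h_T<2k$), contributing at most $\Oh(k)\cdot 4k=\Oh(k^2)$ in total, which gives $\sum_{T\text{ tall}}2h_T\ge n-\Oh(k^2)$. Since the number of positions in all clusters equals $n-\sum_{T\text{ tall}}2h_T$, this yields the desired $\Oh(k^2)$ bound.

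The main obstacle will be setting up the $\mu/\nu$ correspondence and verifying both implications carefully: once each parenthesis is shown to either sit in a specific maximal trapezoid slope or belong to a small $\Oh(k)$-sized exceptional set, the rest of the argument is a direct height accounting that invokes the first bound of the lemma to separate tall slopes from short ones.
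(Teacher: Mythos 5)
Your proof is correct and takes essentially the same approach as the paper's: the first bound is established by showing $a\in V\cup\{0\}$ or $d\in V\cup\{n\}$ (otherwise $(a-1,b,c,d+1)$ extends the trapezoid) and invoking \cref{fct:disjoint}, and the second bound comes from showing that positions outside all maximal-trapezoid slopes are $\Oh(k)$ many and that non-tall trapezoids cover only $\Oh(k^2)$ positions. The only cosmetic difference is that you phrase the second bound by subtracting $\sum_{T\text{ tall}}2h_T$ from $n$, whereas the paper directly sums the positions not covered by tall trapezoids; these are equivalent accountings.
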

\begin{proof}
Consider a maximal trapezoid $(i,i',j',j)$. First, note that $i\in V\cup\{0\}$ or $j\in V\cup\{n\}$ because $(i-1,i',j',j+1)$ is not a trapezoid. Moreover, by \cref{fct:disjoint}, there is no other maximal trapezoid $(a,b,c,d)$ with $a=i$ or $d=j$. Hence, there are at most two maximal trapezoids associated with each $v\in V$, and another maximal trapezoid associated with each extreme position (0 and $n-1$).
Thus, number of maximal trapezoids is bounded by $4k+2$.

We now show that all positions in clusters, except for $\Oh(k)$ of those positions, belong to some maximal trapezoid.
Consider an opening parenthesis $S[p]$. If there is a position $r\in [p\dd n]$ such that $H(r)=H(p)$,
then choose the leftmost such position. In this case, $S[r-1]$ is a closing parenthesis and $(p,p+1,r-1,r)$
forms a trapezoid, which can be extended to a maximal trapezoid.
Thus, an opening parenthesis $S[p]$ does not belong to any maximal trapezoid only if $H(p)<H(r)$ for all $r\in [p\dd n]$. The number of such positions does not exceed $H(n)-\min_{p\in [0\dd n]} H(p)$, since each height between $\min_{p\in [0\dd n]} H(p)$ and $H(n)$ can contribute at most once.
By \cref{fact:height_diff}, $H(n)-\min_{p\in [0\dd n]} H(p) \le 2k $. Thus, the number of such positions is at most $2k$.
Similarly, there are at most $2k$ closing parentheses that do not belong to any maximal trapezoid.

Finally, we note that if a maximal trapezoid is not tall, then it contains at most $4k$ positions.
Since the number of maximal trapezoids is $\Oh(k)$, this means that maximal trapezoid that are not tall contain at most $\Oh(k^2)$ positions in total. Overall, we conclude that the total number of positions in the clusters is $\Oh(k^2)$ as desired.
\end{proof}

\begin{definition}[Parent relation on clusters and trapezoids]
	Let $T=(a,b,c,d)$ be a tall maximal trapezoid and let $C$ be a cluster.
	We say that $C$ is the parent of $T$ if $a,d\in C$,
	and that $T$ is the parent of $C$ if $b,c\in C$. See~\cref{fig:the-tree}.
\end{definition}

	The following lemma shows that clusters and trapezoids naturally define a tree structure.
	
	\begin{lemma}\label{lm:tree}
	If $\D[0,n] \le k$, then the parent relation on clusters and tall maximal trapezoids defines a rooted tree whose root is the cluster containing both $0$ and $n$. We denote this tree by $\T$.
	
	Moreover, there exists an algorithm for constructing tree $\T$ in $\Oh(n)$ time.
	\end{lemma}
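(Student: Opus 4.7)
The plan is to prove the tree structure and the $\Oh(n)$-time construction separately. The tree structure follows from two observations: each tall maximal trapezoid is incident to exactly two clusters (its parent and its child), and each cluster is incident to at most one parent trapezoid. The algorithm will be a straightforward linear-time synthesis of height computation, parenthesis matching, and graph traversal.

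To establish the tree structure, I would first observe that every tall maximal trapezoid $T=(a,b,c,d)$ has exactly one parent cluster and exactly one child cluster, which are distinct. Indeed, the edge $(a,d)$ added in \cref{def:clusters} places $\{a,d\}$ into a single connected component (the parent), and the edge $(b,c)$ places $\{b,c\}$ into another (the child). They are distinct because the deletion of all edges $(i,i+1)$ with $i\in [a\dd b)\cup [c\dd d)$ together with the vertices in $(a\dd b)\cup(c\dd d)$ severs every original-cycle path between $\{a,d\}$ and $\{b,c\}$. Next, I would show that each cluster has at most one parent trapezoid by case analysis: suppose two trapezoids $T_1,T_2$ both have their $b,c$ endpoints in the same cluster $C$. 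By \cref{fct:disjoint}, either $T_1,T_2$ are side-by-side (say $d_1\le a_2$) or nested (say $T_2$ inside $T_1$, so $b_1\le a_2<d_2\le c_1$). In the side-by-side case, the inner region of $T_1$ (bounded by its two legs) is vertex-disjoint from that of $T_2$, so $b_1$ and $b_2$ cannot lie in the same cluster. In the nested case, $a_2,d_2$ lie in the inner region of $T_1$, but $b_2,c_2$ lie strictly inside $T_2$'s own inner region and are separated from $b_1,c_1$ by the deletion of $T_2$'s legs. Either way, we reach a contradiction.

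For the overall tree structure and connectedness, I would use induction on the number $r$ of tall maximal trapezoids. Starting from the single cluster formed by the full cycle (base case: one cluster, no trapezoids), inserting a maximal trapezoid $T$ into the current graph splits exactly one existing cluster into two—its outer side containing $\{a,d\}$ and its inner side containing $\{b,c\}$—while adding $T$ as a new node together with the edges from the outer side to $T$ (as parent) and from $T$ to the inner side (as child). This preserves a tree structure and yields $r+1$ clusters, $2r+1$ total nodes, and $2r$ edges at the end. The root is the unique cluster $C_0$ containing both $0$ and $n$: the cycle edge $(n,0)$ is never deleted (no trapezoid can satisfy $b=0$, since $b>a\ge 0$, nor $c=n$, since $c<d\le n$), so $0$ and $n$ always lie in one component. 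Moreover, $C_0$ cannot be the child of any trapezoid because the cycle of $C_0$ traverses only outer-side vertices of trapezoids (the $a,d$'s, joined by added edges $(a,d)$) and no inner-side vertex $b$ or $c$; hence $C_0$ has no parent in the relation.

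For the $\Oh(n)$-time algorithm, I would: (i) compute the height function $H$ in a single left-to-right scan; (ii) identify the valleys and the extreme positions $0$ and $n$ in one pass; (iii) precompute, via a stack-based parenthesis-matching pass, for each opening-paren position $p$ the matching position at the same height with all intermediate heights at least $H(p)$; (iv) enumerate candidate maximal trapezoids by iterating over valleys and extremes (the proof of \cref{lm:size_trapezoids_clusters} guarantees every maximal trapezoid arises this way), retrieving each $(a,b,c,d)$ in $\Oh(1)$ via the precomputed matching data; (v) filter the tall ones (height at least $2k$); (vi) construct clusters by traversing the cycle and jumping across each tall maximal trapezoid via the added edges $(a,d)$ and $(b,c)$, labelling each visited vertex with its cluster id; and (vii) assemble $\T$ by linking each trapezoid to the two clusters containing $\{a,d\}$ and $\{b,c\}$. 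Each step runs in $\Oh(n)$. The main obstacle will be the uniqueness-of-parent argument in step two of the tree proof, since it requires a careful geometric understanding of how the two possible arrangements of disjoint trapezoids (nested versus side-by-side) interact with the inner cycles created by the added edges.
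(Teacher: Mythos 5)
Your proposal is correct, but it takes a genuinely different route than the paper's proof, chiefly in how it establishes the tree structure. The paper argues directly: it first observes that after the modifications of Definition~\ref{def:clusters}, no edge connects $[b\dd c]$ with $[0\dd a]\cup[d\dd n]$, which immediately pins down the child cluster of $T=(a,b,c,d)$ to satisfy $\min C=b$ and $\max C=c$; acyclicity then follows from monotonicity of $(\min C,\max C)$ along the parent chain, and uniqueness of a cluster's parent is read off from the single deleted edge $(\max C, \max C+1)$ and Fact~\ref{fct:disjoint}. You instead run an induction on the number of tall maximal trapezoids, showing each insertion splits exactly one cluster into an outer piece (containing $a,d$) and an inner piece (containing $b,c$); this incrementally builds a tree with $2r+1$ nodes and $2r$ edges. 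Both arguments work, and your inductive picture is arguably more vivid, but it does more work than needed: once the induction establishes a tree, uniqueness of the parent falls out for free, making your separate nested/side-by-side case analysis redundant. Also note that both the side-by-side and nested branches of your case analysis implicitly rely on the key fact the paper proves explicitly up front — that the child cluster is confined to $[b\dd c]$ — without which "vertex-disjoint inner regions" does not by itself preclude two $b$'s lying in the same cluster. Your induction step likewise presupposes (correctly, but without stating it) that $a,b,c,d$ all lie in a single cluster before $T$ is inserted, which needs Fact~\ref{fct:disjoint} to guarantee the cycle path $a,a{+}1,\ldots,b$ and the jump-over path from $b$ to $c$ survive earlier insertions. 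For the $\Oh(n)$ construction, the paper uses a cleaner characterization — maximal trapezoids are exactly maximal runs of opening parentheses on which $i+B(i)$ is constant, where $B(p)$ is the match of $p$ — which yields all maximal trapezoids in one scan; your variant of iterating over valleys and extremes and retrieving each trapezoid via precomputed matching data also works and runs in linear time, but requires appealing to the anchoring observation from the proof of Lemma~\ref{lm:size_trapezoids_clusters} and a small amount of extra preprocessing to extract $(a,b,c,d)$ in $\Oh(1)$ per anchor.
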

	\begin{proof}
	
		For a tall maximal trapezoid $T=(a,b,c,d)$, the graph $G$ of \cref{def:clusters}
		does not contain any edge connecting $[b\dd c]$ with $[0\dd a]\cup [d\dd n]$.
		Consequently, the cluster $C$ whose parent is $T$ satisfies $\min C = b$ and $\max C = c$.
		At the same time, the parent $C'$ of $T$ satisfies $\min C' \le a < b$ and $\max C' \ge d > c$.
		We conclude that the parent relation is acyclic.
		
		It remains to prove that every cluster $C$ (except for the one containing $0,n$) has exactly one parent.
		Consider an edge $(\max C, \max C + 1)$ \emph{not} present in $G$.
		Since $\max C\in G$, this edge must have been deleted while processing a tall maximal trapezoid $T=(a,b,c,d)$ such that $\max C\in \{a,c\}$.
		If $\max C = a$, then $(a,d)$ is an edge of $G$, so $d \in C$ and $\max C \le d < a = \max C$, a contradiction.
		Consequently, we must have $\max C = c$, which means that~$T$ is the parent of $C$.
		This trapezoid is unique due to \cref{fct:disjoint}.

	We now show an algorithm that builds $\T$ in $\Oh(n)$ time.
	First, for each opening parenthesis $S[p]$, we compute the value $B(p)=\min \{ r\in (p\dd n] : H(p)=H(r)\}$, assuming $\min \emptyset = \infty$.
	For this, it suffices to scan $S$ from left to right in linear time.
	Now, observe that $(a,b,c,d)$ is a trapezoid if and only if $i+B(i)=a+d=b+c$ for all $i\in [a\dd b)$.
	In particular, maximal trapezoids correspond to maximal segments $S[a\dd b)$ of opening parentheses such that $i+B(i)$ is fixed for all $i\in[a\dd b)$. This allows constructing all maximal trapezoids in $\Oh(n)$ time.
	Out of these trapezoids, we filter tall trapezoids (for which the height is at least $2k$).
	
	We now show how to compute the clusters.
	Let $G$ be a graph on nodes $[0\dd n]$ that contains edges $(i,i+1)$ for each $i\in [0\dd n)$ and an edge $(n,0)$ (i.e., $G$ is a cycle on $[0\dd n]$).
	Next, for each tall maximal trapezoid $(i,i',j',j)$, modify $G$ as specified in \cref{def:clusters}.
	By \cref{fct:disjoint}, no vertex or edge is deleted twice.
	Finally, traverse the graph and mark each vertex with an identifier of its connected component (cluster).
	Thus, constructing all the clusters costs $\Oh(n)$ time in total.
	
	It remains to build the tree $\T$. For this, each tall maximal trapezoid $T=(a,b,c,d)$ is connected to its child (the cluster containing $b$ and $c$) and its parent (the cluster containing $a$ and $d$). Overall, the algorithm costs $\Oh(n)$ time.
\end{proof}

Notice that every trapezoid in $\T$ has a cluster parent, and at most 1 child, which must be a cluster. Moreover, every cluster in $\T$ has an arbitrary number of trapezoid children, and at most 1 trapezoid parent (the root of $\T$ is a cluster and has no parent).

\begin{claim}\label{claim:trapezoids}
Consider a trapezoid $(a,a+2k,d-2k,d)$ and indices $i\le j$ such that $\D[i,j]\le k$.
If one of the indices~$i,j$ belongs to $(a+2k\dd d-2k)$, then the other belongs to $(a\dd d)$.
\end{claim}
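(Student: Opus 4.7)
The proof will proceed by contradiction, leveraging \cref{fact:height_diff} together with the trapezoid's structural constraints. By a left-right symmetry (reverse the string and swap the roles of opening and closing parentheses), it suffices to handle the case $i \in (a+2k \dd d-2k)$; the case where $j$ lies in this interval follows from the analogous argument applied to the reversed string. For brevity, write $b = a+2k$ and $c = d-2k$.

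The lower bound $j > a$ is immediate: $j \ge i > b > a$. The substantive step is the upper bound, which I prove by contradiction. Assuming $j \ge d$, we have $d \in [i \dd j]$ because $i < c \le d \le j$, so \cref{fact:height_diff} at $m = d$ combined with $H(d) = H(a)$ gives
\[ H(a) \ge \max\{H(i), H(j)\} - 2k. \]
Combining this with the trapezoid condition $H(i) \ge H(b) = H(a) + 2k$ (applicable because $i \in [b \dd c]$) forces the tight equality $H(i) = H(a) + 2k$ as well as $H(j) \le H(a) + 2k$. In particular, the height profile of $S[i \dd j)$ attains its lowest permissible value exactly at position $d$.

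The crux of the argument converts these tight height constraints into a contradiction. The intuition is that the forced dip of $-2k$ in the local height profile at position $d$ (arising from the $2k$ closing parentheses of $S[c \dd d)$) consumes the entire edit budget: every one of the $k$ edits witnessing $\D[i, j] \le k$ must be a close-to-open substitution positioned within $[i \dd d)$, since no other edit can raise the profile at $d$ by enough (each close-to-open substitution raises subsequent heights by $2$, while deletions raise them by at most $1$). Tracking the net-height effect of these $k$ substitutions then forces $H(j) = H(a)$. Next, applying \cref{fact:height_diff} at $m = d+1$ rules out $S[d] = )$ (which would yield $H(d+1) = H(a) - 1$ below the permitted floor), so $S[d] = ($. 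Combined with $S[d-1] = )$ (as $d-1 \in [c \dd d)$), this shows that $d$ itself is a valley, since $H(d-1) = H(a)+1 > H(d) = H(a) < H(d+1) = H(a)+1$.

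The main obstacle will be finishing the argument by combining this newly identified valley at $d$ with the global guarantee of at most $2k$ valleys in $S$ from \cref{claim:number_of_valleys}, together with the valleys that the trapezoid's structure already forces inside $(b \dd c)$. The counting must ensure that the total exceeds $2k$, which is the most delicate point since a priori the number of valleys inside $(b \dd c)$ can be zero. As a backup, if the valley-count route does not close cleanly, I would turn to the recursion of \cref{eq:basic_recursion}: each decomposition $\D[i, m] + \D[m, j]$ (for $m \in (i \dd j)$) and the matching branch $\ED(S[i]S[j-1]) + \D[i+1, j-1]$ would be shown to exceed $k$ once the extra segment $S[d \dd j)$ is accounted for, yielding the contradiction directly from the saturation of the edit budget established above.
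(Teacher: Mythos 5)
Your symmetry reduction and the easy lower bound $j > a$ match the paper. The divergence is in the upper bound. The paper observes the \emph{strict} inequality $H(i) > H(d-2k) = H(d) + 2k$ for $i \in (a+2k \dd d-2k)$, and then \cref{fact:height_diff} immediately forces $d \notin [i \dd j]$: if $d$ were in $[i \dd j]$, then $H(d) \ge H(i) - 2k > H(d)$, a contradiction; this is the entire proof. You instead work with the non-strict bound $H(i) \ge H(a) + 2k$, obtain only a tight equality rather than a contradiction, and then attempt to recover via a much heavier argument. That is the gap: without the strict bound, \cref{fact:height_diff} alone does not close the argument, and the additional machinery you sketch does not close it either.

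Several of the follow-up steps in your sketch do not hold up. The assertion that the entire edit budget must consist of close-to-open substitutions located in $[i \dd d)$ is stated as intuition and not justified; the height accounting for substitutions versus deletions needs a careful proof that is not given. The step applying \cref{fact:height_diff} at $m = d+1$ requires $j > d$, so the boundary case $j = d$ is left unhandled. The valley-counting route is explicitly unresolved --- as you yourself observe, the number of valleys inside $(b \dd c)$ can be zero, so it is unclear how the count would exceed $2k$. The ``backup'' via \cref{eq:basic_recursion} is not developed at all. The single missing observation, which the paper invokes directly, is the strict bound $H(i) > H(d-2k)$ for $i$ strictly inside the middle interval; with it, the argument collapses to a one-liner through \cref{fact:height_diff}.
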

\begin{proof}
By symmetry, we assume without loss of generality that $i\in (a+2k\dd d-2k)$.
Then, $H(i)>H(d-2k)=H(d)+2k$ and, by \cref{fact:height_diff}, $d\notin [i\dd j]$,
i.e., $j\in (a\dd d)$.
\end{proof}

\subsection{The Algorithm.}
Our main algorithm constructs $\T$ and then performs a post-order traversal of $\T$,
with distinct procedures for processing nodes representing clusters and trapezoids.

\subsubsection*{Processing clusters.}
When the algorithm processes a cluster $C$ with $r$ children $T_{q} = (a_{q},b_{q},c_{q},d_{q})$ for $q\in [1\dd r]$, the algorithm is given the values $\min\{\D[i,j], k+1\}$ for all $i,j\in [a_{q}\dd a_q+2k] \cup [d_q-2k\dd d_q]$ and $q\in [1\dd r]$. If $C$ has a parent $T = (a,b,c,d)$, then the outcome of processing $C$ consists of the values $\min\{\D[i,j], k+1\}$ for $i,j \in [b-2k\dd b] \cup [c\dd c+2k]$. See~\cref{fig:cluster-inputs-output}.
 If $C$ is the root of~$\T$, then $0,n\in C$ by~\cref{lm:tree} and the output is $\D[0,n]$.
Denote
\[E(C) :=  C \cup \left(\bigcup_{q=1}^r [a_{q}\dd a_q+2k] \cup [d_q-2k\dd d_q] \right) \cup \left([b-2k\dd b] \cup [c\dd c+2k]\right) = [b-2k\dd c+2k] \sm \bigcup_{q=1}^r  (a_q+2k \dd d_q-2k),\]
where the last term is included only if $C$ is not the root of $\T$.
Notice that $E(C)$ is the set of positions that are relevant for processing $C$. Let $M(C) = M \cap E(C)$, where $M$ is defined as in \cref{lm:mid_point}.
The following lemma is a simplification of \cref{lm:mid_point}, confined to values in $E(C)$ which are not given as input to cluster processing.
\begin{lemma}\label{clm:enough}
Let $(i,j)\in E(C)^2 \sm \bigcup_{q=1}^r [a_{q}\dd d_q]^2$.
If $j\ge i+2$ and $\D[i,j]\le k$, then
\begin{equation}\label{eq:terrible_recursion}
	\D[i,j] = \min\begin{cases}
		\D[i,m]+\D[m,j] \qquad \text{for }m\in (i\dd j)\cap (M(C)\cup\{i+1,i+2,j-2,j-1\}),\\
		\ED(S[i]S[j-1]) + \D[i+1,j-1] \qquad \text{if }i+1,j-1\in E(C).\\
	\end{cases}
\end{equation}
\end{lemma}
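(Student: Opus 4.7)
The plan is to deduce \cref{clm:enough} from \cref{lm:mid_point} by showing that, under the hypothesis $(i,j)\in E(C)^2 \setminus \bigcup_q [a_q\dd d_q]^2$, every optimizer for \eqref{eq:faster_recursion} either already lies in the restricted set of \eqref{eq:terrible_recursion}, or can be ruled out by contradiction. Since \eqref{eq:terrible_recursion} only drops candidates from \eqref{eq:faster_recursion}, its value is automatically $\ge \D[i,j]$; exhibiting one valid optimizer closes the argument.

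First I would split into the three possible minimizers identified by \cref{lm:mid_point}. The easy case is $m\in(i\dd j)\cap \{i+1,i+2,j-2,j-1\}$: such an $m$ is unconditionally included in \eqref{eq:terrible_recursion}, so nothing is lost.

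Next I would handle the case where the optimum is attained at some $m\in (i\dd j)\cap M$ with $m\notin M(C)$, i.e.\ $m\in M\setminus E(C)$. Using the identity $E(C)=[b-2k\dd c+2k]\setminus\bigcup_q(a_q+2k\dd d_q-2k)$ (and the analogous global-range description when $C$ is the root), together with $i,j\in E(C)$, any such $m$ must fall in some $(a_q+2k\dd d_q-2k)$. Because $\D[i,m]+\D[m,j]=\D[i,j]\le k$, both $\D[i,m]\le k$ and $\D[m,j]\le k$. Applying \cref{claim:trapezoids} to the trapezoid $(a_q,a_q+2k,d_q-2k,d_q)$ first with the pair $(i,m)$ and then with the pair $(m,j)$ yields $i,j\in(a_q\dd d_q)$, contradicting $(i,j)\notin[a_q\dd d_q]^2$. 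So this case cannot occur.

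Finally, I would address the direct term $\ED(S[i]S[j-1])+\D[i+1,j-1]$ in the situation where either $i+1$ or $j-1$ is not in $E(C)$. By symmetry assume $i+1\notin E(C)$, so $i+1\in(a_q+2k\dd d_q-2k)$ for some $q$; combined with $i\in E(C)$, a short boundary check forces $i=a_q+2k$. From $\D[i+1,j-1]\le \D[i,j]\le k$, \cref{claim:trapezoids} applied to the pair $(i+1,j-1)$ gives $j-1\in(a_q\dd d_q)$, hence $j\le d_q$. Since $i=a_q+2k$ lies in $[a_q\dd d_q]$ (tall trapezoids have $d_q-a_q\ge 4k$), we conclude $(i,j)\in[a_q\dd d_q]^2$, contradicting the hypothesis. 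Thus whenever the direct term is an optimizer, both $i+1,j-1\in E(C)$ and the term is legal for \eqref{eq:terrible_recursion}.

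The main obstacle I anticipate is the minor bookkeeping in distinguishing the root and non-root forms of $E(C)$, and in consistently verifying that the hypothesis $\D[\cdot,\cdot]\le k$ needed by \cref{claim:trapezoids} is at hand; both reduce to short checks once the identity $E(C)=(\text{range})\setminus\bigcup_q(a_q+2k\dd d_q-2k)$ is made explicit.
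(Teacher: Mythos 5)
Your proposal is correct and follows essentially the same route as the paper: invoke \cref{lm:mid_point}, and for each minimizer $m$ or the direct term that is not admissible in \eqref{eq:terrible_recursion}, apply \cref{claim:trapezoids} to the shrunken trapezoid $(a_q,a_q+2k,d_q-2k,d_q)$ to derive $(i,j)\in[a_q\dd d_q]^2$, contradicting the hypothesis. The only cosmetic difference is in the direct-term case, where you pin down $i=a_q+2k$ explicitly whereas the paper concludes more compactly that $(i+1,j-1)\in(a_q\dd d_q)^2$; both give the same contradiction.
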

\begin{proof}
If $\D[i,j] = \D[i+1,j-1] + \ED(S[i]S[j-1])\le k$, then
observe that $i+1,j-1\in E(C)$ unless $i+1\in (a_q+2k\dd d_q-2k)$ or $j-1\in (a_q+2k\dd d_q-2k)$ for some $q\in [1\dd r]$. By \cref{claim:trapezoids} applied to $(a_q,a_q+2k,d_q-2k,d_q)$, the assumption $\D[i,j]\le k$ implies $\D[i+1,j-1]\le k$ and so $(i+1,j-1)\in (a_q\dd d_q)^2$, i.e., $(i,j)\in [a_q\dd d_q]^2$, a contradiction.

Thus, by \cref{lm:mid_point}, we focus on the case where $\D[i,j] = \D[i,m]+\D[m,j]$ for some $m \in (i\dd j)\cap (M\cup \{i+1,i+2,j-2,j-1\})$.
Assume by contradiction that $m\notin M(C)$, and so $m\in (a_q+2k\dd d_q-2k)$ for some $q\in [1\dd r]$.
By \cref{claim:trapezoids} applied to $(a_q,a_q+2k,d_q-2k,d_q)$, the assumption $\D[i,m]+\D[m,j]\le k$ implies $i,j\in (a_q\dd d_q)$.
However,  $(i,j)\notin [a_{q}\dd d_q]^2$, and so we obtain a contradiction. \end{proof}

By \cref{clm:enough}, the values $\min\{\D[i,j],k+1\}$ for all $i,j\in E(C)$ with $i\le j$
can be computed in $\Oh(|E(C)|^2 (1+|M(C)|))$ time.
Notice that $|E(C)|\le |C|+(4k+2)(q+1)$.
Moreover, by \cref{lm:size_trapezoids_clusters}, the sum of the number of children for all clusters is $\Oh(k)$ and the total number of positions in the clusters is $O(k^2)$, and therefore $\sum_{C} |E(C)| = \Oh(k^2)$.
Furthermore, $|M(C)\sm C| = \Oh(q+1)$ (because, for each trapezoid $(a,b,c,d)$, only $a,a+1,d-1,d$ may belong to $M$) and $\sum_C |M(C) \cap C| \le |M| = \Oh(k)$ as the clusters do not intersect,
so $\sum_C |M(C)| = \sum_C |M(C) \sm C| + \sum_C |M(C) \cap C| = \Oh(k)$. Consequently, the total time required to process all clusters is $\Oh((k^2)^2\cdot (1+k)) = \Oh(k^5)$.

\subsubsection*{Processing trapezoids.}
When we process a tall maximal trapezoid $T = (a,b,c,d)$, we are given values $\min\{\D[i,j], k+1\}$ for all $i,j \in [b-2k\dd b]\cup [c\dd c+2k]$, and the output consists of the values $\min\{\D[i,j], k+1\}$ for $i,j \in [a\dd a+2k]\cup [d-2k\dd d]$. See~\cref{fig:trapezoid-input-output}.

For $(i,j)\in [a\dd b]^2 \cup [c\dd d]^2$, we have $\D[i,j] = \lceil (j-i)/2 \rceil$
because all parentheses in $S[i\dd j)$ are of the same orientation. We therefore only need to compute entries $\D[i,j]$ such that $i \in [a\dd a+2k]$ and $j\in [d-2k\dd d]$. Consider a square submatrix $\D[a\dd b, c \dd d]$ of $\D$ (see~\cref{fig:submatrix}). We start by showing that the values in the submatrix
follow a simpler recursion:
\begin{claim}\label{claim:diag}
For all $(i,j) \in [a\dd d]^2 \sm [b-2k\dd c+2k]^2$ such that $i \le j$ and $\D[i,j] \le k$,  \[\D[i,j] = \min\{\D[i+2,j] + 1, \D[i+1,j] + 1, \D[i+1,j-1] + \ED(S[i]S[j-1]), \D[i,j+1]+1, \D[i,j+2]+1\}.\]
\end{claim}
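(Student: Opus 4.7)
The plan is to verify both directions of the claimed equality: every term on the right-hand side must be at least $\D[i,j]$ (so the minimum is an upper bound on $\D[i,j]$), and at least one of them must equal $\D[i,j]$ (so the minimum is also a lower bound). The key structural fact that the proof will exploit is that the assumption $(i,j) \notin [b-2k \dd c+2k]^2$ together with $i \le j$ forces either $i \in [a \dd b-2k)$ (so that $S[i], S[i+1]$ are both opening parentheses) or $j \in (c+2k \dd d]$ (so that $S[j-1], S[j-2]$, and typically $S[j], S[j+1]$, are both closing parentheses).

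Three of the five upper bounds will follow immediately from \cref{eq:basic_recursion}: the matching term $\D[i+1,j-1]+\ED(S[i]S[j-1])$ is directly an option in the standard recursion; the split $m=i+1$ gives $\D[i,j] \le \D[i,i+1] + \D[i+1,j] = 1 + \D[i+1,j]$; and the bounded difference $|\D[i,j] - \D[i,j+1]| \le 1$, itself a consequence of \cref{eq:basic_recursion}, gives $\D[i,j] \le \D[i,j+1]+1$. For the remaining two I will exploit the trapezoid: whenever $S[i], S[i+1]$ are both opening, $\D[i,i+2] = \ED(S[i]S[i+1]) = 1$, and taking $m=i+2$ in \cref{eq:basic_recursion} yields $\D[i,j] \le 1 + \D[i+2,j]$. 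The bound $\D[i,j] \le \D[i,j+2]+1$ is symmetric when $S[j], S[j+1]$ are both closing, and in the remaining configurations I will verify it by modifying an optimal Dyck transformation of $S[i,j+2)$: remove $S[j], S[j+1]$ from the kept set, and if this leaves orphaned opening parentheses in $S[i,j)$, pay at most one substitution to re-match them into a Dyck sequence.

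For the lower bound, \cref{eq:basic_recursion} gives $\D[i,j] = \D[i+1,j-1]+\ED(S[i]S[j-1])$ (which equals option (c)) or $\D[i,j] = \D[i,m]+\D[m,j]$ for some $m \in (i \dd j)$. In the second case I plan to imitate the proof of \cref{lm:mid_point}: in the pure-opening prefix $[i \dd b]$ of the trapezoid, $\D[i,m] = m-i$, so bounded difference applied to $\D[m,j]$ lets me shift the optimal split to $m = i+1$ or $m = i+2$, yielding options (b) or (a) respectively. The symmetric shift in the pure-closing suffix produces splits of the form $\D[i,j-\ell]+\ell$, which I will translate into options (d) and (e) by combining bounded difference with the trapezoid-induced identities in the closing region.

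The main obstacle will be the bridge between splits naturally landing at $m = j-1, j-2$ and the formulations $\D[i,j+1], \D[i,j+2]$ appearing in options (d) and (e). I expect this to be handled by a Landau--Vishkin-style wavefront argument: within the closing region of the trapezoid, extending $j$ by one decreases $\D$ by exactly 1 whenever there is a free opening parenthesis available on the left, making the ``shrink-by-one'' and ``extend-by-one'' options coincide precisely in the tightness case.
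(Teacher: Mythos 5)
Your proposal attacks the claim as literally printed, which is almost certainly a typo: the last two terms should read $\D[i,j-1]+1$ and $\D[i,j-2]+1$, not $\D[i,j+1]+1$ and $\D[i,j+2]+1$. This is visible both in the paper's own proof (it reduces to split indices $m\in\{i+1,i+2,j-2,j-1\}$, and $m=j-1$ gives $\D[i,j-1]+\D[j-1,j]=\D[i,j-1]+1$, etc.) and in the way \cref{claim:diag} is invoked inside the proof of \cref{lm:fix_BO}, whose five cases are $\D[i',j'-2]+1,\D[i',j'-1]+1,\D[i'+1,j'-1]+1,\D[i'+1,j']+1,\D[i'+2,j']+1$. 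The obstacle you flagged at the end --- ``the bridge between splits naturally landing at $m=j-1,j-2$ and the formulations $\D[i,j+1],\D[i,j+2]$'' --- is precisely a symptom of this typo, not something to be patched with a wavefront argument.

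More importantly, your route diverges substantially from the paper's and misses the key lemma that makes the argument short. The paper does not prove two inequalities separately. It observes that $S[i],S[i+1]$ are opening and $S[j-1],S[j-2]$ are closing (so $\D[i,i+1]=\D[i,i+2]=\D[j-1,j]=\D[j-2,j]=1$), whence any split $m\in\{i+1,i+2,j-2,j-1\}$ already yields one of the five forms; it then applies \cref{lm:mid_point} to deduce that the only other candidate split points are in $M$, and hence (since $(a\dd d)\cap V\subseteq(b\dd c)$) in $[b\dd c]$; finally it rules these out by invoking \cref{claim:trapezoids} on the tall trapezoid $(b-2k-1,b-1,c+1,c+2k+1)$, since a split $m\in[b\dd c]$ with $\D[i,m]\le k$ would force $(i,j)\in[b-2k\dd c+2k]^2$, contradicting the hypothesis. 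Your plan re-derives a shifting argument from scratch instead of reusing \cref{lm:mid_point}, never invokes \cref{claim:trapezoids} to kill splits inside the trapezoid, and leaves the lower-bound direction (that one of the five options actually attains $\D[i,j]$) at the level of an announced plan. There is also a small slip: on the pure-opening prefix one has $\D[i,m]=\lceil(m-i)/2\rceil$, not $m-i$, since adjacent opening parentheses can be repaired by a single substitution rather than two deletions.
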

\begin{proof}
Note that $\D[i,i+2] = \D[i,i+1] = \D[j-1,j]=\D[j-2,j]=1$, since for each computation all the parentheses that matter are of the same orientation. Hence, it suffices to prove that
if $\D[i,j]=\D[i,m] + \D[m,j]$ then, without loss of generality, $m\in \{i+1,i+2,j-2,j-1\}$, since, for example, if $m=i+1$ then $\D[i,j]=\D[i,i+1] + \D[i+1,j] = 1+\D[i+1,j]$.
By \cref{lm:mid_point}, the only other possibility is that $m\in M$
and, since $(a\dd d)\cap V \sub (b\dd c)$, this implies that $m\in [b\dd c]$.
However, $\D[i,m]\le k$ would then contradict \cref{claim:trapezoids} for a trapezoid $(b-2k-1,b-1,c+1,c+2k+1)$.
\end{proof}

\begin{figure}
\begin{center}
\begin{tikzpicture}[scale=0.5]
\draw (0,0) rectangle (10,10);
\draw[fill=green,fill opacity=0.5] (0,10) rectangle (4,6);

\draw[fill=green,fill opacity=0.5] (10,0) rectangle (6,4);

\draw[fill=red] (6,6) rectangle (7,7);

\draw[yellow,very thick,->] (6,7)--(9,10);
\draw[yellow,very thick,->] (6.5,7)--(9,9.5);
\draw[yellow,very thick,->] (7,7)--(9,9);
\draw[yellow,very thick,->] (7,6.5)--(9.5,9);
\draw[yellow,very thick,->] (7,6)--(10,9);

\draw (3,7) rectangle (7,3);
\draw[fill=blue] (9,9) rectangle (10,10);

\node[above,rotate=45] at (0.25,10) {$a$};
\node[above,rotate=45] at (3.75,10.55) {$b-2k$};
\node[above,rotate=45] at (4.25,10) {$b$};
\node[above,rotate=45] at (6.25,10) {$c$};
\node[above,rotate=45] at (7.5,10.55) {$c+2k$};
\node[above,rotate=45] at (9.75,10.55) {$d-2k$};
\node[above,rotate=45] at (10.25,10) {$d$};

\node[left] at (0,9) {$a+2k$};
\node[left] at (0,7) {$b-2k$};
\node[left] at (0,6) {$b$};
\node[left] at (0,4) {$c$};
\node[left] at (0,3) {$c+2k$};
\node[left] at (0,0.25) {$d$};
\end{tikzpicture}
\end{center}
\caption{The submatrix of $\D$ corresponding to a tall maximal trapezoid $T = (a,b,c,d)$. Green squares correspond to the values $\D[i,j]$ with $(i,j) \in [a\dd b] ^2 \cup [c\dd d]^2$, for which the values are known. Except for the green squares, only the values in the red square, yellow diagonals, and the blue square can be smaller or equal to $k$.  The processing of $T$ must compute the values in the blue square given the values in the red square.}
\label{fig:submatrix}
\end{figure}
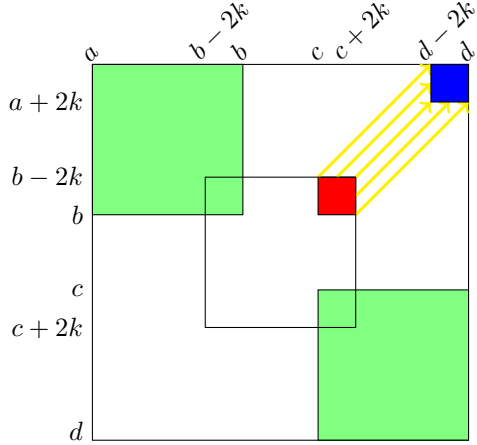

Consider an entry $\D[i,j] \le k$, where $(i,j) \in [a\dd d]^2 \sm  [b-2k\dd c+2k]^2$. We have either $(i,j) \in [a,b-2k)^2 \cup (c+2k,d]^2$ or, by~\cref{fact:height_diff}, $i+j = b+c+\delta$, where $\delta \in [-2k \dd 2k]$. Let us denote $\delta_+:=\max(0,\delta)$ and
$\delta_- := \min(0,\delta)$. For a fixed $\delta \in [-2k \dd 2k]$, we call the set of entries $(i,j)  \in [a\dd d]^2 \sm  [b-2k\dd c+2k]^2$ such that $i+j = b+c+\delta$ \emph{the diagonal $\delta$. } Let us show that the values $\D[i,j]$ are monotone along each diagonal:

\begin{claim}\label{claim:monotonicity}
	For all $(i,j) \in [a\dd d]^2 \sm [b-2k\dd c+2k]^2$ such that $\D[i,j] \le k$, we have $\D[i,j] \ge \D[i+1,j-1]$.
\end{claim}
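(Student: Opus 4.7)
The approach is to apply the recursion of \cref{claim:diag} to both $\D[i, j]$ and $\D[i+1, j-1]$ and then combine the resulting inequalities with an inductive application of the monotonicity claim itself at shifted indices. Applying \cref{claim:diag} to $\D[i+1, j-1]$ yields, among others, the three upper bounds
\[
\D[i+1, j-1] \le \D[i+3, j-1] + 1, \quad \D[i+1, j-1] \le \D[i+1, j] + 1, \quad \D[i+1, j-1] \le \D[i+1, j+1] + 1.
\]
To prove $\D[i,j] \ge \D[i+1, j-1]$, since $\D[i, j]$ equals one of the five terms on the right-hand side of its own recursion in \cref{claim:diag}, it suffices to handle each of these five cases.

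The cases $\D[i, j] = \D[i+1, j-1] + \ED(S[i]S[j-1])$ and $\D[i, j] = \D[i+1, j] + 1$ follow immediately from $\ED \ge 0$ and the second displayed bound, respectively. If $\D[i, j] = \D[i+2, j] + 1$, then $\D[i+2, j] = \D[i, j] - 1 < \D[i, j]$, so the monotonicity claim holds at $(i+2, j)$ by induction, giving $\D[i+2, j] \ge \D[i+3, j-1]$; combined with the first displayed bound, $\D[i, j] \ge \D[i+3, j-1] + 1 \ge \D[i+1, j-1]$. The remaining two cases, $\D[i, j] = \D[i, j+1] + 1$ and $\D[i, j] = \D[i, j+2] + 1$, are handled analogously: apply induction at $(i, j+1)$ to get $\D[i, j+1] \ge \D[i+1, j]$ and combine with the second bound, or apply induction at $(i, j+2)$ to get $\D[i, j+2] \ge \D[i+1, j+1]$ and combine with the third bound.

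The main obstacle is that the induction must be on $\D[i, j]$ (so that each shifted $\D$-value, which equals $\D[i, j] - 1$ whenever the corresponding term achieves the minimum, is strictly smaller and so enables the inductive hypothesis), while also ensuring the shifted pair lies inside $[a, d]^2 \sm [b - 2k, c + 2k]^2$ and satisfies $\D \le k$ so that \cref{claim:diag} applies at the shifted point. Boundary cases in which the shifted pair slips into the excluded inner square, or has $\D > k$, will be handled separately by direct arguments using the structural facts of the trapezoid---that $S[a \dd b)$ is all opening parentheses, $S[c \dd d)$ is all closing parentheses, and $H(m) \ge H(b) = H(c)$ for $m \in [b, c]$---together with \cref{fact:height_diff} on height differences.
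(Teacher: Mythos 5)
Your argument has several genuine problems, the most serious being a circularity in the use of \cref{claim:diag}.

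First, applying \cref{claim:diag} at the index pair $(i+1,j-1)$ requires $\D[i+1,j-1]\le k$ and $(i+1,j-1)\in [a\dd d]^2\sm [b-2k\dd c+2k]^2$, but $\D[i+1,j-1]\le k$ is essentially what we are trying to establish (since the goal is $\D[i+1,j-1]\le\D[i,j]\le k$). Without an independent bound on $\D[i+1,j-1]$, the three displayed inequalities do not come for free from \cref{claim:diag}; at best the first one, $\D[i+1,j-1]\le\D[i+3,j-1]+1$, holds unconditionally via $\D[i+1,i+3]=1$. The second, $\D[i+1,j-1]\le\D[i+1,j]+1$, is a bounded-difference statement, and the third, $\D[i+1,j-1]\le\D[i+1,j+1]+1$, is not justified at all: bounded difference only gives a $+2$ slack over two steps, and it does not follow from the recursion either. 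In fact, the third bound only ``looks'' available because you are reading \cref{claim:diag} literally; the proof of that claim considers split points $m\in\{i+1,i+2,j-2,j-1\}$, which yields terms $\D[i,j-1]+1$ and $\D[i,j-2]+1$, not $\D[i,j+1]+1$ and $\D[i,j+2]+1$, so the latter pair looks like a typo in the statement of \cref{claim:diag}, and \cref{lm:fix_BO} (whose case analysis uses $(i',j'-2)$ and $(i',j'-1)$) confirms this. Your entire case split on the last two options of the recursion is thus built on a misread of the recursion, and once you correct $j+1,j+2$ to $j-1,j-2$ those two cases require a different pairing of bounds.

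Second, the boundary analysis is entirely deferred. You write that cases where the shifted pair $(i+2,j)$, $(i,j-1)$, etc.\ falls into $[b-2k\dd c+2k]^2$ or has $\D>k$ ``will be handled separately by direct arguments,'' but you never give those arguments. These are exactly the cases the inductive step cannot reach, and for your strategy they are not at all routine, so the proof is incomplete as written. Finally, to compare with the paper: the paper's proof of \cref{claim:monotonicity} does not invoke \cref{claim:diag} at all. It works directly from the basic recursion \cref{eq:basic_recursion}, choosing the smallest split point $q$ of $\D[i,j]$ and a largest split point $r$ in the sub-instance $\D[q,j]$, and then carefully reassembling a decomposition of $\D[i+1,j-1]$ whose cost is at most $\D[i,j]$. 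This avoids both the circularity and the boundary headaches. If you want to salvage a \cref{claim:diag}-based argument you would need to (i) use the corrected recursion with $\D[i,j-1]+1$ and $\D[i,j-2]+1$, (ii) replace uses of \cref{claim:diag} at $(i+1,j-1)$ with direct consequences of \cref{eq:basic_recursion} that do not presuppose $\D[i+1,j-1]\le k$, and (iii) actually carry out the boundary cases.
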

\begin{proof}
	Let $q\in (i\dd j]$ be the smallest value such that $\D[i,j]=\D[i,q]+\D[q,j]$.
	We consider four cases:
	\begin{enumerate}
		\item $q=j$. In this case, $\D[i,j] = \ED(S[i]S[j-1]) + \D[i+1,j-1]\ge \D[i+1,j-1]$.
		\item $q=j-1$. In this case, $\D[i,j] = \ED(S[i]S[j-2]) + \D[i+1,j-2]+\D[j-1,j] \ge \D[i+1,\allowbreak {j-2}]+\D[j-2,j-1] \ge \D[i+1,j-1]$.
		\item $q = i+1$. Let $r\in [q\dd j)$ be the largest value such that $\D[q,j]=\D[q,r]+\D[r,j]$.
		We consider two subcases:
		\begin{enumerate}
			\item $r = j-1$. In this case, $\D[i,j] = \D[i,i+1] + \D[i+1,j-1] + \D[j-1,j] \ge \D[i+1,j-1]$.
			\item $r < j-1$. In this case, $\D[i,j] = \D[i,i+1] + \D[i+1,r] + \ED(S[r]S[j-1]) + \D[r+1,j-1]
			\ge \D[i+1,r] + \D[r,r+1] + \D[r+1,j-1] \ge \D[i+1,j-1]$.
		\end{enumerate}
		\item $i+1 < q < j-1$. Again, let $r\in [q\dd j)$ be the largest value such that $\D[q,j]=\D[q,r]+\D[r,j]$.
		We consider two subcases:
		\begin{enumerate}
			\item $r = j-1$. In this case,  $\D[i,j] = \ED(S[i]S[q-1]) + \D[i+1,q-1] + \D[q,j-1] +\D[j-1,j]
			\ge \D[i+1,q-1]+ \D[q-1,q] + \D[q,j-1]\ge \D[i+1,j-1]$.
			\item $r < j-1$. In this case,  $\D[i,j] = \ED(S[i]S[q-1]) + \D[i+1,q-1] + \D[q,r] + \ED(S[r]S[j-1])
			+ \D[r+1,j-1] \ge \D[i+1,q-1] + \ED(S[q-1]S[r]) + \D[q,r] + \D[r+1,j-1] \ge \D[i+1,q-1]+\D[q-1,r+1]+\D[r+1,j-1]\ge \D[i+1,j-1]$, where $\ED(S[i]S[q-1]) + \ED(S[r]S[j-1]) \ge 1 \ge \ED(S[q-1]S[r])$ follows from the fact that $S[q-1]$ is an opening parenthesis or $S[r]$ is a closing parenthesis (otherwise, we would have $q,r\in (b\dd c)$, which contradicts $\D[i,q]\le k$ and $\D[r,j]\le k$ by \cref{claim:trapezoids} applied to $(b-2k,b,c,c+2k)$).
		\end{enumerate}
	\end{enumerate}
\end{proof}

For the diagonal $\delta$ and values $v \in [0\dd k]$, we shall compute values $L_v[\delta] =  \max\{ j\in [c+2k+\delta_-\dd d-2k+\delta_+] : \D[b+c+\delta-j,j] \le  v\}$, assuming $\max \emptyset = -\infty$. From the resulting tables $L_v[\delta]$, we can determine $\min(\D[i,j],k+1)$ for all $i,j \in [a\dd a+2k] \cup [d-2k\dd d]$ in $\Oh(k^2)$ time by~\cref{claim:diag}. To compute the values $L_v[\delta]$, we rely on the fact that, after $\Oh(n)$-time preprocessing, the longest common prefix of any two substrings of $S\overline{S}$ can be computed in $\Oh(1)$-time~\cite{10.1007/11780441_5}.\footnote{\vphantom{$2^{2^2}$}Here, $\overline{S}$ denotes the reverse complement of $S$, obtained by reversing $S$ and flipping the orientation of every parenthesis.}
Specifically, we use an adaptation of the Landau--Vishkin method~\cite{LV97,LMS98} generalizing a similar subroutine present in~\cite{BO16}. Our procedure is implemented as~\cref{alg:LV}, which assumes that $L_v[s]=-\infty$ for all out-of-bounds and uninitialized entries.

\begin{algorithm}\label{alg:LV}
	\begin{algorithmic}[1]\caption{Processing trapezoids.}
		\FOR{$v := 0$ to $k$}
		\FOR{$\delta := -2v$ to $2v$}
		\STATE $L'_v[\delta] := \min(d-2k+\delta_+,\max(L_{v-1}[\delta-2]+2, L_{v-1}[\delta-1]+2,L_{v-1}[\delta]+1, L_{v-1}[\delta+1], L_{v-1}[\delta+2])$
		\IF{$\D[b-2k+\delta_+,c+2k+\delta_-]\le v$}
		\STATE $L'_v[\delta]:= \max(L'_v[\delta], c+2k+\delta_-)$ 
		\ENDIF
		\STATE $L_v[\delta] := L'_v[\delta]+ \lcp(\overline{S[a\dd b+c+\delta - L'_v[\delta])},S[L'_v[\delta]\dd d))$
		\ENDFOR
		\ENDFOR
	\end{algorithmic}
\end{algorithm}

\begin{lemma}\label{lm:fix_BO}
	For all $(i,j) \in ([a \dd b]\times [c\dd d])\setminus ([b-2k\dd b]\times [c\dd c+2k])$ and $v\in [0\dd k]$, we have $\D[i,j]\le v$ if and only if the value $L_v[(i+j)-(b+c)]$ computed by \cref{alg:LV} is at least $j$.
\end{lemma}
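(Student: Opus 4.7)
The plan is to establish both directions of the equivalence simultaneously by induction on $v$, following the Landau--Vishkin furthest-reach paradigm \cite{LV97,LMS98} adapted to the $\Dyck$ setting via \cref{claim:diag} and \cref{claim:monotonicity}. For the base case $v=0$, the outer loop only processes $\delta=0$; the trapezoid identity $\D[b,c]=0$ triggers the conditional to give $L'_0[0]=c$, and the lcp step produces $L_0[0]=c+\lcp(\overline{S[a\dd b)},S[c\dd d))$, which is exactly the maximal $j$ on diagonal $0$ for which $S[b+c-j\dd j)$ is balanced (extending $S[b\dd c)$ outward preserves balancedness precisely while matching type-identical pairs can be peeled at both ends).

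For the inductive step assume the claim for $v-1$. For the ``$\Leftarrow$'' direction I trace how the algorithm sets $L_v[\delta]=L'_v[\delta]+t$, where $t$ is the lcp length. The value $L'_v[\delta]$ arises either from the $\D[b-2k+\delta_+,c+2k+\delta_-]\le v$ conditional (supplied as input to the trapezoid processor) or from one of the five $L_{v-1}[\cdot]$ terms in the max. Each such term corresponds to a unit-cost Dyck edit: for instance, $L_{v-1}[\delta+2]\ge j^*$ gives $\D[i^*+2,j^*]\le v-1$, hence $\D[i^*,j^*]\le \D[i^*,i^*+2]+\D[i^*+2,j^*]\le 1+(v-1)=v$ (since $S[i^*]$ and $S[i^*+1]$ are both opening, so $\D[i^*,i^*+2]=1$), while the mirror term $L_{v-1}[\delta-2]+2\ge j^*$ exploits $\D[i^*,j^*-2]\le v-1$ with $\D[j^*-2,j^*]=1$ (valid as $S[j^*-2]$ and $S[j^*-1]$ are both closing, given $j^*>c+2k$); the remaining three terms encode single-character substitutions or deletions. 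Each step of the lcp extension peels a matched type-identical pair, preserving $\D$ by the $\ED=0$ clause of \cref{claim:diag}.

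For ``$\Rightarrow$'' I walk inward along diagonal $\delta$ while $S[i+\ell]$ and $S[j-1-\ell]$ form a matching type-identical pair, stopping at $(i^*,j^*):=(i+m,j-m)$. By \cref{claim:monotonicity} together with the $\ED=0$ matching step, $\D[i^*,j^*]=\D[i,j]\le v$. Either $(i^*,j^*)$ hits the input boundary $(b-2k+\delta_+,c+2k+\delta_-)$, giving $L'_v[\delta]\ge j^*$ via the conditional branch, or $\ED(S[i^*]S[j^*-1])\ge 1$; in the latter case \cref{claim:diag} at $(i^*,j^*)$ (applicable by \cref{claim:trapezoids} because $\D[i^*,j^*]\le k$) produces a cell $(i'',j'')$ on diagonal $\delta''\in\{\delta-2,\delta-1,\delta,\delta+1,\delta+2\}$ with $\D[i'',j'']\le v-1$. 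By the inductive hypothesis $L_{v-1}[\delta'']\ge j''$, and the corresponding term in the max forces $L'_v[\delta]\ge j^*$; the lcp extension then covers the $m$ peeled pairs, yielding $L_v[\delta]\ge j^*+m=j$.

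The main obstacle is the careful case analysis mapping each term of \cref{claim:diag} (including the right-endpoint-symmetric versions $\D[i,j-1]+1$ and $\D[i,j-2]+1$, obtained by combining \cref{claim:diag} with the identities $\D[j-1,j]=1$ and $\D[j-2,j]=1$) to the correct term in the algorithm's max, tracking the asymmetric offsets such as $L_{v-1}[\delta-2]+2$ versus $L_{v-1}[\delta+2]$, and verifying the edge cases at the diagonal range boundaries (especially the rounding effects of $\min(d-2k+\delta_+,\cdot)$ and the validity of the ``input boundary'' case).
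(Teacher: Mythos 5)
Your proposal follows essentially the same approach as the paper: both directions are proved by induction on $v$ by tracking how each term in the $\max$ of \cref{alg:LV} line~3 (or the line~5 conditional) corresponds, via \cref{claim:diag}, to a unit-cost Dyck edit one diagonal over, while \cref{claim:monotonicity} justifies passing to the furthest reach and the $\lcp$ extension handles the $\ED=0$ matching steps. The case breakdown you outline is exactly what the paper carries out.

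However, your base case is wrong as stated. You claim ``the trapezoid identity $\D[b,c]=0$ triggers the conditional to give $L'_0[0]=c$,'' but neither part holds. The trapezoid definition only constrains heights ($H(b)=H(c)$ and $H(m)\ge H(b)$ for $m\in[b\dd c]$); it says nothing about parenthesis \emph{types}, so $\D[b,c]$ need not be $0$ (e.g.\ $S[b\dd c)=(\,\{\,)\,\}$ satisfies the height constraint but has nonzero Dyck distance). Moreover, line~4 of \cref{alg:LV} tests $\D[b-2k+\delta_+,c+2k+\delta_-]\le v$ --- a value supplied as \emph{input} to the trapezoid routine --- and line~5 then raises $L'_v[\delta]$ to $c+2k+\delta_-$, not to $c$. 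So for $v=0$, $\delta=0$, the base case is driven by whether the input value $\D[b-2k,c+2k]$ is $0$, with $L'_0[0]=c+2k$ if so and $-\infty$ otherwise. This is an isolated slip --- your treatment of the inductive step does reference the correct boundary $(b-2k+\delta_+,c+2k+\delta_-)$ --- but as written the base case does not establish what it claims and should be reworked to invoke the input values rather than a nonexistent structural identity.
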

\begin{proof}
	First, let us prove that $\D[i,j]\le v$ implies $L_v[\delta] \ge j$, where $\delta = (i+j)-(b+c)$.
	We proceed by induction on $v$. Let $S[i\dd i')=\overline{S[j'\dd j)}$ be the longest common prefix of $S[i\dd b-2k+\delta_+)$ and $\overline{S[c+2k+\delta_-\dd j)}$.
	We shall prove $L'_v[\delta] \ge j'$, where $\delta = (i+j)-(b+c)$, by considering six cases:
	\begin{itemize}
		\item If $j' = c+2k+\delta_-$ and $i'=b-2k+\delta_+$, then $\D[i',j']\le v$ by \cref{claim:monotonicity}. Hence, $L'_{v}[\delta] \ge c+2k+\delta_- = j'$.
	\end{itemize}
Otherwise, by~\cref{claim:diag}, it suffices to consider one of the remaining five cases:	
	\begin{itemize}
		\item If $\D[i',j']=\D[i',j'-2]+1$, then the inductive assumption yields $L'_v[\delta] \ge L_{v-1}[\delta-2]+2 \ge (j'-2)+2 = j'$.
		\item If $\D[i',j']=\D[i',j'-1]+1$, then the inductive assumption yields $L'_v[\delta] \ge L_{v-1}[\delta-1]+1 \ge (j'-1)+1 = j'$.
		\item If $\D[i',j']=\D[i'+1,j'-1]+1$, then the inductive assumption yields $L'_v[\delta] \ge L_{v-1}[\delta]+1 \ge (j'-1)+1 = j'$.
		\item If $\D[i',j']=\D[i'+1,j']+1$, then the inductive assumption yields $L'_v[\delta] \ge L_{v-1}[\delta+1] \ge j'$.
		\item If $\D[i',j']=\D[i'+2,j']+1$, then the inductive assumption yields $L'_v[\delta] \ge L_{v-1}[\delta+2] \ge j'$.
	\end{itemize}
	In all cases $L_v[\delta]\ge j$ follows from $L'_v[\delta]\ge j'$ due to $\lcp(\overline{S[a\dd i')},S[j'\dd d))\ge j-j'$.
	
	The converse implication is also proved by induction on $v$.
	Let $\delta = (i+j)-(b+c)$, $j' = L'_v[\delta]$ and $i' = i+j-j'$.
	By \cref{claim:monotonicity}, we can assume $j=L_v[\delta]$ without loss of generality.
	Moreover, since $\overline{S[i\dd i')}=S[j'\dd j)$, we have $\D[i,j]\le \D[i',j']$.
	Hence, it suffices to prove  $\D[i',j']\le v$. For this, we consider six cases:
	\begin{itemize}
		\item If $j'=c+2k+\delta_-$, then the algorithm explicitly checked $\D[i',j']\le v$ while setting $j'$.
		\item If $j' = L_{v-1}[\delta-2]+2$, then $\D[i',j'] \le \D[i',j'-2]+1\le (v-1)+1=v$ by the inductive assumption.
		\item If $j' = L_{v-1}[\delta-1]+1$, then $\D[i',j'] \le \D[i',j'-1]+1\le (v-1)+1=v$ by the inductive assumption.
		\item If $j' = L_{v-1}[\delta]+1$, then $\D[i',j'] \le \D[i'-1,j'-1]+1\le (v-1)+1=v$ by the inductive assumption.
		\item If $j' = L_{v-1}[\delta+1]$, then $\D[i',j'] \le \D[i'+1,j']+1\le (v-1)+1=v$ by the inductive assumption.
		\item If $j' = L_{v-1}[\delta+2]$, then $\D[i',j'] \le \D[i'+2,j']+1\le (v-1)+1=v$ by the inductive assumption.
	\end{itemize}
\end{proof}

\cref{lm:fix_BO} shows correctness of \cref{alg:LV}.  From the description of \cref{alg:LV}  it follows that the tables $L_v[\delta]$ and hence the entries $\min(\D[i,j],k+1)$ for all $i,j \in [a\dd a+2k] \cup [d-2k\dd d]$ can be computed in $\Oh(k^2)$ time.

\begin{theorem}\label{th:k_to_five}
Given a sequence $S$ of parentheses of length $n$, $\min\{\ED(S), k+1\}$ can be computed in $\Oh(n+k^5)$ time.
\end{theorem}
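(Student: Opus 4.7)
The plan is to assemble \cref{th:k_to_five} directly from the machinery already developed: preprocess the input, build the tree $\T$, and then perform a post-order traversal using \cref{alg:LV} at trapezoid nodes and the recursion from \cref{clm:enough} at cluster nodes. Throughout, every value of $\D$ is stored capped at $k+1$, so entries arising from pairs with $\D > k$ do not corrupt later computations (recurrences used for correctness are conditioned on $\D \le k$, and the capped version is monotone in its inputs).

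First, apply \cref{claim:number_of_valleys} in $\Oh(n)$ time. If the preprocessing rejects, return $k+1$; otherwise replace $S$ by the string $S'$, which preserves $\ED(S)$ and has at most $2k$ valleys, so all combinatorial lemmas of \cref{sec:alg} apply. Next, invoke the construction in \cref{lm:tree} to build the tree $\T$, together with the set $V$ of valleys and, for each tall maximal trapezoid $T=(a,b,c,d)$, the index ranges $[a\dd a+2k]$, $[d-2k\dd d]$, $[b-2k\dd b]$, $[c\dd c+2k]$; this costs $\Oh(n)$. We also precompute the $\Oh(n)$-time suffix-data structure of~\cite{10.1007/11780441_5} on $S\overline{S}$ so that \cref{alg:LV} has $\Oh(1)$-time \textsf{lcp} queries. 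If at any point the root cluster turns out not to contain both $0$ and $n$, or the preliminary checks indicate $\D[0,n]>k$, we simply return $k+1$; otherwise we proceed.

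Now traverse $\T$ in post-order. At a trapezoid node $T=(a,b,c,d)$, the cluster child (if any) has already produced the values $\min\{\D[i,j],k+1\}$ for $i,j\in[b-2k\dd b]\cup[c\dd c+2k]$ (if $T$ is a leaf, these values are either trivial or fall under the same-orientation case). We then run \cref{alg:LV}, whose correctness is \cref{lm:fix_BO} combined with \cref{claim:diag,claim:monotonicity}, to output $\min\{\D[i,j],k+1\}$ for $i,j\in[a\dd a+2k]\cup[d-2k\dd d]$ in $\Oh(k^2)$ time. At a cluster node $C$, its trapezoid children have already produced exactly the inputs required by the definition of $E(C)$, and we apply the recursion of \cref{clm:enough}, restricted to midpoints in $M(C)\cup\{i+1,i+2,j-2,j-1\}$, in order of increasing $j-i$ for pairs $(i,j)\in E(C)^2\setminus\bigcup_q[a_q\dd d_q]^2$. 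This takes $\Oh(|E(C)|^2(1+|M(C)|))$ time. Finally, the root of $\T$ is a cluster containing both $0$ and $n$ by \cref{lm:tree}, so its processing returns $\min\{\D[0,n],k+1\}$, which is the required answer.

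It remains to bound the total cost. By \cref{lm:size_trapezoids_clusters}, there are $\Oh(k)$ tall maximal trapezoids, so trapezoid processing contributes $\Oh(k)\cdot\Oh(k^2)=\Oh(k^3)$. For clusters, the discussion following \cref{clm:enough} observes that $\sum_C|E(C)|=\Oh(k^2)$ (using $|E(C)|\le|C|+(4k+2)(r_C+1)$ together with $\sum_C|C|=\Oh(k^2)$ and $\sum_C r_C=\Oh(k)$), and that $\sum_C|M(C)|=\Oh(k)$; bounding the quadratic term crudely by $\bigl(\sum_C|E(C)|\bigr)^2$ gives $\Oh(k^4)\cdot\Oh(k)=\Oh(k^5)$ total cluster work. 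Combined with the $\Oh(n)$ preprocessing, the overall running time is $\Oh(n+k^5)$. The main subtlety, and the step most worth double-checking, is the matching of inputs and outputs across the two processing routines at the boundary between a trapezoid and its adjacent clusters, particularly ensuring that the ranges $[a\dd a+2k]\cup[d-2k\dd d]$ produced by a trapezoid suffice as the inputs assumed by \cref{clm:enough} for the parent cluster, and that the set $[b-2k\dd b]\cup[c\dd c+2k]$ assumed by \cref{alg:LV} is exactly what the child cluster outputs; both follow from the explicit description of $E(C)$.
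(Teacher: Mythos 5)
Your proof follows the same approach as the paper's: preprocess via \cref{claim:number_of_valleys}, build $\T$ via \cref{lm:tree}, then traverse in post-order, processing trapezoids in $\Oh(k^2)$ time each via \cref{alg:LV} and clusters in $\Oh(|E(C)|^2(1+|M(C)|))$ time each via \cref{clm:enough}, summing to $\Oh(n+k^3)$ and $\Oh(k^5)$ respectively. Your accounting of the cluster cost --- bounding $\sum_C|E(C)|^2(1+|M(C)|)$ by $\bigl(\sum_C|E(C)|\bigr)^2\cdot\bigl(1+\sum_C|M(C)|\bigr)=\Oh(k^4)\cdot\Oh(k)$ --- is a valid (if crudely stated) instantiation of the paper's bound, and your explicit remarks about capping at $k+1$, DP ordering by increasing $j-i$, and the interface between trapezoid outputs and cluster inputs are correct clarifications of details the paper leaves implicit.
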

\begin{proof}
We start with an $\Oh(n)$-time preprocessing of $S$ according to~\cref{claim:number_of_valleys}. After this preprocessing, we can assume that $S$ contains at most $2k$ valleys. Next, we apply~\cref{lm:tree} to build a tree of clusters and trapezoids in $\Oh(n)$ time, and then we start the main phase of the algorithm. The total time used to process the clusters is $\Oh(k^5)$. The total time used to process the tall maximal trapezoids is $\Oh(n+k \cdot k^2) = \Oh(n+k^3)$. The theorem follows.
\end{proof}

\section{Valiant-like Recursion}\label{sec:recursion}
\newcommand{\complete}{\mathtt{complete}}
\newcommand{\compute}{\mathtt{compute}}
\newcommand{\update}{\mathtt{update}}

In this section, we describe how to speed up a dynamic-programming procedure that constructs
a matrix $A[0\dd n,0\dd n]$ so that, for some non-empty $M\sub [0\dd n]$:
\begin{enumerate}[label=(\arabic*)]
	\item \label{it:constant-comp} each entry $A[i,j]$ can be computed in $\Oh(1)$ time\label{it:dp}
	given ($\Oh(1)$-time access to) the remaining entries $A[i',j']$ with $i \le i' \le j' \le j$
	and the value $\min\{A[i,m]+A[m,j] : m \in (i\dd j)\cap M\}$,
	\item \label{it:BD} for every $m\in M$, subsequent entries $A[i,m]$ and subsequent entries $A[m,j]$ differ by at most~$1$.\label{it:bd}
\end{enumerate}
Our procedure is an adaptation of Valiant's approach~\cite{Valiant75} to
generalize computing a transitive closure by allowing a non-transitive operation.
The description of our adaptation follows a more recent perspective of Valiant's recursion
scheme as described by Okhotin~\cite{Okhotin14}.

The scheme recursively halves the domain $[0\dd n]$,
which yields a hierarchy of decompositions.
We call intervals comprising the $d$th decomposition \emph{level-$d$ intervals}
so that $[0\dd n]$ is the unique level-$0$ interval and
a level-$d$ interval is partitioned into two disjoint level-$(d+1)$ intervals.

The computation of $A[0\dd n,0\dd n]$ is based on two recursive procedures
that utilize an auxiliary array $P[0\dd n,0\dd n]$.
Their definitions below are adapted to our setting.
\begin{itemize}[leftmargin=*]
	\item[] $\compute(I)$: given a level-$d$ interval $I$, compute the entries $A[i,j]$ for $i,j\in I$ with $i\le j$.
	\item[] $\complete(I,J)$: given two level-$d$ intervals $I,J$ (with $I$ to the left of $J$),
	compute the entries $A[i,j]$ for $i\in I$ and $j\in J$, assuming that:
	\begin{itemize}
		\item the remaining entries $A[i,j]$ with $\min I\le i \le j \le \max J$ have already been computed,
		\item $P$ currently stores values $P[i,j]=\min\{A[i,m]+A[m,j]: m\in M \cap (\max I\dd  \min J)\}$ for each $i\in I$ and $j\in J$.
	\end{itemize}
\end{itemize}
The following procedure is the workhorse of the algorithm, allowing
speed-ups through fast matrix multiplication.
\begin{itemize}[leftmargin=*]
	\item[] $\update(I,K,J)$: given three distinct level-$d$ intervals $I,K,J$, set $P[i,j] := \min\{P[i,j],\min\{A[i,m]+A[m,j]: m\in K\cap M)\}\}$ for $i\in I$ and $j\in J$, assuming that all entries $A[i,m]$ and $A[m,j]$ with
	$i\in I$, $j\in J$, and $m\in K$ have already been computed.
\end{itemize}

In the original hierarchical decomposition, each level-$d$ interval of length at least $2$ is partitioned into two equal halves, whereas intervals of length $1$ form the recursion base case.
In this work, our goal is to make the algorithm more efficient when $|M|$ is much smaller than $n$,
and we need to control both $|I|$ and $|I\cap M|$ for the intervals $I$ in the decompositions.
Moreover, we do not need to partition intervals $I$ with $|I\cap M|=\Oh(1)$.
To facilitate these goals, we define the \emph{weight} of an interval $I\sub [0\dd n]$ as \[w(I):=|I\cap M|+\frac{|I\sm M|\cdot |M|}{n}\]
Note that the weight of $I$ is the sum of the singleton intervals constituting it. We also denote $W = w([0\dd n])$; note that $|M| \le W \le 2|M|$.
For $0\le d \le \floor{\log_2 W}$, we define a level-$d$
decomposition of $[0\dd n]$ into disjoint intervals.
The only level-$0$ interval is $[0\dd n]$.
For $0 \le d < \floor{\log_2 W}$, each level-$d$ interval $I$
is decomposed into two level-$(d+1)$ intervals $I_{\colindex}$ and $I_r$ (the left and right subinterval)
so that $w(I_\colindex)$ and $w(I_r)$ are as balanced as possible.
Since the weight of a singleton does not exceed $1$,
we can achieve $w(I_\colindex),w(I_r)\le \frac12(w(I)+1)$.
A simple induction shows that, for every level-$d$ interval $I$, we have
$w(I)\le 2^{-d}(W-1)+1$.
Consequently, $|I\cap M| \le w(I)\le 2^{1-d}W$,  $|I\cap M| \le w(I)\le 2^{2-d}|M|$, and $|I| \le \frac{n}{|M|}w(I) \le 2^{2-d}n$.

\begin{lemma}
The $\update$ function for level-$d$ intervals, $2 \le d\le \floor{\log_2 W}$, costs $\Oh((2^{-d} n)^{\bar{\omega}})$ time, where $\bar{\omega}$ is the exponent of \cref{thm:rectangular-min-plus} for $\alpha = \log_n |M|$.
\end{lemma}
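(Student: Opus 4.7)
The plan is to recognize $\update(I,K,J)$ as an instance of the $(\min,+)$ product of two bounded-difference rectangular matrices, and then invoke \cref{thm:rectangular-min-plus}. Specifically, let $L$ be the $|I|\times|K\cap M|$ matrix with entries $L[i,m]:=A[i,m]$, and let $R$ be the $|K\cap M|\times|J|$ matrix with entries $R[m,j]:=A[m,j]$; these entries are available by the precondition of $\update$. Then, for every $(i,j)\in I\times J$, the value we need to contribute to $P[i,j]$ is exactly $(L\star R)[i,j]$, so it suffices to compute this $(\min,+)$ product and take the entry-wise minimum with the current contents of $P$.

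Next I would verify the structural hypothesis of \cref{thm:rectangular-min-plus}. By property~\ref{it:BD}, for each fixed $m\in M$, consecutive values $A[i,m]$ (varying $i$) differ by at most $1$, so within every column of $L$ consecutive entries differ by at most $1$; hence $L$ is column-BD. Symmetrically, consecutive values $A[m,j]$ (varying $j$) differ by at most $1$, so $R$ is row-BD. Thus \cref{thm:rectangular-min-plus} applies to $L$ and $R$ (possibly after padding, which I discuss below).

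For the dimensions, recall the weight bound: every level-$d$ interval $I$ satisfies $|I|\le 2^{2-d}n$ and $|I\cap M|\le 2^{2-d}|M|$. Set $N:=2^{2-d}n$, so $|I|,|J|\le N$. Since $|M|=n^\alpha$ with $\alpha=\log_n|M|\in[0,1]$, the inner dimension satisfies
\[
|K\cap M|\;\le\;2^{2-d}|M|\;=\;(2^{2-d})^{1-\alpha}\cdot (2^{2-d}n)^\alpha\;=\;(2^{2-d})^{1-\alpha}\cdot N^\alpha\;\le\;N^\alpha,
\]
where the last inequality uses $d\ge 2$ so that $2^{2-d}\le 1$. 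Hence the matrices fit within the $N\times N^\alpha$ and $N^\alpha\times N$ frame of \cref{thm:rectangular-min-plus}. To match the exact dimensions, I would pad $L$ and $R$ by duplicating the final row (respectively, column), which trivially preserves column-BD and row-BD without affecting the relevant output entries. Applying \cref{thm:rectangular-min-plus} yields a runtime of $\Ohtilde(N^{\bar\omega})=\Ohtilde((2^{-d}n)^{\bar\omega})$, absorbing the constant factor in $2^{2-d}$.

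The only mildly delicate step is this dimension-matching: one has to be careful that the padding does not break bounded-difference (so ruling out naive $+\infty$ padding), and that $|K\cap M|\le N^\alpha$ indeed holds uniformly for $d\ge 2$, which forces the lower bound on $d$ in the statement. The conceptual content, however, is just that properties~\ref{it:constant-comp}--\ref{it:BD} reduce $\update$ to a single rectangular BD $(\min,+)$ product on matrices of the appropriate sizes, so the runtime of \cref{thm:rectangular-min-plus} transfers immediately.
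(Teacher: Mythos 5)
Your proof is correct and takes essentially the same approach as the paper: recognizing $\update(I,K,J)$ as a $(\min,+)$ product of a column-BD $|I|\times|K\cap M|$ matrix with a row-BD $|K\cap M|\times|J|$ matrix, verifying the BD properties from condition~\ref{it:BD}, bounding the dimensions via the weight decomposition, and invoking \cref{thm:rectangular-min-plus}. The extra care you take with the dimension-matching (showing $|K\cap M|\le N^\alpha$ explicitly, and replacing naive $+\infty$ padding with row/column duplication to preserve bounded difference) is a reasonable elaboration of a detail the paper leaves implicit, but it does not change the argument.
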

\begin{proof}
This operation can be implemented via a min-plus product of a $|I|\times |K\cap M|$ matrix,
storing $A[i,m]$ for $i\in I$ and $m\in K\cap M$,
with a $|K\cap M|\times |J|$ matrix, storing $A[m,j]$ for $m\in K\cap M$ and $j\in J$.
By Property~\ref{it:BD}, the first matrix is $1$-column-BD, and the second matrix is $1$-row-BD, so
\cref{thm:rectangular-min-plus} can be applied to compute the product.
Note that $|I|,|J| \le 2^{2-d}n$ and $|K\cap M| \le 2^{2-d}|M| \le (2^{2-d}n)^{\log_n |M|}$.
Consequently, the product can be computed in $\Oh((2^{2-d} n)^{\bar{\omega}})$ time.
\end{proof}

\begin{lemma}
	The $\complete$ function for level-$d$ intervals, $1\le d \le \floor{\log_2 W}$, costs $\Oh((2^{-d}n)^{\bar{\omega}})$ time, where $\bar{\omega}$ is the exponent of \cref{thm:rectangular-min-plus} for $\alpha = \log_n |M|$.
\end{lemma}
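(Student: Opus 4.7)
The plan is a divide-and-conquer recursion on the pair $(I,J)$ in the spirit of Valiant's parser, adapted to the present setting. If $d < \floor{\log_2 W}$, I split $I = I_\ell \sqcup I_r$ and $J = J_\ell \sqcup J_r$ into their level-$(d{+}1)$ subintervals and invoke $\complete$ on the four quadrants in an order consistent with the dependencies dictated by Property~\ref{it:constant-comp}: first $(I_r,J_\ell)$, then $(I_\ell,J_\ell)$, then $(I_r,J_r)$, and finally $(I_\ell,J_r)$. The first call is immediate because $\max I_r = \max I$ and $\min J_\ell = \min J$, so the incoming invariant on $P$ already has the correct range.

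Between the recursive calls I insert $\update$ operations that restore the $P$-invariant for the next quadrant. For a target quadrant $(I',J')$ the invariant demands that $P[i,j]$ range over $m \in M \cap (\max I', \min J')$, and this set decomposes as the initial range $M \cap (\max I, \min J)$ together with $I_r \cap M$ (whenever $I' = I_\ell$) and/or $J_\ell \cap M$ (whenever $J' = J_r$). Tracking which of $I_r \cap M$ and $J_\ell \cap M$ must be added for each quadrant shows that exactly four level-$(d{+}1)$ $\update$ calls suffice in total: $\update(I_\ell, I_r, J_\ell)$ before $\complete(I_\ell, J_\ell)$; $\update(I_r, J_\ell, J_r)$ before $\complete(I_r, J_r)$; and both $\update(I_\ell, I_r, J_r)$ and $\update(I_\ell, J_\ell, J_r)$ before $\complete(I_\ell, J_r)$. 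Verifying that these four updates collectively restore the invariant, together with checking that all entries of $A$ needed inside every recursive call have been produced by the time they are read, is the step that requires genuine case analysis and will be the main obstacle to a fully rigorous proof.

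The base case is $d = \floor{\log_2 W}$. The weight bound $w(I) \le 2^{1-d}(W-1) + 1 = \Oh(1)$ forces $|I \cap M|, |J \cap M| = \Oh(1)$, so for every $(i,j) \in I \times J$ the contributions from $m \in ((i\dd \max I] \cup [\min J, j)) \cap M$ can be combined with the stored value $P[i,j]$ in $\Oh(1)$ time to produce $\min\{A[i,m] + A[m,j] : m \in (i\dd j) \cap M\}$; Property~\ref{it:constant-comp} then computes $A[i,j]$ in $\Oh(1)$ further time. Processing pairs in order of increasing $j - i$ honours all intra-quadrant dependencies, so the base case runs in $\Oh(|I| \cdot |J|) = \Oh((2^{-d} n)^2)$ time.

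Combining these ingredients with the preceding lemma, which bounds each level-$(d{+}1)$ $\update$ by $\Oh((2^{-d} n)^{\bar\omega})$, I obtain the recurrence $T_\complete(d) \le 4\, T_\complete(d+1) + \Oh((2^{-d} n)^{\bar\omega})$. Since $\bar\omega > 2$ for any $\alpha > 0$ (as even writing the $n \times n$ output requires $\Omega(n^2)$ operations), the ratio $4/2^{\bar\omega} < 1$, so the recurrence unrolls to a convergent geometric series whose sum, together with the base-case contribution $\Oh((2^{-d} n)^2) \subseteq \Oh((2^{-d} n)^{\bar\omega})$, yields $T_\complete(d) = \Oh((2^{-d} n)^{\bar\omega})$, as claimed.
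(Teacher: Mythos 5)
Your proposal is correct and follows essentially the same eight-step Valiant-style recursion as the paper: the same four $\complete$ calls on the quadrants in the same dependency-respecting order, the same four interleaved $\update$ calls at level $d{+}1$, the same $\Oh((2^{-d}n)^2)$ base case, and the same geometric-series unrolling using $\bar\omega>2$. You actually spell out the decomposition of the required $m$-range and the placement of the $\update$ calls in more detail than the paper, which just asserts that the pre-conditions hold.
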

\begin{proof}
If $d=\floor{\log_2 W}$, then the naive implementation costs $\Oh(|I|\cdot  |J|\cdot |M\cap (I \cup J)|)=\Oh((2^{-d}n)^2)$ time because $|M\cap (I \cup J)|=\Oh(1)$.
Otherwise, we proceed recursively based on the decompositions $I=I_{\colindex}\cup I_r$,
$J = J_{\colindex}\cup J_r$ into level-$(d+1)$ intervals:
\begin{enumerate}
	\item $\complete(I_r,J_\colindex)$,
	\item $\update(I_\colindex,I_r,J_\colindex)$,
	\item $\complete(I_\colindex,J_\colindex)$,
	\item $\update(I_r,J_\colindex,J_r)$,
	\item $\complete(I_r,J_r)$,
	\item $\update(I_\colindex,I_r,J_r)$,
	\item $\update(I_\colindex,J_\colindex,J_r)$,
	\item $\complete(I_\colindex,J_r)$.
\end{enumerate}
This sequence of steps is at the heart of Valiant's recursion~\cite{Valiant75,Okhotin14},
and it is easy to check that the pre-conditions for each application of  $\complete$
and $\update$ are satisfied. Since $I\times J = (I_r\times J_\colindex)\cup(I_r\times J_r)\cup (I_\colindex\times J_\colindex)\cup(I_\colindex\times J_r)$, the post-conditions of the four applications of $\complete$ guarantees
that the resulting entries $A[i,j]$ for $i\in I$ and $j\in J$ are determined correctly.
As for the running time, it suffices to observe that there are four applications of $\update$
and four recursive applications of $\complete$, all on level-$(d+1)$ intervals.
Since $\bar{\omega}>2$, the runtime is $\Oh((2^{-d}n)^{\bar{\omega}})$.
\end{proof}

\begin{lemma}\label{lem:compute}
	The $\compute$ function for a level-$d$ interval, $0\le d \le \floor{\log_2 W}$, costs $\Oh((2^{-d}n)^{\bar{\omega}})$ time, where $\bar{\omega}$ is the exponent of \cref{thm:rectangular-min-plus} for $\alpha = \log_n |M|$.
\end{lemma}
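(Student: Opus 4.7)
The plan is to implement $\compute$ recursively using Valiant's scheme, terminating at the deepest decomposition level. For a level-$d$ interval $I$ with $d < \floor{\log_2 W}$, let $I = I_\colindex \cup I_r$ be its decomposition into two level-$(d+1)$ intervals. Since $I_\colindex$ and $I_r$ are adjacent, we have $M\cap(\max I_\colindex \dd \min I_r) = \emptyset$, so after computing the ``diagonal blocks'' recursively we can feed the ``off-diagonal block'' to $\complete$ with $P$ initialized to $+\infty$. Concretely, $\compute(I)$ consists of:
\begin{enumerate}
    \item $\compute(I_\colindex)$,
    \item $\compute(I_r)$,
    \item initialize $P[i,j] := +\infty$ for $i \in I_\colindex$ and $j \in I_r$,
    \item $\complete(I_\colindex, I_r)$.
\end{enumerate}
The preconditions of $\complete$ are satisfied since the entries $A[i,j]$ with $\min I_\colindex \le i \le j \le \max I_r$ that are not in $I_\colindex\times I_r$ have been computed in Steps~1--2, and $P$ correctly reflects the (empty) minimum over $M\cap (\max I_\colindex \dd \min I_r)$.

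For the base case $d = \floor{\log_2 W}$, we have $|I\cap M| = \Oh(1)$, and we compute the entries $A[i,j]$ for $i,j\in I$ with $i\le j$ in order of increasing $j-i$. By Property~\ref{it:constant-comp}, each entry can be obtained in $\Oh(1)$ time once the relevant earlier entries are available and once the value $\min\{A[i,m]+A[m,j] : m\in (i\dd j)\cap M\}$ is known; since $|I\cap M|=\Oh(1)$, this minimum is computable in $\Oh(1)$ time as well. Using $|I|\le 2^{2-d}n$, the base case thus takes $\Oh((2^{-d}n)^2)$ time.

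It remains to solve the recurrence. Let $T(d)$ denote the worst-case cost of $\compute$ on a level-$d$ interval. Initializing $P$ in Step~3 costs $\Oh((2^{-d}n)^2)$, and by the previous lemma, Step~4 costs $\Oh((2^{-(d+1)}n)^{\bar{\omega}}) = \Oh((2^{-d}n)^{\bar{\omega}})$. Therefore, for $d < \floor{\log_2 W}$,
\[
    T(d) \le 2\,T(d+1) + \Oh\bigl((2^{-d}n)^{\bar{\omega}}\bigr),
\]
with $T(\floor{\log_2 W}) = \Oh((2^{-\floor{\log_2 W}}n)^2)$. Because $\bar{\omega} > 2$, the geometric series telescopes: dividing through by $(2^{-d}n)^{\bar{\omega}}$ yields a linear recurrence with multiplier $2^{1-\bar{\omega}} < 1$, so $T(d) = \Oh((2^{-d}n)^{\bar{\omega}})$, as claimed. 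The main (routine) obstacle is just verifying that the preconditions of $\complete$ remain valid at Step~4 — everything else is bookkeeping on Valiant's recursion.
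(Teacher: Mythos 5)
Your proof is correct and follows essentially the same approach as the paper: recurse on the two level-$(d+1)$ subintervals, then call $\complete(I_\colindex, I_r)$, with the naive $\Oh(|I|^2\cdot|M\cap I|)$ computation at the base level. Your added detail (explicitly initializing $P$ to $+\infty$, justified by $M\cap(\max I_\colindex\dd\min I_r)=\emptyset$, and unfolding the recurrence) fills in steps the paper leaves implicit, and your invocation of $\bar\omega>2$ is a valid sufficient condition, even though the recursion itself only needs $\bar\omega>1$ as the paper notes.
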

\begin{proof}
	If $d=\floor{\log_2 W}$, then the naive implementation costs $\Oh(|I|\cdot |I|\cdot |M\cap I|)=\Oh((2^{-d}n)^2)$ time because $|M\cap I|=\Oh(1)$.
	Otherwise, we proceed recursively based on the decomposition $I=I_{\colindex}\cup I_r$ into
	level-$(d+1)$ intervals:
	\begin{enumerate}
		\item $\compute(I_\colindex)$
		\item $\compute(I_r)$
		\item $\complete(I_\colindex,I_r)$
	\end{enumerate}
	Again, it is easy to see that the pre-condition for $\complete$ is satisfied.
	Since each $(i,j)\in I^2$ with $i\le j$ belongs to $I_{\colindex}^2\cup I_r^2\cup (I_{\colindex}\times I_r)$,
	the post-conditions of the two applications of $\compute$ and the one application of $\complete$
	guarantee that all entries $A[i,j]$ with $(i,j)\in I^2$ and $i\le j$ are computed correctly.
	As for the running time, it suffices to observe that there is one application of $\update$
and two recursive applications of $\complete$, all on level-$(d+1)$ intervals.
Since $\bar{\omega}>1$, the runtime is $\Oh((2^{-d}n)^{\bar{\omega}})$.
\end{proof}

In particular, \cref{lem:compute} applied to the initial recursive call $\compute([0\dd n])$
yields the following.%
\begin{corollary}\label{cor:valiant}
The values $A[i,j]$ for $0\le i \le j \le n$ can be computed in $\Ohtilde(n^{\bar{\omega}})$ time, where $\bar{\omega}$ is the exponent of \cref{thm:rectangular-min-plus} for $\alpha = \log_n |M|$.
\end{corollary}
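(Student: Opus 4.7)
The plan is to instantiate \cref{lem:compute} at level $d=0$ on the unique level-$0$ interval $[0\dd n]$ and then absorb the polylogarithmic overhead of the min-plus subroutine into the $\Ohtilde$ notation.

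First, I would verify that the preconditions of \cref{lem:compute} hold at the top level. The interval $[0\dd n]$ is, by the definition of the hierarchical decomposition, the unique level-$0$ interval, and $d=0$ lies in the valid range $[0\dd \floor{\log_2 W}]$ because $M$ is assumed non-empty, so $W \ge |M| \ge 1$ and hence $\floor{\log_2 W} \ge 0$. By the post-condition of the $\compute$ procedure, the call $\compute([0\dd n])$ produces $A[i,j]$ for every pair $(i,j)$ with $0 \le i \le j \le n$, which is precisely the set of entries requested by the corollary. The correctness of the individual entries relies on Properties \ref{it:constant-comp}--\ref{it:BD}: the former guarantees that each $A[i,j]$ can be recovered once the auxiliary $P[i,j]$ value has been finalized (which happens after $\complete$ terminates), while the latter was needed to invoke \cref{thm:rectangular-min-plus} inside $\update$.

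Next, I would read off the running-time bound. Plugging $d=0$ into \cref{lem:compute} gives $\Oh((2^{0}\cdot n)^{\bar\omega}) = \Oh(n^{\bar\omega})$ time. The reason the corollary writes $\Ohtilde$ rather than $\Oh$ is that the bound in \cref{thm:rectangular-min-plus} is itself stated in $\Ohtilde$ form, since the underlying Boolean matrix multiplication bounds $M(\cdot,\cdot,\cdot)$ are typically used up to polylogarithmic factors. These polylogarithmic factors propagate through the recursion, but since each call of $\compute$ (resp.\ $\complete$) triggers only $\Oh(1)$ subcalls at the next level and $\bar\omega > 2$, the geometric series over the $\Oh(\log W) = \Oh(\log n)$ levels of the decomposition is dominated by the top level. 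Thus all polylog factors collapse into the $\Ohtilde$ notation, giving a total of $\Ohtilde(n^{\bar\omega})$.

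The only substantive bookkeeping step is to confirm that the recursion tree has depth $\Oh(\log n)$ and that the work at each level of the tree sums geometrically to the work of the top level. This reduces to observing that the weight-based decomposition halves $w(I)$ at every step (up to an additive $+1$, which is harmless once $w(I) \ge 1$), and that the $\bar\omega$-th power of the interval size then shrinks by a factor of $2^{\bar\omega} > 4$ per level. I do not expect any conceptual obstacle beyond this routine verification, as the substantive matrix-multiplication content has already been isolated in \cref{thm:rectangular-min-plus} and the Valiant-style recursive structure has already been validated in \cref{lem:compute}.
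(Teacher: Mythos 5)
Your proposal matches the paper's own derivation exactly: the corollary is precisely Lemma~\ref{lem:compute} specialized to the top-level call $\compute([0\dd n])$ with $d=0$, yielding $\Ohtilde(n^{\bar\omega})$. Your additional remarks about $\floor{\log_2 W}\ge 0$, the polylogarithmic overhead from \cref{thm:rectangular-min-plus}, and the geometric decay of work across levels are correct sanity checks, but they are already folded into Lemma~\ref{lem:compute} and not separately re-argued by the paper.
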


\section{Faster Procedure for Clusters}\label{sec:main}
In this section, we apply \cref{cor:valiant} to develop a faster implementation of the subroutine for processing clusters. For this, we shall prove that the dynamic-programming procedure following from \cref{clm:enough}
satisfies the two conditions of Valiant recursion.
Formally, for a fixed cluster $C$, this procedure operates on an array $A[E(C),E(C)]$
resulting in $A[i,j]=\min\{\D[i,j],k+1\}$ for all $i,j\in E(C)$ with $j\ge i$.
Inspecting the recursion in \cref{clm:enough}, it is easy to see that this dynamic program satisfies condition~\ref{it:dp} of Valiant recursion.
As for condition~\ref{it:bd}, we need to prove that, for every $m\in M(C)$, the subsequent entries $A[i,m]$ (with $i\in E(C)$ and $i\le m$) and $A[m,j]$ (with $j\ge m$ and $j\in E(C)$) differ by at most one.
By symmetry, we focus without loss of generality on the former case.
Let $A[i,m]$ and $A[i',m]$ be such subsequent entries with $i<i'\le m$.
If $i'=i+1$, then $|\D[i,m]-\D[i+1,m]|\le 1$ because $S[i\dd m)$ and $S[i+1\dd m)$ are at distance~$1$.
Otherwise, we must have $i=a_q+2k$ and $i'=d_q-2k$ for some $q\in [1\dd r]$.
If $\D[i,m] < k$ or $\D[i',m]< k$, then \cref{claim:trapezoids} applied for a trapezoid $(a_q+1,a_q+2k-1,d_q-2k-1,d_q-1)$
implies $m\in (a_q+1\dd d_q-1)$, which contradicts $m\in M(C)$.
Consequently, $A[i,m],A[i',m]\in \{k,k+1\}$.
In both case $|A[i,m]-A[i',m]|\le 1$ holds as claimed.

Thus, \cref{cor:valiant} implies that $C$ can be processed in time $\max\{|E(C)|,|M(C)|^2\}^{\bar{\omega}}$,
where $\bar{\omega}$ is the exponent of \cref{thm:rectangular-min-plus} for $\alpha = \frac12$.
Due to $\sum_{C} |E(C)|=\Oh(k^2)$ and $\sum_{C} |M(C)|=\Oh(k)$,
the total cost of processing all clusters is $\Ohtilde(k^{2\bar{\omega}})$,
which is $\Oh(k^{4.544184})$ with high probability or $\Oh(k^{4.853059})$ deterministically.
The remaining components of the algorithm from \cref{sec:alg} cost $\Oh(n+k^3)$ time,
so this yields \cref{thm:main}.

\section{\boldmath$(\min,+)$-product of Rectangular Matrices with Partially Bounded Difference}\label{sec:min-plus-rect}
In this section we consider the problem of computing $(\min,+)$ product of rectangular matrices which are partially bounded difference.
We first describe a deterministic algorithm that follows the ideas of Bringmman et al.~\cite{BGSW19}.
We first describe a randomized algorithm follows the ideas of \cite{BGSW19} and then we describe how to derandomize this algorithm to get a deterministic algorithm.
Then, in \cref{sec:random_minplus} we describe a randomized algorithm which based on the recent result of Chi et al.~\cite{CDXZ22}, which improve upon the results of Bringmann et al.~\cite{BGSW19}, but does not introduce any derandomization method.

Bringmann et al.~\cite{BGSW19} designed algorithms for $(\min,+)$-product of two square matrices $A$ and $B$ of size $n\times n$ for two special cases.
In the first case, both input matrices are fully-BD matrices, and the runtime for this case is $\Ohtilde(n^{2.8244})$ randomized or $\Ohtilde(n^{2.8603})$ deterministic.
In the second case, only one of the input matrices is assumed to be either column-BD or row-BD, and the randomized runtime for the second case is $\Ohtilde(n^{2.9217})$ time.

In our setting, the two matrices are rectangular, the matrix $A$ is column-BD, and the matrix $B$ is row-BD.\@
One could extend the solution of Bringmann et al.~\cite{BGSW19} for the second case to work on rectangular matrices.
However, in our case, we take advantage of the additional structure of $B$ in order to further reduce the runtime.
Our algorithm is very similar to the algorithms of Bringmann et al.~\cite{BGSW19}, and we follow their structure.
We emphasize that our exposition is given here only for the sake of completeness; all the main ideas come from~\cite{BGSW19}.

The (first) algorithm of Bringmann et al.~\cite{BGSW19} considers the matrices $A,B$ and $C=A\star B$ as composed of blocks of size $\Delta\times\Delta$, for some parameter $\Delta$ which is a small polynomial in $n$, and $n$ is an integer multiple of $\Delta$.
The algorithm considers the right-bottom corner of each block as the \emph{representative of the block}.
The intuition is that, due to the bounded difference, the value of any two positions inside the same block differs by $O(\Delta)$, and so in order to compute a rough estimation of the output it suffices to consider only the representatives of the blocks.

Due to the following lemma, in our case where $A$ is column-BD and $B$ is row-BD, the matrix $C=A\star B$ is fully-BD.

\begin{restatable}{lemma}{obsCisbounded}\label{obs:c-bd}
	Let  $\alpha>0$.
	Let $A$ and $B$ be two integer matrices of sizes $n\times n^\alpha$  and $n^\alpha\times n$, respectively, such that $A$ is column-BD and $B$ is row-BD.\@
	Then, $C:=A\star B$ is fully-BD.	
\end{restatable}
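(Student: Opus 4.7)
The plan is to argue directly from the definition of $C[i,j] = \min_{\ell \in [1\dd n^\alpha]}\bigl(A[i,\ell] + B[\ell,j]\bigr)$, using a minimizer-swap argument. This is the standard way to transfer bounded-difference properties through a $(\min,+)$ product: a minimizer for one entry gives an immediate (sub-optimal) upper bound on a neighboring entry, which, combined with the $\pm 1$ slack from the BD hypothesis on $A$ or $B$, yields the desired inequality.

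Concretely, to establish that $C$ is column-BD, I would fix indices $i \in [1\dd n-1]$ and $j \in [1\dd n]$, let $\ell^{*} \in [1\dd n^{\alpha}]$ achieve the minimum in $C[i,j] = A[i,\ell^{*}] + B[\ell^{*},j]$, and then estimate
\[
C[i+1,j] \le A[i+1,\ell^{*}] + B[\ell^{*},j] \le A[i,\ell^{*}] + 1 + B[\ell^{*},j] = C[i,j] + 1,
\]
where the second inequality uses that $A$ is column-BD. Reversing the roles of $i$ and $i+1$ (using a minimizer for $C[i+1,j]$) gives $C[i,j] \le C[i+1,j] + 1$, so $|C[i,j] - C[i+1,j]| \le 1$.

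The row-BD property is proved by the symmetric argument: for fixed $i$ and consecutive $j, j+1$, take a minimizer $\ell^{*}$ for $C[i,j]$ and use the row-BD property of $B$ to write $B[\ell^{*},j+1] \le B[\ell^{*},j] + 1$, yielding $C[i,j+1] \le C[i,j] + 1$, and symmetrically in the other direction. Combining the two properties gives that $C$ is fully-BD.

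There is no real obstacle here; the only thing to be careful about is that the BD hypothesis is used on the matrix whose index is being varied ($A$ for the column direction, $B$ for the row direction), while the other factor contributes the same term on both sides of the inequality and thus cancels out. The fact that the shared inner dimension $n^{\alpha}$ can be arbitrary is irrelevant to the argument — the minimizer-swap only requires that the same $\ell^{*}$ is a feasible choice for the neighboring entry, which is automatic.
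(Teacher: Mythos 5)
Your proof is correct and takes essentially the same approach as the paper: both arguments use the definition of the $(\min,+)$ product together with the bounded-difference hypothesis on $A$ (resp.\ $B$) to bound consecutive entries of $C$. The paper pushes the $\pm 1$ slack inside the minimum over $\ell$, while you extract a minimizer and reuse it as a feasible index for the neighboring entry; these are the same argument phrased two ways.
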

\begin{proof}
	Let $(i,j)\in [n-1]\times[n]$. By definition, $C[i,j]=\min_{\colindex\in[n^\alpha]}\{A[i,\colindex]+B[\colindex,j]\}$.
	By the assumption on $A$, for any $(i,j)\in [n-1]\times[n^\alpha]$ we have $A[i,j]-1 \le A[i+1,j] \le A[i,j]+1$.
	Hence,
	\begin{align*}
	C[i+1,j]&=\min_{\colindex\in[n^\alpha]}\{A[i+1,\colindex]+B[\colindex,j]\}
	\le\min_{\colindex\in[n^\alpha]}\{(A[i,\colindex]+1)+B[\colindex,j]\}
	\\&=\min_{\colindex\in[n^\alpha]}\{A[i,\colindex]+B[\colindex,j]\}+1 =C[i,j]+1.
	\end{align*}
	Similarly,
	\begin{align*}
	C[i+1,j]&=\min_{\colindex\in[n^\alpha]}\{A[i+1,\colindex]+B[\colindex,j]\}
	\ge\min_{\colindex\in[n^\alpha]}\{(A[i,\colindex]-1)+B[\colindex,j]\}
	\\&=\min_{\colindex\in[n^\alpha]}\{A[i,\colindex]+B[\colindex,j]\}-1 =C[i,j]-1.
	\end{align*}
	Thus, $|C[i,j]-C[i+1,j]|\le 1$.
	The proof that $|C[i,j]-C[i,j+1]|\le 1$ is symmetric (based on the assumption on~$B$).
\end{proof}

The algorithm of Bringmann et al.~\cite{BGSW19} is composed of three phases. We first describe the randomized algorithm, and in \cref{sec:derandomize-min-plus} we describe how to derandomize the algorithm.

\subsection{Phase 1: Finding {\boldmath$\tilde C$} -- an approximation of {\boldmath$C$} with {\boldmath$O(\Delta)$} additive error.}
Let $\Delta$ be a positive integer parameter (to be fixed later) and assume that $n$ is divisible by $\Delta$ (we can always assume without loss of generality that both $n$ and $\Delta$ are powers of 2).
We partition $[n]$ into intervals of length $\Delta$, that are defined as follows:
for each $1\le i'\le n$ which is divisible by $\Delta$, let $I(i')=\{i\in[n]\mid i'-\Delta<i\le i'\}$.
For every $i',j'$ which are divisible by $\Delta$, the algorithm first computes $\tilde C[i',j']=C[i',j']$ exactly (naively) and then, for any $(i,j)\in I(i')\times I(j')$ the algorithm sets $\tilde C[i,j]\leftarrow C[i',j']$.
We emphasize that the definition of $\tilde C[i',j']$ in our algorithm differs from the definition in~\cite{BGSW19},  since in our case $A$ is not guaranteed to be row-BD and  $B$ is not guaranteed to be column-BD.\@
Moreover,  for pairs $i',j'$ which are divisible by $\Delta$ we define $\tilde C[i',j']$ to be exactly $C[i',j']$, while in~\cite{BGSW19}, $\tilde C[i',j']$ is an approximation of $C[i',j']$.
Nevertheless, the following lemma is similar in flavor to~\cite[{Lemma 2.1}]{BGSW19}.
\begin{lemma}\label{lem:approx-tilde-C}
	For $i',j'$ which are divisible by $\Delta$ and $(i,j)\in I(i')\times I(j')$ we have
	$|\tilde C[i,j]-C[i,j]|=|C[i',j']-C[i,j]|\le 2\Delta$.
\end{lemma}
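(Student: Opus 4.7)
The plan is very short because the lemma is essentially an immediate consequence of \cref{obs:c-bd} together with the definition of $\tilde C$. I would first dispose of the equality $|\tilde C[i,j]-C[i,j]|=|C[i',j']-C[i,j]|$, which is simply the definition: for $(i,j)\in I(i')\times I(j')$ the algorithm sets $\tilde C[i,j]\leftarrow C[i',j']$, so the two quantities are literally the same. Nothing more has to be said here.

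The substantive part is the bound $|C[i',j']-C[i,j]|\le 2\Delta$. For this I would invoke \cref{obs:c-bd} to conclude that $C=A\star B$ is fully-BD, i.e.\ $|C[p,q]-C[p+1,q]|\le 1$ and $|C[p,q]-C[p,q+1]|\le 1$ for all valid $(p,q)$. Then I would walk from $(i,j)$ to $(i',j')$ inside the matrix $C$ one coordinate at a time: first increment the row coordinate from $i$ to $i'$ (which takes $i'-i$ steps), and then increment the column coordinate from $j$ to $j'$ (which takes $j'-j$ steps). Applying the triangle inequality together with fully-BD along this path gives
\[
|C[i',j']-C[i,j]|\;\le\;|C[i',j']-C[i,j']|+|C[i,j']-C[i,j]|\;\le\;(i'-i)+(j'-j).
\]
Finally, since $(i,j)\in I(i')\times I(j')$, we have $i'-\Delta<i\le i'$ and $j'-\Delta<j\le j'$, hence $i'-i\le \Delta-1$ and $j'-j\le \Delta-1$, yielding $|C[i',j']-C[i,j]|\le 2(\Delta-1)\le 2\Delta$, as claimed.

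There is no real obstacle: the only potential pitfall is forgetting that without \cref{obs:c-bd} one would only know that $A$ is column-BD and $B$ is row-BD, which does not directly yield bounded differences for $C$ in both directions. Once we have that $C$ itself is fully-BD, the remainder is a standard triangle-inequality argument over a path of length at most $2\Delta$ in the matrix.
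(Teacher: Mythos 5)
Your proof is correct and follows the same approach as the paper: it invokes the fact that $C$ is fully-BD (from \cref{obs:c-bd}) and then bounds $|C[i',j']-C[i,j]|$ by walking along a path of fewer than $2\Delta$ unit steps, first vertically and then horizontally, using the triangle inequality. No gaps.
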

\begin{proof}
	Notice that $\tilde C[i,j]=C[i',j']$ by definition, so we only need to prove that $|C[i',j']-C[i,j]|\le 2\Delta$.
	
	Notice that one can move from $C[i,j]$ to $C[i',j]$ in $i'-i< \Delta$ vertical steps.
	Moreover, one can move further from $C[i',j]$ to $C[i',j']$ in $j'-j< \Delta$ horizontal steps.
	Thus, one can move from $C[i,j]$ to $C[i',j']$ in less than $2\Delta$ steps.
	By \cref{obs:c-bd} each step increases the difference between $C[i,j]$ and the current entry by at most 1, hence the claim follows.
\end{proof}
\subsubsection*{Runtime.}
Notice that computing $\tilde C[i',j']$ for all $i',j'$ which are divisible by $\Delta$ costs $\Ohtilde((\frac n\Delta)^2n^\alpha)=\Ohtilde(\frac {n^{2+\alpha}}{\Delta^2})$ time.
In addition, the process of filling the rest of the matrix $\tilde C$ costs $O(n^2)$ time. Hence, the total running time of the first phase is $\Ohtilde(\frac {n^{2+\alpha}}{\Delta^2}+n^2)$ time.

\subsection{Phase 2: Reduction to $(\min,+)$-product with small entries.}

The following lemma is useful for the second phase.
\begin{lemma}[{cf.~\cite[{Lemma 1}]{AGM97}}]\label{lem:bounded-matrix-multiplication}
	Let $R\in\mathbb N$.
	Let $A$ and $B$ be $n\times s$ and $s\times n$ matrices, respectively, with entries in $\{-R,-R+1,\dots,R\}\cup\{\infty\}$.
	Then $A\star B$ can be computed in  $\Ohtilde(R\cdot M(n,s,n))$ time.
\end{lemma}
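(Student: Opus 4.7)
The plan is to apply the classical Alon--Galil--Margalit embedding, which reduces a bounded $(\min,+)$-product to an ordinary integer matrix product. First, translate to non-negative entries: set $\hat A[i,\ell]:=A[i,\ell]+R$ whenever $A[i,\ell]<\infty$ (and $\infty$ otherwise), and similarly define $\hat B$. Since $(\hat A\star\hat B)[i,j]=(A\star B)[i,j]+2R$ on finite entries, it suffices to compute $\hat A\star\hat B$, whose finite values lie in $\{0,\dots,4R\}$.

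Next, fix $y:=s+1$ and define integer matrices $A'[i,\ell]:=y^{2R-\hat A[i,\ell]}$ (or $0$ when $\hat A[i,\ell]=\infty$), and analogously $B'$. The ordinary integer product satisfies
\[
(A'B')[i,j]\;=\;\sum_{\ell\,:\,\hat A[i,\ell],\hat B[\ell,j]<\infty}y^{4R-(\hat A[i,\ell]+\hat B[\ell,j])}.
\]
Interpreting this sum in base $y$, each digit position $4R-t$ receives at most $s<y$ contributions, so no carries propagate between positions. Hence the most significant base-$y$ digit of $(A'B')[i,j]$ sits exactly at position $4R-(\hat A\star\hat B)[i,j]$, and the value $(\hat A\star\hat B)[i,j]$ can be read off from $\lfloor\log_y(A'B')[i,j]\rfloor$ (with $(A'B')[i,j]=0$ signalling an $\infty$ entry). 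Recovering all $n^2$ answers takes $\Ohtilde(n^2 R)$ postprocessing time.

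It remains to bound the cost of computing $A'B'$. Entries of $A'$, $B'$, and $A'B'$ all have bit-length $O(R\log s)$, since $(A'B')[i,j]\le s\cdot y^{4R}$. Running a fast matrix-multiplication algorithm modulo a prime $p>s\cdot y^{4R}$ uses $O(M(n,s,n))$ operations in $\mathbb F_p$, each costing $\Ohtilde(\log p)=\Ohtilde(R)$ bit operations via fast integer arithmetic, for a total of $\Ohtilde(R\cdot M(n,s,n))$. The main point to verify carefully is that the same exponent underlying Boolean matrix multiplication transfers to matrix multiplication over $\mathbb F_p$ with only polylog overhead in the element size; this is a standard fact about algebraic matrix-multiplication algorithms, since they manipulate ring elements obliviously without branching on their values. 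With that in hand, the stated bound follows.
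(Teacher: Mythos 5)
Your proof is correct and reconstructs exactly the classical Alon--Galil--Margalit exponentiation embedding that the paper cites for this lemma without reproducing the argument; the adaptation to a rectangular inner dimension (choosing base $y=s+1$ rather than $n+1$) and to $\infty$ entries (mapped to zero, with a zero product signalling $\infty$) is handled correctly, and the observation that fast matrix multiplication transfers from Boolean to $\mathbb F_p$ with polylog overhead in the element size is the standard fact justifying the $\Ohtilde(R\cdot M(n,s,n))$ bound.
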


The second phase has $\rho$ iterations ($\rho$ will be fixed later as a small polynomial in $n$). In each iteration the algorithm chooses $i^r$ and $j^r$ independently and uniformly at random from $[n]$.
Let $A^r$ be the matrix where $A^r[i,\colindex]=A[i,\colindex]+B[\colindex,j^r]- \tilde C[i,j^r]$ and let $B^r$ be the matrix where $B^r[\colindex,j]=B[\colindex,j]-B[\colindex,j^r]+\tilde C[i^r,j^r]-\tilde C[i^r,j]$.
Let $C^r=A^r\star B^r$.
For each $i,j$, we have $C[i,j]=C^r[i,j]+\tilde C[i,j^r]-\tilde C[i^r,j^r]+\tilde C[i^r,j]$, and therefore one can compute $C$ from $C^r$.
However, computing $C^r$ exactly seems to be inefficient, and so~\cite{BGSW19} proved that a partial computation of $C^r$ can be done efficiently enough and still result in useful information regarding $C$.

\subsubsection*{Triples.}
By definition of $(\min,+)$-product, for every $(i,j)$ we have $C[i,j]=\min_{\colindex\in[n^\alpha]}\{A[i,\colindex]+B[\colindex,j]\}$.
Thus, we associate with each entry $C[i,j]$ the triples $(i,1,j),(i,2,j),\dots,(i,n^\alpha,j)$ which refer to the $n^\alpha$ different indices considered in the $(\min,+)$-product definition.
The value of $C[i,j]$ is $A[i,\colindex]+B[\colindex,j]$ for at least one value $\colindex\in[n^\alpha]$; for such values $\colindex$ we say that $(i,\colindex,j)$ is \emph{relevant}.

Let $\hat A^r$ ($\hat B^r$) be the matrix $A^r$ ($B^r$) after replacing each entry whose absolute value is larger than $48\Delta$  with $\infty$, thereby effectively allowing a $(\min,+)$-product to ignore those entries.
Let $P^r = \hat A^r\star \hat B^r$ and define $\hat C^r[i,j]=P^r[i,j]+\tilde C[i,j^r]-\tilde C[i^r,j^r]+\tilde C[i^r,j]$.
Notice that $\hat C^r[i,j]\ge C[i,j]$.
For triples $(i,\colindex,j)$ where both $|A^r[i,\colindex]|\le 48\Delta$ and $|B^r[\colindex,j]|\le 48\Delta$ we have $\hat C^r[i,j]\le A[i,\colindex]+B[\colindex,j]$.
We call such triples \emph{covered} by the $r$th sample.
Finally, $\hat C$ is defined as the entry-wise minimum of $\hat C^r$ for $r=1,2,\dots,\rho$.
One  can compute each $P^r$, and eventually $\hat C$, using \cref{lem:bounded-matrix-multiplication}.
However, we employ a different approach which introduces additional speed-up.

\subsubsection*{Bounding the number of relevant and not covered triples.}
Our goal is to show that with high probability, most of the relevant triples $(i,\colindex,j)$ are covered by at least one of the $\rho$ samples.
Notice that for any pair $(i,j)$ we have $\hat C[i,j]= C[i,j]$ if and only if there exist $\colindex\in[n^\alpha]$ and $1\le r\le \rho$ such that $(i,\colindex,j)$ is relevant and is covered by the $r$th sample.
Since the relevant triples are unknown, we use a weaker definition of relevance and also a weaker definition of being covered:

\begin{definition}[{cf.~\cite[{Definition 2.2}]{BGSW19}}]
	We call a triple $(i,\colindex,j) \in [n]\times [n^\alpha]\times[n]$
	\begin{itemize}
		\item strongly relevant if $A[i,\colindex]+B[\colindex,j]=C[i,j]$;
		\item weakly relevant if $|A[i,\colindex]+B[\colindex,j]-C[i,j]|\le 16\Delta$;\footnote{\vphantom{$2^{2^2}$}We actually can use smaller constants, since by \cref{lem:approx-tilde-C} the additive error is at most $2\Delta$, while in~\cite{BGSW19} the additive error in $\tilde C$ is bounded by $4\Delta$. However, we prefer to use the original constants not to modify the claims and the proofs much.}
		\item strongly $r$-uncovered if for all $1\le r'\le r$ we have $|A^{r'}[i,\colindex]|>48\Delta$ or $|B^{r'}[\colindex,j]|>48\Delta$;
		\item weakly $r$-uncovered if for all $1\le r'\le r$ we have $|A^{r'}[i,\colindex]|>40\Delta$ or $|B^{r'}[\colindex,j]|>40\Delta$.
		
	\end{itemize}
	A triple is called strongly (weakly) uncovered if it is strongly (weakly) $\rho$-uncovered. Finally, a triple is strongly (weakly) $r$-covered if it is not strongly (weakly) $r$-uncovered.
\end{definition}

The following lemma is a straightforward generalization of~\cite[{Lemma 3.1}]{BGSW19} to our case.

\begin{restatable}[{Based on~\cite[{Lemma 3.1}]{BGSW19}}]{lemma}{lemboundonbadtriples}\label{lem:bound-on-bad-triples}
	With high probability for any $1\le r\le\rho$, the number of weakly relevant triples that are also weakly $r$-uncovered  is $\tilde O(n^{1.5+\alpha}+{n^{2+\alpha}/r^{1/3}})$.
\end{restatable}
\begin{proof}[sketch]
	The proof of~\cite[{Lemma 3.1}]{BGSW19} shows that for every $\colindex$, with high probability, the number of weakly relevant triples $(i,\colindex,j)$ that are also weakly $r$-uncovered is $\Ohtilde(n^{1.5}+n^2/r^{1/3})$.
	Since in our case there are only $n^\alpha$ possible values of $\colindex$, then, with high probability, the number of weakly relevant triples that are also weakly $r$-uncovered (among all possible values of $\colindex$) is $\Ohtilde(n^\alpha(n^{1.5}+ n^2/r^{1/3}))=\Ohtilde(n^{1.5+\alpha}+n^{2+\alpha}/r^{1/3})$.
\end{proof}

\subsubsection*{Efficient method for covering relevant triples.}
Instead of computing the $(\min,+)$-product $P^r=\hat A^r\star \hat B^r$ in the $r$th round, we follow the method of~\cite{BGSW19} and use their additional insight to reduce the runtime of the algorithm.
The proof of~\cite[{Lemma 3.1}]{BGSW19} has the property that for any round $r$, the triples that are counted as covered by the $r$th round are exactly triples of the form  $(i,\colindex,j)$ such that the triple $(i^r,\colindex,j^r)$  is weakly relevant and weakly $(r-1)$-uncovered.
Therefore, in order to cover the same triples at round $r$, after the algorithm chooses $i^r$ and $j^r$, the algorithm computes $L_r$ which is the set of $\colindex$ values such that $(i^r,\colindex,j^r)$ is both weakly relevant and weakly $(r-1)$-uncovered.
The algorithm removes all the columns $\colindex\notin L_r$ from $\hat A^r$ and all the rows $\colindex\notin L_r$ from $\hat B^r$.
Let $s_r=|L_r|$ be the number of surviving $\colindex$'s in round $r$.
The algorithm computes $\hat P^r=\hat A^r\star \hat B^r$ using \cref{lem:bounded-matrix-multiplication}, in  $O(\Delta\cdot M(n,s_r,n))$ time.

\subsubsection*{Runtime.}
For implementing the filtering in the $r$th iteration the algorithm checks $n^\alpha$ triples. For each triple the test takes $O(r)$ time (to check that the triple is $(r-1)$-uncovered).
Thus, the total time of the filtering takes in total $O(n^\alpha\cdot\rho^2)\le O(\rho n^{1+\alpha})$ time.
The runtime of computing the matrix products is $\sum_{r=1}^\rho \Ohtilde(\Delta M(n,s_r,n))$ time, which due to the analysis made in~\cite[{Lemma 3.2}]{BGSW19} is bounded by $\Ohtilde(\rho\Delta\cdot M(n,\frac{n^\alpha}{\rho^{1/3}},n))$ time with high probability.
Thus, the total running time of the second phase is $\tilde O(\rho\Delta\cdot M(n,\frac{n^\alpha}{\rho^{1/3}},n) + \rho\cdot n^{1+\alpha})$ time with high probability.

\subsection{Phase 3: complete the relevant uncovered triples.}
In the third phase the goal is to find all the triples $(i,\colindex,j)$ that are both strongly relevant and strongly uncovered, and use each such triple to update $\hat C$ by setting $\hat C[i,j]=\min\{\hat C[i,j],A[i,\colindex]+B[\colindex,j]\}$.
In order to identify all of these triples, we introduce the notions of approximately relevant and approximately covered.
\begin{definition}[{cf.~\cite[Definiton 2.6]{BGSW19}}]\label{def:approx-relevant-uncovered}
	We call a triple $(i,\colindex,j)\in I(i')\times[n^\alpha]\times I(j')$
	\begin{itemize}
		\item approximately relevant if $|A[i',\colindex]+B[\colindex,j']-\tilde C[i',j']|\le 8\Delta$, and
		\item approximately $r$-uncovered if for all $1\le r'\le r$ we have either $|A^{r'}[i',\colindex]|>44\Delta$ or $|B^{r'}[\colindex,j']|>44\Delta$ (or both).
	\end{itemize}
	A triple is called approximately uncovered if it is approximately $\rho$-uncovered.
	A triple which is both approximately relevant and approximately uncovered is called \emph{interesting}.
\end{definition}

A triple $(i',\colindex,j')$ where $i'$ and $j'$ are divisible by $\Delta$  and $\colindex\in[n^\alpha]$ is called a \emph{representative triple}.
A useful property of these definitions is that for a representative triple  $(i',\colindex,j')$, all the triples $(i,\colindex,j)\in I(i')\times\{\colindex\}\times I(j')$ (and, in particular, the representative triple itself) are similar in the sense that either all of these triples are interesting, or all of these triples are not interesting.
Thus, in order to find all the interesting triples, it suffices to test for every representative triple  if the triple is interesting.

The following lemma describes the relationships between the definitions of strongly, approximately and weakly relevant (uncovered) triples.
\begin{lemma}[{\cite[{Lemma 2.7}]{BGSW19}}]\label{lem:strong-approx-weak-rekation}
	Any strongly relevant triple is also approximately relevant.
	Any approximately relevant triple is also weakly relevant.
	The same holds also with ``relevant'' replaced by ``$r$-uncovered''.
\end{lemma}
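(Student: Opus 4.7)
The plan is to prove all four implications by direct chains of triangle inequalities, exploiting three tight numerical slacks: the gap $48-44-40$ in the uncovered thresholds, the gap $16-8-0$ in the relevant thresholds, and the fact that $i'-i$ and $j'-j$ are each strictly less than $\Delta$.

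For the "relevant" chain, the plan is to control the discrepancy between the representative pair $(i',j')$ and the actual pair $(i,j)$ via four quantities: $|A[i,\ell]-A[i',\ell]| < \Delta$ (from $A$ being column-BD), $|B[\ell,j]-B[\ell,j']| < \Delta$ (from $B$ being row-BD), and $|\tilde C[i',j']-C[i,j]| \le 2\Delta$. The last inequality comes from \cref{lem:approx-tilde-C} together with the fact that $\tilde C[i',j']=C[i',j']$ because $i',j'$ are divisible by $\Delta$. Applying the triangle inequality, strong relevance $A[i,\ell]+B[\ell,j]=C[i,j]$ yields $|A[i',\ell]+B[\ell,j']-\tilde C[i',j']|\le 4\Delta \le 8\Delta$, and approximate relevance with slack $8\Delta$ yields $|A[i,\ell]+B[\ell,j]-C[i,j]| \le 8\Delta+4\Delta = 12\Delta \le 16\Delta$. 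The numerical slack in \cref{def:approx-relevant-uncovered} is comfortable.

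For the "$r$-uncovered" chain the key observation I plan to exploit is that the correction terms $\tilde C[i,j^r]$ and $\tilde C[i^r,j]$ appearing in the definitions of $A^r$ and $B^r$ are \emph{constant along each block}: since $\tilde C$ is block-constant and $i' \in I(i')$, we have $\tilde C[i',j^r] = \tilde C[i,j^r]$, and similarly $\tilde C[i^r,j'] = \tilde C[i^r,j]$. Consequently $|A^{r'}[i,\ell]-A^{r'}[i',\ell]| = |A[i,\ell]-A[i',\ell]| < \Delta$ and $|B^{r'}[\ell,j]-B^{r'}[\ell,j']| < \Delta$, for every $r'$. Therefore, for each $r'\le r$, if the strong condition $|A^{r'}[i,\ell]|>48\Delta$ holds then $|A^{r'}[i',\ell]| > 47\Delta > 44\Delta$, and similarly for $B^{r'}$, giving approximate $r$-uncoveredness. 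Repeating the same $\Delta$-slip in the opposite direction turns the $44\Delta$ threshold into $43\Delta > 40\Delta$, giving weak $r$-uncoveredness.

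I do not foresee any real obstacle here: the lemma is just a numerical verification that the thresholds $48$, $44$, $40$ (and $0$, $8$, $16$) were chosen generously enough to absorb the two $\Delta$-sized slips incurred when moving between a triple and its block representative, and the $2\Delta$ slip between $\tilde C$ and $C$. The only thing to be careful about is to notice that the correction terms in $A^r$ and $B^r$ vanish across a block, which is what justifies the ``free'' transition between $(i,j)$ and $(i',j')$ in the uncovered inequalities.
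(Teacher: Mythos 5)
Your proof is correct. The paper itself does not re-prove this lemma; it cites it directly as \cite[Lemma~2.7]{BGSW19}, so there is no in-paper proof to compare against, and your argument is the standard one for such a statement. Your verification is sound: for the ``relevant'' chain you correctly use that $A$ is column-BD to bound $|A[i,\ell]-A[i',\ell]|<\Delta$, that $B$ is row-BD to bound $|B[\ell,j]-B[\ell,j']|<\Delta$, and \cref{lem:approx-tilde-C} to bound $|\tilde C[i',j']-C[i,j]|\le 2\Delta$, giving a per-step slip of at most $4\Delta$ against a threshold gap of $8\Delta$. For the ``$r$-uncovered'' chain, the key observation that the correction terms $\tilde C[\cdot,j^{r'}]$ and $\tilde C[i^{r'},\cdot]$ are constant along a block (so $A^{r'}[i,\ell]-A^{r'}[i',\ell]=A[i,\ell]-A[i',\ell]$ and likewise for $B^{r'}$) is exactly right, yielding a per-step slip below $\Delta$ against a threshold gap of $4\Delta$. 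Both directions of both chains then follow by the triangle inequality as you describe. The one point worth making explicit when writing this up is the block-constancy observation, since it is what allows the uncovered-threshold argument to use a $\Delta$ slip rather than incurring an additional $2\Delta$ from a change in $\tilde C$; you have stated this clearly.
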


To test whether a given (representative) triple is interesting, the algorithm tests if the triple is both approximately relevant and approximately uncovered.
The test whether a given triple is approximately relevant takes $O(1)$ time (see \cref{def:approx-relevant-uncovered}).
In order to find all approximately uncovered representative triples, the algorithm employs rectangular Boolean matrix multiplication as follows.
For each $\colindex\in[n^\alpha]$ let $U^{\colindex}$ be a matrix of size $\frac n\Delta \times \rho$ such that $U^{\colindex}[x,r]=1$ if and only if $|A^r[x\Delta,\colindex]|\le 44\Delta$ (see \cref{def:approx-relevant-uncovered}).
Similarly, let $V^{\colindex}$ be a matrix of size $\rho\times \frac n\Delta$ such that $V^{\colindex}[r,y]=1$ if and only if $|B^r[\colindex,y\Delta]|\le 44\Delta$.
Let $Z^{\colindex}$ be the Boolean matrix product $U^{\colindex}\cdot V^{\colindex}$.
Then, by \cref{def:approx-relevant-uncovered}, for every $i'=x\Delta$, $j'=y\Delta$ and $\colindex\in n^\alpha$, we have $(i',\colindex,j')$ is approximately uncovered if and only if $Z^{\colindex}[\frac{i'}\Delta,\frac{j'}\Delta]=1$.
Thus, the algorithm uses $n^\alpha$ rectangular Boolean matrix multiplications, in order to find all the interesting representative triples.
At the last step of the third phase, for any $(i',\colindex,j')$ which is known to be interesting from the previous step, the algorithm iterates over all $(i,\colindex,j)\in I(i')\times\{\colindex\}\times I(j')$ and updates $\hat C[i,j]\gets\min\{\hat C[i,j],A[i,\colindex]+B[\colindex,j]\}$.

\subsubsection*{Runtime.} 
At the first step of the phase, the algorithm finds all the interesting representative triples.
The time cost of finding all the approximately relevant representative triples is $O(\frac {n^{2+\alpha}}{\Delta^2})$.
The time cost of finding all the approximately uncovered representative triples is $O(n^\alpha\cdot M(\frac n\Delta,\rho,\frac n\Delta))$ time.
Then, in the second step of the phase the algorithm iterates over all the interesting triples, in time linear in the number of such triples which by \cref{lem:bound-on-bad-triples,lem:strong-approx-weak-rekation} is, with high probability, $\tilde O(n^{1.5+\alpha}+{n^{2+\alpha}/\rho^{1/3}})$.
In total, the third phase takes $O(n^\alpha\cdot M(\frac n\Delta,\rho,\frac n\Delta)+n^{1.5+\alpha}+n^{2+\alpha}/\rho^{1/3})$ time with high probability.

\subsection{Derandomization.}\label{sec:derandomize-min-plus}

The only randomized part of the algorithm described above is the process of choosing $i^r$ and $j^r$ from $[n]$.
The derandomization technique introduced by Bringmann et al.~\cite[Section 3.4]{BGSW19} is based on the following observations.

\subsubsection*{Choosing only indices which are divisible by {\boldmath$\Delta$}.}
Bringmann et al.~\cite[Section 3.4]{BGSW19} show that the algorithm has guarantees similar to~\cref{lem:bound-on-bad-triples} even if $i^r$ and $j^r$ are chosen uniformly at random from the set of indices in $[n]$ which are divisible by $\Delta$.
The following lemma states that after this change in the algorithm the number of interesting triples that will be covered in the third phase is still $\tilde O(n^{1.5+\alpha}+{n^{2+\alpha}/\rho^{1/3}})$.
Hence, the running time of the third phase will not increase.
\begin{lemma}[{\cite[{Section 3.4}]{BGSW19}}]\label{lem:derandomize-bound-on-bad-triples}
	With high probability, for any $1\le r\le\rho$ the number of approximately relevant, approximately $r$-uncovered triples is $\tilde O(n^{1.5+\alpha}+{n^{2+\alpha}/r^{1/3}})$.
\end{lemma}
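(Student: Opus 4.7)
The plan is to mirror the proof of~\cref{lem:bound-on-bad-triples} at the scale of the $\Delta\times\Delta$ blocks, exploiting the fact that the approximately-defined notions are block-invariant. The crucial structural observation is that, for a triple $(i,\colindex,j)\in I(i')\times\{\colindex\}\times I(j')$, both approximate relevance and approximate $r$-uncoveredness depend only on the representative $(i',\colindex,j')$, since the defining inequalities evaluate $A$, $B$, $\tilde C$, $A^{r'}$, and $B^{r'}$ only at the block endpoints $i'$ and $j'$. Hence the set of approximately relevant, approximately $r$-uncovered triples partitions cleanly into $\Delta\times\Delta$ blocks, one per ``bad'' representative triple, and bounding its cardinality reduces to bounding the number of bad representative triples.

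Fixing a column $\colindex\in[n^\alpha]$, I would pass to the $N\times N$ representative grid with $N=n/\Delta$. Since the samples $(i^{r'},j^{r'})$ are drawn uniformly from pairs of multiples of $\Delta$, they project to uniform samples on this grid, which is exactly the setting to which the coverage argument of Bringmann et al.~\cite[Lemma 3.1]{BGSW19} should be applied (after rescaling the bounded-difference parameter to reflect that adjacent representatives in $A$ and $B$ differ by at most $\Delta$). The output is a high-probability bound on the number of approximately relevant, approximately $r$-uncovered representative triples for fixed $\colindex$. A union bound over the $n^\alpha$ columns, together with the at most $\Delta^2$ actual triples lying inside each bad block, then aggregates to the claimed $\tilde O(n^{1.5+\alpha}+n^{2+\alpha}/r^{1/3})$ total.

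The main technical hurdle is ensuring that the coverage argument transfers faithfully from the $n\times n$ setting to the rescaled $N\times N$ representative setting. This is precisely where the deliberate slack between \emph{approximately} and \emph{weakly} uncovered (thresholds $44\Delta$ versus $40\Delta$) and between \emph{approximately} and \emph{weakly} relevant ($8\Delta$ versus $16\Delta$) becomes essential. Because $A$ is column-BD and $B$ is row-BD, the values $A^{r'}[i,\colindex]$ and $B^{r'}[\colindex,j]$ vary by at most $\Delta$ as $(i,j)$ ranges over the block $I(i')\times I(j')$; consequently, an approximately relevant, approximately $r$-uncovered representative yields weakly relevant, weakly $r$-uncovered actual triples throughout its entire block. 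This containment lets the probabilistic coverage/deletion analysis of~\cite[Lemma 3.1]{BGSW19} be reused verbatim at the representative scale, so that the proof of~\cref{lem:bound-on-bad-triples} goes through essentially unchanged under the restricted sampling distribution from multiples of $\Delta$.
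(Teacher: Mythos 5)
The paper gives no proof of this lemma; it simply cites \cite[Section 3.4]{BGSW19}, so the only ``paper's approach'' available for comparison is the one-line remark that restricting $(i^r,j^r)$ to multiples of $\Delta$ preserves the guarantee of \cref{lem:bound-on-bad-triples}. Your reconstruction captures the right structural facts --- block-invariance of the \emph{approximately}-defined notions (since \cref{def:approx-relevant-uncovered} evaluates everything at the representative $(i',\colindex,j')$), the role of the deliberate threshold slack ($48\Delta > 44\Delta > 40\Delta$, $8\Delta < 16\Delta$), and the containment chain of \cref{lem:strong-approx-weak-rekation}. These are exactly the observations that make the derandomization go through.

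However, the specific reduction you propose --- pass to the $N\times N$ representative grid with $N=n/\Delta$, apply \cref{lem:bound-on-bad-triples} at that scale, and then multiply by $\Delta^2$ per block and $n^\alpha$ per column --- does not quite yield the stated bound. Applying the per-column BGSW19 bound on an $N\times N$ grid gives $\Ohtilde(N^{1.5}+N^2/r^{1/3})$ bad representative pairs, and the blow-up by $\Delta^2$ actual triples per block produces $\Ohtilde(\Delta^2 N^{1.5}+\Delta^2 N^2/r^{1/3}) = \Ohtilde(\Delta^{1/2}n^{1.5}+n^{2}/r^{1/3})$ per column. The second term scales cleanly, but the first acquires a spurious $\Delta^{1/2}$ factor, so your aggregation gives $\Ohtilde(\Delta^{1/2}n^{1.5+\alpha}+n^{2+\alpha}/r^{1/3})$ rather than $\Ohtilde(n^{1.5+\alpha}+n^{2+\alpha}/r^{1/3})$. (This slack happens to be harmless to \cref{thm:rectangular-min-plus} because the $n^{1.5+\alpha}$ term is dominated in the final optimization, but it does not prove the lemma as stated.)

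The issue is that you coarse-grained both the sample space \emph{and} the triple space. The derandomization in \cite{BGSW19} coarse-grains only the sample space: it keeps the count over the original triples $[n]\times[n^\alpha]\times[n]$ and argues that for any fixed triple, the set of samples $(i^r,j^r)$ that cover it is --- up to the $4\Delta$ threshold slack between the weak and approximate notions --- a union of entire $\Delta\times\Delta$ blocks (because $A^r[i,\colindex]$ and $B^r[\colindex,j]$ shift by at most $\Delta$ as $(i^r,j^r)$ ranges over a block, while $\tilde C$ is block-constant). Hence the per-round hitting probability under the restricted uniform distribution on the $(n/\Delta)^2$ representatives is at least the hitting probability under uniform sampling on $[n]^2$, and the entire concentration argument behind \cref{lem:bound-on-bad-triples} carries over \emph{verbatim at the original scale}, preserving the exact $n^{1.5+\alpha}$ floor. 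Combined with \cref{lem:strong-approx-weak-rekation} (approximately bad $\Rightarrow$ weakly bad), this gives the stated bound directly, without ever needing to count representative triples and rescale.
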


\subsubsection*{Deterministic greedy choice.}
Let $r$ be an iteration of the algorithm and let $\colindex\in[n^\alpha]$.
Bringmann et al.~\cite{BGSW19} showed that one could count for every pair $(i',j')$, where $i'$ and $j'$ are divisible by $\Delta$, the number of triples $(i,\colindex,j)$ which are approximately relevant and approximately $(r-1)$-uncovered but will be covered by choosing $(i^r,j^r)\leftarrow (i',j')$.
The time cost of computing the counts for all pairs $(i',j')$ is $O((\frac n\Delta)^\omega)$ time for fixed $\colindex$ and~$r$.
Thus, a greedy algorithm that computes in the $r$th iteration the pair $(i', j')$ which maximizes the number of triples that will be covered, suffices to guarantee results which are at least as good as the expectation of the random process.
The additional time in total for the derandomization of the second phase is $O(\rho n^\alpha (\frac n\Delta)^\omega)$, and the guarantees of \cref{lem:derandomize-bound-on-bad-triples} hold.

\subsubsection{Total Runtime}
The additional runtime introduced by the derandomization process is exactly $O((\frac n\Delta)^\omega\cdot n^\alpha)= O(\frac {n^{\omega+\alpha}}{\Delta^\omega})$.
Summing up the running time of all the phases, the total runtime of the algorithm is:
\begin{align*}
\Ohtilde\left(\frac {n^{2+\alpha}}{\Delta^2}+n^2+\rho\Delta M(n,\frac {n^\alpha}{\rho^{1/3}},n)+n^\alpha\cdot M(\frac n\Delta,\rho,\frac n\Delta)+n^{1.5+\alpha}+\frac{n^{2+\alpha}}{\rho^{1/3}} + \frac {n^{\omega+\alpha}}{\Delta^\omega} \right)
\\=
\Ohtilde\left(\rho\Delta M(n,\frac {n^\alpha}{\rho^{1/3}},n)+n^\alpha\cdot M(\frac n\Delta,\rho,\frac n\Delta)+\frac{n^{2+\alpha}}{\rho^{1/3}} +n^{1.5+\alpha} + \frac {n^{\omega+\alpha}}{\Delta^\omega} \right).
\end{align*}

We use the results of Le Gall and Urrutia~\cite{LGU18}.
We set $\Delta= n^\delta$ and $\rho=n^s$ for some $s,\delta$ positive integers.
We will focus on $\alpha=\frac12$.
Hence, the running time is:

\begin{align*}
\Ohtilde\left(n^{\delta+s} M(n,n^{1/2-s/3},n)+n^{1/2}\cdot M(n,n^{\frac s{1-\delta}},n)^{1-\delta} +n^{5/2-s/3} + n^{1/2 + (1-\delta)\omega} \right)
.
\end{align*}

We optimize the running time by choosing $\delta=0.1881$ and $s=0.220425$.
By interpolating the results of~\cite{LGU18} the total running time of the algorithm is $O(n^{2.426524})$.

\begin{corollary}\label{cor:deterministic-min-plus-product}
	Let  $0<\alpha\le \frac12$.
	Let $A$ and $B$ be  integer matrices of sizes $n\times n^\alpha$  and $n^\alpha\times n$, respectively, such that $A$ is column-BD and $B$ is row-BD.\@
	There exists a deterministic algorithm that computes $A\star B$ whose runtime is $\Ohtilde(n^{2.426524})$.
\end{corollary}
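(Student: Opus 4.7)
The proposal is to combine the three-phase algorithm (with the derandomization described in \cref{sec:derandomize-min-plus}) and then perform a careful parameter optimization using known rectangular matrix multiplication bounds. Concretely, I would instantiate the algorithm with $\alpha = \tfrac12$, $\Delta = n^\delta$, and $\rho = n^s$, where $\delta$ and $s$ are small positive constants to be chosen. Plugging these choices into the runtime expression
\[
\Ohtilde\!\left(\rho\Delta\, M\!\left(n,\tfrac{n^{\alpha}}{\rho^{1/3}},n\right)+n^{\alpha}\cdot M\!\left(\tfrac{n}{\Delta},\rho,\tfrac{n}{\Delta}\right)+\tfrac{n^{2+\alpha}}{\rho^{1/3}}+n^{1.5+\alpha}+\tfrac{n^{\omega+\alpha}}{\Delta^{\omega}}\right)
\]
and expanding the second term by using a standard decomposition $M(\tfrac{n}{\Delta},\rho,\tfrac{n}{\Delta}) = M(n,n^{s/(1-\delta)},n)^{1-\delta}$ (after scaling dimensions by $n^{1-\delta}$) yields the target expression already written just above the corollary.

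Next, I would bound each $M(n,n^a,n)$ term using the deterministic rectangular matrix multiplication results of Le Gall and Urrutia \cite{LGU18}, which give the exponents $\omega(1,a,1)$ as a function of $a$. Since $\omega(1,a,1)$ is a piecewise-linear (in fact, convex) function of $a$ that is tabulated at a finite number of values, the only actual work is interpolating linearly between the tabulated data points for the two values $a = \tfrac12 - s/3$ and $a = s/(1-\delta)$ that arise in our expression, and using $\omega < 2.372864$ \cite{AVW21} (or whichever current bound) for the cubic-matrix-multiplication term $n^{\omega+\alpha}/\Delta^{\omega}$.

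Having written the five exponents of $n$ as explicit piecewise-linear functions of $(\delta,s)$, I would then set up and solve the min--max optimization $\min_{\delta,s}\max\{\text{five exponents}\}$. At optimum, several of the five exponents should coincide (the usual tight-constraints phenomenon for such Lagrangian balances), and this gives a small linear system in $(\delta,s)$ that is solvable by inspection within each linearity region of $\omega(1,\cdot,1)$. The claim is that the minimum is attained at $\delta = 0.1881$ and $s = 0.220425$, yielding a common exponent strictly below $2.426524$.

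The main obstacle is entirely numerical: verifying the optimum and ensuring the candidate $(\delta,s)$ falls inside a linearity region of the tabulated $\omega(1,\cdot,1)$ data, so that the interpolation formula one plugs in is the correct one. Once the correct piece of $\omega(1,\cdot,1)$ is identified at each of the two arguments, the rest is arithmetic. No new conceptual ingredient is required beyond what was already established in the preceding three phases and the derandomization subsection.
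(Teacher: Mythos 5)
Your proposal matches the paper's proof essentially step for step: same parametrization $\Delta = n^\delta$, $\rho = n^s$ at $\alpha = \tfrac12$, same rescaling identity $M(n/\Delta,\rho,n/\Delta) = M(n,n^{s/(1-\delta)},n)^{1-\delta}$, same reliance on Le Gall--Urrutia rectangular exponents, and the same optimizing values $\delta=0.1881$, $s=0.220425$. The paper states the optimization outcome without exhibiting the numerics, exactly as you describe doing it, so there is no meaningful deviation.
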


\subsection{Randomized Algorithm}\label{sec:random_minplus}
In a recent paper, Chi et al.~\cite{CDXZ22} introduced an efficient randomized algorithm for computing the  $(\min,+)$-product of two square matrices $A$ and $B$ of size $n\times n$, for a broad range of structured matrices.
In particular, their algorithm is applicable to our case, except for the fact that our matrices are \emph{rectangular} and they described their algorithm for \emph{squared} matrices.
Notice that one can partition each rectangular matrix of size $n\times n^\alpha$ or $n^\alpha\times n$ into $n^{1-\alpha}$ sub-matrices of size $n^\alpha\times n^\alpha$, and so computing the $(\min,+)$-product of the two original matrices is reduced to $n^{2(1-\alpha)}$ computations of $(\min,+)$-products of squared matrices whose size is $n^\alpha\times n^\alpha$.
Such an approach leads to an algorithm that costs $O(n^{2(1-\alpha)}n^{\alpha(3+\omega)/2})= O(n^{2-\alpha/2+\alpha\omega/2})= O(n^{2+(\omega-1)\alpha/2})$ time.
Nevertheless, in this section we analyze the runtime obtained by adjusting the algorithm of Chi et al.~\cite{CDXZ22} to our case in order to obtain a more efficient runtime.

We emphasize that the only change to the algorithm of~\cite{CDXZ22} is replacing computations on squared matrices with computations on rectangular matrices.
Thus, correctness follows immediately from \cite{CDXZ22}, and our discussion focuses only on the runtime cost of the algorithm for our case. 
For complete definitions and details, the reader is referred to Chi et al.~\cite{CDXZ22}.

\subsubsection*{Overview.}
In our setting, the matrix $A$ is column-BD and the matrix $B$ is row-BD.
As~\cite[{Section 2}]{CDXZ22} observes, one can reduce the computation of a min-plus product of two such matrices to the computation of a min-plus product of two matrices where the values in the second matrix are row monotone (the values never decrease when scanning a row from left to right), 
and all of the values in both matrices are integers from the range $[1..cn]$ for some constant $c\in\mathbb N$.
Thus, for the rest of this section we assume that $B$ is row monotone and all the entries in $A$ and $B$ are integers from the range $[1..cn]$.

Notice that \cite{CDXZ22} contains two algorithm, the first one is a basic algorithm which provides the main ideas of the paper, and the second uses recursion to speedup the basic algorithm.
Here, we refer directly to the recursive algorithm.
Our algorithm follows the algorithm of \cite{CDXZ22} by replacing every part of the algorithm that computes a product of square matrices by an algorithm that computes a product of rectangular matrices.

The algorithm of Chi et al.~\cite{CDXZ22} is composed of two parts that both depend on a random prime $p\in[40n^\beta..80n^\beta]$ (for some parameter $\beta$ to be determined\footnote{Chi et al.~\cite{CDXZ22} uses '$\alpha$' to denote this parameter. We use '$\beta$' instead since '$\alpha$' is already taken.}).
In the first part the algorithm computes a matrix $C^*$ such that for every $i,j\in[1..n]$ we have $C^*_{i,j}=\floor{C_{i,j}/p}$ for some prime number $p$.
In the second part, the algorithm computes for every $i,j\in[1..n]$ the value $C_{i,j}\modulo p$.
Using the fact that $C_{i,j}=\floor{C_{i,j}/p}+C_{i,j}\modulo p$, after the execution the two parts, the algorithm completes the computation of $C$.
We describe the two parts below.

\subsubsection*{First part.} 
The first part of the algorithm computes the matrix $C^*$ defined as  $C^*_{i,j}=\floor{C_{i,j}/p}$. 
Notice that it is straightforward to compute $C^*$ in $O(n^3)$ time for squared matrices.
The algorithm of \cite{CDXZ22} applies \emph{segment tree}s \cite{BCKO08} to speedup the computation by a factor of $\tilde\Theta(n^\beta)$\footnote{The notation $\tilde\Theta(\cdot)$ suppresses polylogarithmic factors.} and so the runtime becomes $\Ohtilde(n^{3-\beta})$  (see \cite[{Lemma 3.2}]{CDXZ22}).
In our case, the straightforward algorithm for computing $C^*$ runs in $O(n^{2+\alpha})$ time, and the application of segments trees in our case provides the same speedup, so the computation of $C^*$ in our case becomes $\Ohtilde(n^{2+\alpha-\beta})$.

\subsubsection*{Second part.}
For $h=\ceil{\log p}+1$, the algorithm iteratively computes a sequence of matrices $C^{(h)},C^{(h-1)},\dots,C^{(0)}$ where for $\ell=h,h-1,\dots,0$ the entry $C^{(\ell)}_{i,j}$ is an approximation of $\floor{(C_{i,j}\modulo p)/2^\ell}$.
One important property of the sequence of matrices is that $C^{(0)}_{i,j} = C_{i,j}\modulo p$. 
Each matrix $C^{(\ell)}$ is iteratively computed using $C^{(\ell+1)}$.

In order to describe the computation of the different matrices $C^{(\ell)}$, Chi et al.~\cite{CDXZ22} define segments (unrelated to \emph{segment trees}) as follows.
Let $\ell\in [0..h]$, $B^*_{i,j}=\floor{B_{i,j}/p}$ and $B_{i,j}^{(\ell)}=\floor{(B_{i,j}\modulo p)/2^{\ell}}$.
A \emph{segment} $(i,k,[j_0,j_1])$ with respect to $\ell$ is a triple such that for every $j_0\le j\le j_1$ we have  $B_{k,j}^{(\ell)}=B_{k,j_0}^{(\ell)}$, $B_{k,j}^{*}=B_{k,j_0}^{*}$, $C_{k,j}^{(\ell)}=C_{k,j_0}^{(\ell)}$, and  $C_{k,j}^{*}=C_{k,j_0}^{*}$.
Chi et al.~\cite{CDXZ22} showed that for every $\ell\in[0..h]$ there are $O(n^3/2^\ell)$ (maximal) segments.
In our case, similar analysis shows that for every $\ell\in[0..h]$ there are $O(n^{2+\alpha}/2^\ell)$ segments.
Throughout the iterations, for every $\ell$ the algorithm of Chi et al.~\cite{CDXZ22} computes $O(1)$ sets of segments $\tlb$ (for $b\in[-10..10]$), used for computing the matrix $C^{(\ell)}$.
An important property proved by \cite[{Lemma 3.7}]{CDXZ22} is that for every $\ell$ and $b$ we have $\mathbb E[|\tlb|]=O(n^{3-\beta})$.
In particular, \cite[{Lemma 3.7}]{CDXZ22} proved that the probability of  every segment in level $\ell$ to be in $\tlb$ is $\Ohtilde(2^{\ell}/n^\beta)$.
In our settings this property still holds.
Therefore, in our case $\mathbb E[|\tlb|]=O(\frac {n^{2+\alpha}}{2^\ell})\cdot \Ohtilde(\frac{2^\ell}{n^\beta})=O(n^{2+\alpha-\beta})$.

Each iteration of the algorithm of Chi et al.~\cite{CDXZ22} is composed of three phases.
In the first phase, the algorithm computes a product of two polynomial matrices, by using fast matrix multiplication, where the entries of each matrix are polynomials of bounded degree $O(p)=O(n^\beta)$.
The runtime of this phase is $O(n^{\omega+\beta})$. In our case, the two matrices are of sizes $n\times n^\alpha$ and $n^\alpha\times n$, and the polynomials are also of bounded  degree $O(p)=O(n^\beta)$.
Therefore, by using fast \emph{rectangular} matrix multiplication, the running time of this phase in our case is generalized to $O(n^{n^\beta\cdot M(n,n^\alpha,n)})$ time. 
In the second phase, the algorithm of \cite{CDXZ22} computes  $C^{(\ell)}$, using the result of the first phase and the sets $T_b^{(\ell+1)}$.
The running time of this computation is dominated by two parts.
The first part is reading and manipulating the result matrix of the first phase.
The size of the matrix is $n\times n$ and each entry contains a polynomial of bounded degree $O(n^\beta)$. Thus, this part costs $O(n^{2+\beta})$ time, both in \cite{CDXZ22} and in our case.
The second part is dominated by $\sum_b|\tlb|$, which in \cite{CDXZ22} is $\Ohtilde(n^{3-\beta})$ and in our case is $\Ohtilde(n^{2+\alpha-\beta})$ in expectation.
In the third phase of the $\ell$th iteration, the algorithm computes the sets $\tlb$ using the sets $T_b^{(\ell+1)}$ and the matrix $C^{(\ell)}$. The running time of this phase is $\Ohtilde(\sum_b|T_b^{(\ell+1)}|)$, which in our case is $\Ohtilde(n^{2+\alpha-\beta})$ in expectation.

Thus, the total time cost of the second part of the algorithm of \cite{CDXZ22} in our case is  $\Ohtilde(n^\beta\cdot M(n,n^\alpha,n)+n^{2+\beta}+ n^{2+\alpha-\beta})= \Ohtilde(n^\beta\cdot M(n,n^\alpha,n)+ n^{2+\alpha-\beta})$.
For $\alpha= 1/2$ we optimize the running time by choosing $\beta=0.227908$. By interpolating the results of~\cite{LGU18} the total running time of the algorithm is $\Ohtilde(n^{2.272092})$ in expectation.
We mention that adapting the algorithm to compute $C=A\star B$ in $\Ohtilde(n^{2.272092})$ time with high probability is straightforward.

\begin{corollary}\label{cor:randomized-min-plus-product}
	Let  $0<\alpha\le \frac12$.
	Let $A$ and $B$ be  integer matrices of sizes $n\times n^\alpha$  and $n^\alpha\times n$, respectively, such that $A$ is column-BD and $B$ is row-BD.\@
	Then there exists a randomized algorithm that computes $A\star B$ in time $\Ohtilde(n^{2.272092})$ with high probability.
\end{corollary}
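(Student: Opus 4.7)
The plan is to adapt the recent randomized $(\min,+)$-product algorithm of Chi et al.~\cite{CDXZ22}, which is stated for square matrices, to the rectangular setting where $A$ has $n^\alpha$ columns and $B$ has $n^\alpha$ rows. Correctness is inherited verbatim from~\cite{CDXZ22}; the whole task is a re-analysis of the running time once each $n\times n$ matrix operation on the inner dimension is replaced by its rectangular counterpart. As a preprocessing step, I would first invoke the reduction of~\cite[Section 2]{CDXZ22} to assume that $B$ is row-monotone and that all entries of $A$ and $B$ are integers in $[1\dd cn]$ for some constant $c$; this reduction is dimension-oblivious and preserves the column-BD/row-BD structure.

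Next I would draw a random prime $p$ in $[40n^\beta\dd 80n^\beta]$ for a parameter $\beta$ to be optimized, and then run the two parts of the Chi et al.\ algorithm. In the first part, which computes $C^*_{i,j}=\lfloor C_{i,j}/p\rfloor$, the naive cost is the cost of multiplying $A$ and $B$ entry-wise, namely $O(n^{2+\alpha})$ in our case, and the segment-tree speed-up of~\cite[Lemma 3.2]{CDXZ22} saves a factor of $\tilde\Theta(n^\beta)$, giving $\Ohtilde(n^{2+\alpha-\beta})$. In the second part, which iteratively computes the matrices $C^{(\ell)}$ approximating $\lfloor (C_{i,j}\bmod p)/2^\ell\rfloor$, I would re-examine the three per-level phases: phase~1 is a polynomial matrix multiplication of an $n\times n^\alpha$ matrix with an $n^\alpha\times n$ matrix whose entries are polynomials of degree $O(n^\beta)$, costing $\Ohtilde(n^\beta\cdot M(n,n^\alpha,n))$; phase~2 reads an $n\times n$ polynomial matrix (cost $O(n^{2+\beta})$) and touches $\sum_b |T_b^{(\ell)}|$ segments; phase~3 also costs $\tilde O(\sum_b |T_b^{(\ell)}|)$. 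The expected-segment bound needs the key observation that the segment count in~\cite[Lemma 3.7]{CDXZ22} scales with the inner dimension: there are at most $O(n^{2+\alpha}/2^\ell)$ maximal segments at level $\ell$, and the per-segment survival probability $\tilde O(2^\ell/n^\beta)$ is unchanged, so $\mathbb{E}[|T_b^{(\ell)}|]=\Ohtilde(n^{2+\alpha-\beta})$.

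Summing over all levels (there are $O(\log p)=O(\log n)$ of them) yields a total expected running time of $\Ohtilde\bigl(n^\beta\cdot M(n,n^\alpha,n)+n^{2+\alpha-\beta}\bigr)$. Specializing to $\alpha=\tfrac12$ and plugging in the rectangular matrix-multiplication bounds of Le~Gall and Urrutia~\cite{LGU18}, I would search numerically for the value of $\beta$ that balances the two terms; the choice $\beta=0.227908$ yields $\Ohtilde(n^{2.272092})$. A standard Markov-plus-repetition argument upgrades the expected bound to a high-probability bound, which is the formulation claimed in the corollary. The only genuinely non-mechanical step in this plan is verifying that the segment-count estimate of~\cite[Lemma 3.7]{CDXZ22} carries over with the inner dimension $n^\alpha$ replacing $n$: this requires tracing through the probabilistic argument to confirm that the bound depends on the number of $(i,k,j)$ triples, which scales as $n\cdot n^\alpha\cdot n = n^{2+\alpha}$ rather than $n^3$, while the per-segment probability is governed only by $p$ and $2^\ell$. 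Once this is checked, the rest is bookkeeping.
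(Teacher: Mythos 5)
Your proposal is correct and follows essentially the same route as the paper: adapt the recursive algorithm of Chi et al.\ by replacing square matrix products with rectangular ones of dimensions $n\times n^\alpha$ and $n^\alpha\times n$, trace through the segment-count bound to see that it scales as $n^{2+\alpha}/2^\ell$ rather than $n^3/2^\ell$ (with the per-segment survival probability $\Ohtilde(2^\ell/n^\beta)$ unchanged), and balance $\Ohtilde(n^\beta M(n,n^\alpha,n)+n^{2+\alpha-\beta})$ with $\alpha=\tfrac12$ and $\beta=0.227908$ against the Le~Gall--Urrutia bounds. The paper's exposition is slightly more terse about upgrading from expectation to high probability, but it identifies exactly the same bottleneck steps and arrives at the identical exponent.
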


\cref{thm:rectangular-min-plus} follows from \cref{cor:randomized-min-plus-product,cor:deterministic-min-plus-product}.

\bibliography{bib}

\end{document}